\newif\iftodo\todofalse

\newif\ifecoop\ecooptrue
\newif\iftechreport\techreporttrue

\PassOptionsToPackage{usenames,dvipsnames}{xcolor}
\documentclass[a4paper,numberwithinsect,autoref,thm-restate,british]{lipics-v2021}
\newcommand{\toolName}{\texttt{NEST}\xspace}%

\usepackage{tikz}
\usetikzlibrary{shapes, arrows, positioning, automata}
\usetikzlibrary{decorations.pathmorphing}
\usetikzlibrary{shadows, calc}
\usepackage{wrapfig}
\usepackage{float}
\usepackage{array}
\usepackage{relsize}
\usepackage{colortbl}
\usepackage{enumitem}
\usepackage{listings}
\usepackage{thmtools}
\usepackage{svg}
\usepackage{arydshln}
\usepackage{cleveref}

\ifecoop\else
\newtheorem*{remark}{Remark}
\fi

\newcommand{\tx}[1]{\text{#1}}
\newcommand{\ttt}[1]{\texttt{#1}}
\newcommand{\tsc}[1]{\textsc{#1}}

\ifecoop
\else
\begin{abstract}
This paper introduces \toolName (Network-Enforced Session Types), a runtime verification framework that moves application-level protocol monitoring into the network fabric. Unlike prior work that instruments or wraps application code, we synthesize packet-level monitors that enforce protocols directly in the data plane. We develop algorithms to generate network-level monitors from session types and extend them to handle packet loss and reordering. We implement \toolName in P4 and evaluate it on applications including microservice and network-function models, showing that network-level monitors can enforce realistic non-trivial protocols.%
\end{abstract}
 \fi

\title{\toolName: Network Enforced Session Types\iftechreport\\ (Technical Report)\fi}

\titlerunning{\toolName: Network Enforced Session Types\iftechreport~(Technical Report)\fi}

\author{Jens Kanstrup Larsen}{Technical University of Denmark, Copenhagen, Denmark}{jekla@dtu.dk}{0009-0006-4039-3808}{}
\author{Alceste Scalas}{Technical University of Denmark, Copenhagen, Denmark}{alcsc@dtu.dk}{0000-0002-1153-6164}{}
\author{Guy Amir}{Cornell University, Ithaca, NY, USA}{gda42@cornell.edu}{0000-0002-7951-7795}{}
\author{Jules Jacobs}{Cornell University, Ithaca, NY, USA}{jj758@cornell.edu}{0000-0003-1976-3182}{}
\author{Jana Wagemaker}{Radboud University, Nijmegen, Netherlands}{jana.wagemaker@ru.nl}{0000-0002-8616-3905}{}
\author{Nate Foster}{Cornell University, Ithaca, NY, USA}{jnfoster@cs.cornell.edu}{0000-0002-6557-684X}{}

\authorrunning{J.~K.~Larsen, A.~Scalas, G.~Amir, J.~Jacobs, J.~Wagemaker, and N.~Foster}

\Copyright{Jens Kanstrup Larsen, Alceste Scalas, Guy Amir, Jules Jacobs, Jana Wagemaker, Nate Foster}

\ccsdesc[500]{Software and its engineering~Software verification}
\keywords{Session types, runtime verification, P4, programmable data planes.}

\usepackage{booktabs}   %
\usepackage{multirow}
\usepackage{subcaption} %

\usepackage[nomargin,inline,index,%
  status=draft %
]{fixme} %
\fxusetheme{colorsig}
\FXRegisterAuthor{fxAS}{anfxAS}{AS}%
\FXRegisterAuthor{fxNY}{anfxNY}{NY}%
\FXRegisterAuthor{fxPH}{anfxPH}{PH}

\usepackage{pifont}%

\usepackage{stmaryrd}%

\usepackage{xifthen}%
\usepackage{xspace}%

\usepackage{mathtools}%
\newif\ifdraft%
  \draftfalse%
  \drafttrue%

\newcommand{\ifempty}[3]{%
  \ifthenelse{\isempty{#1}}{#2}{#3}%
}%

\newcommand{\unfold}[1]{%
  {\color{black}\operatorname{unf}\!\left({#1}\right)}}%
\newcommand{\notImplies}{\mathrel{{\kern 0.5em}{\not{\kern -0.5em}\implies}}}%
\newcommand{\notImpliedBy}{\mathrel{{\kern 1em}{\not{\kern -1em}\impliedby}}}%
\newcommand{\subst}[3]{{#1}\!\left\{{#2} \mapsto {#3}\right\}}

\newcommand{\relR}{\mathrel{\mathcal{R}}}
\newcommand{\tbisim}{\mathrel{\overset{\tau}{\sim}}}

\newcommand{\coloncolonequals}{\Coloneqq}%
\newcommand{\bnfsep}{\mathbin{\;\big|\;}}%

\definecolor{ruleColor}{rgb}{0.1, 0.3, 0.1}%
\definecolor{roleColor}{rgb}{0.5, 0.0, 0.0}%
\newcommand{\roleCol}[1]{{\color{roleColor}#1}}%
\newcommand{\roleFmt}[1]{\boldsymbol{\roleCol{\mathtt{#1}}}}%
\newcommand{\roleP}[1][]{%
  \ifempty{#1}{{\color{roleColor}\roleFmt{p}}}{{\color{roleColor}\roleFmt{p}_{#1}}}%
}%
\newcommand{\roleQ}[1][]{%
  \ifempty{#1}{{\color{roleColor}\roleFmt{q}}}{{\color{roleColor}\roleFmt{q}_{#1}}}%
}%
\newcommand{\roleQi}[1][]{%
  \ifempty{#1}{{\color{roleColor}\roleFmt{q}'}}{{\color{roleColor}\roleFmt{q}'_{#1}}}%
}%
\newcommand{\roleR}[1][]{%
  \ifempty{#1}{{\color{roleColor}\roleFmt{r}}}{{\color{roleColor}\roleFmt{r}_{\!#1}}}%
}%
\newcommand{\labFmt}[2][]{\ifempty{#1}{\mathtt{#2}}{\mathtt{#2}\textsubscript{#1}}}%

\newcommand{\tyFmt}[1]{{\color{black}#1}}%
\newcommand{\tyBool}{\tyFmt{\operatorname{Bool}}}%
\newcommand{\tyInt}{\tyFmt{\operatorname{Int}}}%
\newcommand{\tyString}{\tyFmt{\operatorname{Str}}}%

\definecolor{gtColor}{rgb}{0.43, 0.21, 0.1}%
\definecolor{stColor}{rgb}{0, 0, 0.9}%
\newcommand{\stFmt}[1]{{\color{stColor}#1}}%

\newcommand{\inTag}{\stFmt{\&}}%
\newcommand{\outTag}{\stFmt{\oplus}}%
\newcommand{\enqueueTag}{\stFmt{?}}%
\newcommand{\stIn}[3]{\ifempty{#1}{}{\roleFmt{#1}}\stFmt{\&{#2}\ifempty{#3}{}{({#3})}}}%
\newcommand{\stOut}[3]{\ifempty{#1}{}{\roleFmt{#1}}\stFmt{\oplus{#2}\ifempty{#3}{}{({#3})}}}%

\newcommand{\stChoice}[2]{\stFmt{#1}\ifempty{#2}{}{\stFmt{({#2})}}}%

\newcommand{\stSeq}{\mathbin{\!\stFmt{.}\!}}%
\newcommand{\stIntC}{\mathbin{\stFmt{\oplus}}}%
\newcommand{\stIntSum}[3]{\stFmt{{\sum}_{#1}{#2}{\oplus}{#3}}}%
\newcommand{\stExtC}{\mathbin{\stFmt{\&}}}%
\newcommand{\stExtSum}[3]{\stFmt{\roleFmt{#2}{\mathlarger{\mathlarger{\mathlarger{\&}}}}_{#1}{#3}}}%
\newcommand{\stRec}[2]{\stFmt{\mu{#1}.{#2}}}%
\newcommand{\stEnd}{\stFmt{\mathbf{end}}}%

\newcommand{\stLab}[1][]{\stFmt{\ifempty{#1}{\labFmt{m}}{\labFmt{m}_{#1}}}}%
\newcommand{\stLabi}[1][]{\stFmt{\ifempty{#1}{\labFmt{m}'}{\labFmt{m}'_{#1}}}}%
\newcommand{\stLabFmt}[1]{\stFmt{\labFmt{#1}}}%

\newcommand{\stS}[1][]{\stFmt{\ifempty{#1}{S}{S_{#1}}}}%
\newcommand{\stSi}[1][]{\stFmt{\ifempty{#1}{S'}{S'_{#1}}}}%

\newcommand{\stT}[1][]{\stFmt{\ifempty{#1}{T}{T_{#1}}}}%
\newcommand{\stTi}[1][]{\stFmt{\ifempty{#1}{T'}{T'_{#1}}}}%
\newcommand{\stTii}[1][]{\stFmt{\ifempty{#1}{T''}{T''_{#1}}}}%
\newcommand{\stTiii}[1][]{\stFmt{\ifempty{#1}{T'''}{T'''_{#1}}}}%
\newcommand{\stTiiii}[1][]{\stFmt{\ifempty{#1}{T''''}{T''''_{#1}}}}%
\newcommand{\stTiiiii}[1][]{\stFmt{\ifempty{#1}{T'''''}{T'''''_{#1}}}}%

\newcommand{\stRecVarBase}{\stFmt{\mathbf{t}}}%
\newcommand{\stRecVar}[1][]{\stFmt{\ifempty{#1}{\stRecVarBase}{\stRecVarBase_{#1}}}}%
\newcommand{\stRecVari}[1][]{\stFmt{\ifempty{#1}{\stRecVar'}{\stRecVar'_{#1}}}}%
\newcommand{\stQ}[1][]{\stFmt{\ifempty{#1}{\sigma}{\sigma_{#1}}}}%
\newcommand{\stQi}[1][]{\stFmt{\ifempty{#1}{\sigma'}{\sigma'_{#1}}}}%
\newcommand{\stQii}[1][]{\stFmt{\ifempty{#1}{\sigma''}{\sigma''_{#1}}}}%
\newcommand{\stQiii}[1][]{\stFmt{\ifempty{#1}{\sigma'''}{\sigma'''_{#1}}}}%
\newcommand{\stQEmpty}{\stFmt{\epsilon}}%
\newcommand{\stQCons}[2]{\stFmt{{#1}\mathbin{\!\cdot\!}{#2}}}%
\newcommand{\stQConsWide}[2]{\stFmt{{#1}\mathbin{\cdot}{#2}}}%
\newcommand{\stQMsg}[3]{\stFmt{{#1}\mathbin{\!\mathbf{\triangleright}\!}{#2}\ifempty{#3}{}{({#3})}}}%
\newcommand{\stQMsgWide}[3]{\stFmt{{#1}\mathbin{\mathbf{\triangleright}}{#2}\ifempty{#3}{}{({#3})}}}%
\newcommand{\stqT}[1][]{\stFmt{\ifempty{#1}{\ddot{T}}{\ddot{T}_{#1}}}}%
\newcommand{\stqTi}[1][]{\stFmt{\ifempty{#1}{\ddot{T}'}{\ddot{T}'_{#1}}}}%
\newcommand{\stqPair}[2]{\stFmt{{#1}\mathbin{{\langle}\!{|}}{#2}}}%

\newcommand{\transition}[3]{{#1}\mathrel{\xrightarrow{#2}}{#3}}

\newcommand{\transitionS}[3]{{#1}\mathrel{\xrightarrow{#2}{}^{\!\!*}}{#3}}
\newcommand{\noTransition}[2]{{#1}\xrightarrow{#2}\hspace{-10pt}\not}
\newcommand{\stTagNoQueue}{\stFmt{\alpha}}

\newcommand{\stIntTag}[1]{\stFmt{\tau({#1})}}
\newcommand{\stExtTag}{\stFmt{\gamma}}

\newcommand{\stOutTag}[3]{\stFmt{{#1} \outTag {#2} ({#3})}}
\newcommand{\stInTag}[3]{\stFmt{{#1} \inTag {#2} ({#3})}}
\newcommand{\stEnqueueTag}[3]{\stFmt{{#1} ? #2 \ifempty{#3}{}{({#3})}}}
\newcommand{\stSendTag}[3]{\stFmt{{#1} ! #2 \ifempty{#3}{}{({#3})}}}
\newcommand{\dequeueTag}[3]{\stFmt{{#1} \inTag {#2} ({#3})}}

\newcommand{\xmark}{\text{\ding{56}}}%
\definecolor{monColor}{rgb}{0.21, 0.5 , 0.1}%
\newcommand{\monFmt}[1]{{\color{monColor}#1}}%
\newcommand{\monM}[1][]{\monFmt{\ifempty{#1}{M}{M_{#1}}}}
\newcommand{\monMi}[1][]{\monFmt{\ifempty{#1}{M'}{M_{#1}'}}}

\newcommand{\monLabYes}[1]{\monFmt{#1\,\checkmark}}
\newcommand{\monLabNo}[1]{\monFmt{#1\,{\color{red}\xmark}}}
\newcommand{\monLabSilent}[1]{\monFmt{\tau({#1})}}

\newcommand{\monT}[1][]{\monFmt{\ifempty{#1}{\widehat{\stFmt{T}}}{\widehat{\stFmt{T}}_{\stFmt{#1}}}}}
\newcommand{\monTi}[1][]{\monFmt{\ifempty{#1}{\widehat{\stFmt{T'}}}{\widehat{\stFmt{T'}}_{\stFmt{#1}}}}}

\newcommand{\mstPair}[2]{\monFmt{\left\lceil \stFmt{#1} \right\rceil} \monFmt{#2}}

\newcommand{\enc}[1]{\monFmt{\left\llbracket{\stFmt{#1}}\right\rrbracket}}

\newcommand{\cpmPar}[2]{\monFmt{#1 \mathbin{\parallel} #2}}
\newcommand{\cpmRst}[3]{\monFmt{(\nu\,#1,\hspace{-1.5pt}#2)#3}}

\definecolor{netColor}{rgb}{0.6, 0.4, 0.8}%
\newcommand{\netFmt}[1]{{\color{netColor}#1}}%
\newcommand{\netN}[1][]{\netFmt{\ifempty{#1}{N}{N_{#1}}}}
\newcommand{\netNi}[1][]{\netFmt{\ifempty{#1}{N'}{N'_{#1}}}}
\newcommand{\netNii}[1][]{\netFmt{\ifempty{#1}{N''}{N''_{#1}}}}

\newcommand{\netPar}{\mathbin{\netFmt{\parallel}}}
\newcommand{\netBigPar}[2]{\netFmt{\mathop{\parallel}\limits_{#1}{#2}}}
\newcommand{\netPair}[2]{\netFmt{{#1}:{#2}}}
\newcommand{\netLab}{\netFmt{\delta}}
\newcommand{\netIOLab}[2]{\netFmt{{#1}:{#2}}}
\newcommand{\netDequeueLab}[4]{\netFmt{\tau({#1}:\dequeueTag{#2}{#3}{#4})}}
\newcommand{\netCommLab}[4]{\netFmt{\tau({#1} \rightarrow {#2}:\stFmt{{#3}(#4)})}}
\newcommand{\netTau}[1][]{\netFmt{\ifempty{#1}{\tau}{\tau({#1})}}}

\definecolor{mnetColor}{rgb}{0.29, 0.33, 0.13}%
\newcommand{\mnetFmt}[1]{{\color{mnetColor}#1}}%
\newcommand{\mnetN}[1][]{\mnetFmt{\ifempty{#1}{\widehat{N}}{\widehat{N}_{#1}}}}
\newcommand{\mnetNi}[1][]{\mnetFmt{\ifempty{#1}{\widehat{N}'}{\widehat{N}'_{#1}}}}
\newcommand{\mnetNii}[1][]{\mnetFmt{\ifempty{#1}{\widehat{N}''}{\widehat{N}''_{#1}}}}
\newcommand{\mnetNiii}[1][]{\mnetFmt{\ifempty{#1}{\widehat{N}'''}{\widehat{N}'''_{#1}}}}

\newcommand{\mnetPar}{\mathbin{\mnetFmt{\parallel}}}
\newcommand{\mnetBigPar}[2]{\mnetFmt{\mathop{\parallel}\limits_{#1}{#2}}}
\newcommand{\mnetPair}[2]{\mnetFmt{{#1}:{#2}}}
\newcommand{\mnetIOLabYes}[2]{\mnetFmt{{#1}:{#2}}{\color{monColor}\checkmark}}
\newcommand{\mnetIOLabNo}[2]{\mnetFmt{{#1}:{#2}}{\color{red}\xmark}}
\newcommand{\mnetDequeueLab}[4]{\mnetFmt{\tau({#1}:\dequeueTag{#2}{#3}{#4})}}
\newcommand{\mnetCommLab}[4]{\mnetFmt{\tau({#1} \rightarrow {#2}:\stFmt{{#3}(#4)})}}

\newcommand{\mnetLab}{\mnetFmt{\eta}}

\newcommand{\mnetInstr}[1]{\mnetFmt{\operatorname{mon}\!\left({#1}\right)}}

\newcommand{\internalLab}[1]{\tau({#1})} %
\lstdefinelanguage{py}{
    keywords={
        and, as, assert, break, class, continue, def, del, elif, else,
        except, False, finally, for, from, global, if, import, in, is,
        lambda, None, nonlocal, not, or, pass, raise, return, True, try,
        while, with, yield
    },
    morekeywords=[2]{BookReviewSessionHandler, sendMsg, recvMsg},
    morekeywords=[3]{Client, Info, Review, Ratings, Details},
    morekeywords=[4]{request, review_request, ratings_request,
    ratings_response, review_response, detail_request, detail_response, response},
    sensitive=true,
    morecomment=[l]{\#},
    morecomment=[s]{'''}{'''},
    morecomment=[s]{"""}{"""},
    morestring=[b]',
    morestring=[b]",
    stringstyle=\color{olive},
    commentstyle=\color{gray},
    keywordstyle=\color{blue},
    keywordstyle=[2]\color{green!50!black},
    showstringspaces=false,
    basicstyle=\small\ttfamily,
} 

\lstdefinelanguage{scala}{
    keywords={
        abstract, case, catch, class, def, do, else, extends, false, final,
        finally, for, if, implicit, import, lazy, match, new, null, object,
        override, package, private, protected, return, sealed, super, this,
        throw, trait, true, try, type, val, var, while, with, yield
    },
    morekeywords=[2]{=>, Map, Rec, ExtChoice, IntChoice, Var},
    morekeywords=[3]{a_BookClient, a_BookReview, a_BookDetails},
    morekeywords=[4]{l_Request, l_ReviewRequest, l_DetailRequest,
    l_ReviewResponse, l_DetailResponse, l_Response},
    sensitive=true,
    morecomment=[l]{//},
    morecomment=[s]{/*}{*/},
    morestring=[b]",
    morestring=[b]',
    stringstyle=\color{olive},
    commentstyle=\color{gray}, 
    keywordstyle=\color{blue},
    keywordstyle=[2]\color{green!50!black},
    keywordstyle=[3]\color{purple},
    keywordstyle=[4]\color{teal},
    showstringspaces=false,
    basicstyle=\small\ttfamily
}

\lstdefinelanguage{Scribble}{
    morekeywords={
        protocol, role, to, from, function, val
    },
    morekeywords=[2]{
        ->,
    },
    morecomment=[l]{//},
    morecomment=[s]{/*}{*/},
    morestring=[b]",
    morekeywords=[3]{Client, Info, Review, Ratings, Details},
    morekeywords=[4]{request, review_request, ratings_request,
    ratings_response, review_response, detail_request, detail_response,
    response}
}
\lstdefinelanguage{Scribble2}{
    morekeywords={
        protocol, role, to, from, function, val
    },
    morekeywords=[2]{
        ->,
    },
    morecomment=[l]{//},
    morecomment=[s]{/*}{*/},
    morestring=[b]",
    morekeywords=[3]{Client, Info, Review, Ratings, Details}
}

\lstdefinelanguage{P4}{%
    keywords = [1]{control, extern, typedef, action, struct, table},
    keywordstyle = [1]\color{blue},
    keywords = [2]{in, out, inout, bit, key, actions, size, default_action},
    keywordstyle = [2]\color{OliveGreen},
    keywords = [3]{test, ig, simple_header, nested_header, Headers, exact},
    keywordstyle = [3]\color{violet},
    literate=*
    {0}{{{\color{purple}0}}}{1}
    {1}{{{\color{purple}1}}}{1}
    {2}{{{\color{purple}2}}}{1}
    {3}{{{\color{purple}3}}}{1}
    {4}{{{\color{purple}4}}}{1}
    {7}{{{\color{purple}7}}}{1}
    {8}{{{\color{purple}8}}}{1}
    {16}{{{\color{purple}16}}}{1},
    morekeywords={%
      action, algorithm, apply, attributes, bytes,
      calculated_field, control, counter, current, default, direct,
      drop, else, false, field_list, field_list_calculation, fields,
      header, header_type, hit, if, input, instance_count, last, latest,
      layout, length, mask, max_length, metadata, meter, min_width, miss
      output_width, packets, parse_error, parser, parser_exception, payload,
      primitive_action, register, result, return, saturating, select,
      signed, static, switch, true, type, update, valid, verify, width},
    sensitive,
}[keywords,comments]%

\lstdefinestyle{P4Style}{ %
  basicstyle=\small\ttfamily,
  breaklines=true,                 %
  commentstyle=\color{commentsColor}\textit,    %
  deletekeywords={...},            %
  escapeinside={\%*}{*)},          %
  extendedchars=true,              %
  frame=bottomline,	                   	   %
  keepspaces=true,                 %
  keywordstyle=\color{keywordsColor}\bfseries,       %
  language=P4,                 %
  otherkeywords={*,...},           %
  numbers=left,                    %
  numberstyle=\tiny\color{gray},
  stepnumber=1,
  frame=single,
  xleftmargin=2.5em,
  framexleftmargin=2em,
  escapechar=\%
  stringstyle=\color{stringColor}, %
}
\usepackage[textsize=scriptsize,colorinlistoftodos,\iftodo{}\else{disable}\fi]{todonotes}

\iftodo
  \newcommand{\change}[2]{\todo[color=blue!30,prepend,size=\tiny]{\textsf{#1}}{{\color{blue}#2}}}
\else
  \newcommand{\change}[2]{#2}
\fi

\newcommand{\todoin[3]}[]{%
  \todo[inline, caption={#2}, #1]{%
    \begin{minipage}{\textwidth-4pt}%
      #3%
    \end{minipage}%
  }%
}
 
\newcommand{\derive}[2]{\genfrac{}{}{0.5pt}{0}{\begin{gathered}#1\end{gathered}}{#2}}
\newcommand{\Rule}[3]{\derive{\begin{gathered}#1\end{gathered}}{#2}\ \textsc{#3}}

\newcommand{\thesisNewline}{}

\crefformat{section}{\S#2#1#3}
\Crefformat{section}{\S#2#1#3}

\crefformat{subsection}{\S#2#1#3}
\Crefformat{subsection}{\S#2#1#3}
\crefformat{subsubsection}{\S#2#1#3}
\Crefformat{subsubsection}{\S#2#1#3}

\Crefname{figure}{Fig.{}}{Figures}
\crefname{figure}{Fig.{}}{Figures}
\Crefname{definition}{Def.{}}{Definitions}
\crefname{definition}{Def.{}}{Definitions}

\AtEndEnvironment{example}{\null\hfill\ensuremath{\blacklozenge}}
\AtEndEnvironment{definition}{\null\hfill\ensuremath{\lozenge}}

\iftechreport
	\relatedversiondetails{ECOOP'26 paper}{https://doi.org/10.4230/LIPIcs.ECOOP.2026.31}
\fi

\funding{%
  Research partially supported by: %
  the DTU Nordic Five Tech Alliance grant ``Safe and secure software-defined networks in P4'';
  the Horizon Europe grant no.~101093006 ``TaRDIS''; %
  the Independent Research Fund Denmark project ``Hyben''; %
  and the Dutch research council (NWO) under grant no.\ VI.Veni.242.134 (VerHyp).
  The work of Amir was
  partially supported by a Rothschild Fellowship from Yad Hanadiv (The Rothschild Foundation).
}%

\acknowledgements{}%

\iftodo{}\else\nolinenumbers\fi

\iftechreport
  \hideLIPIcs%
\fi

\EventEditors{Robbert Krebbers and Alexandra Silva}
\EventNoEds{2}
\EventLongTitle{40th European Conference on Object-Oriented Programming (ECOOP 2026)}
\EventShortTitle{ECOOP 2026}
\EventAcronym{ECOOP}
\EventYear{2026}
\EventDate{June 29--July 3, 2026}
\EventLocation{Brussels, Belgium}
\EventLogo{}
\SeriesVolume{372}
\ArticleNo{31}

\makeatletter%
\begin{document}
\maketitle

\ifecoop
\begin{abstract}
This paper introduces \toolName (Network-Enforced Session Types), a runtime verification framework that moves application-level protocol monitoring into the network fabric. Unlike prior work that instruments or wraps application code, we synthesize packet-level monitors that enforce protocols directly in the data plane. We develop algorithms to generate network-level monitors from session types and extend them to handle packet loss and reordering. We implement \toolName in P4 and evaluate it on applications including microservice and network-function models, showing that network-level monitors can enforce realistic non-trivial protocols.%
\end{abstract}
 \fi
\section{Introduction}
\label{sec:introduction}

Session types are a well-established formalism for specifying and
verifying message-passing
programs~\cite{Ho93,HoYoCa08,GaVa10,Wa12,CaPf10}.  Whereas
conventional type systems model the types of data used by each process
(i.e., integers, strings, objects, etc.), session types also model
\textit{how} processes interact by sending and receiving messages.
For example, a process might receive a string from $A$, send an
integer to $B$, and then receive a boolean from $C$.

Session types have been used to verify implementations of complex
multiparty protocols, ensuring that each node only sends and receives
well-typed messages and that the system does not fail unexpectedly.
Although they originated in process algebras, session types have been
incorporated into mainstream programming languages including Rust, Go,
Java, OCaml, and Scala~\cite{ChBaTo22,NgYo16,CaHuJoNgYo19,ImYoYu19,JeMuLa15,Ko19,CuYo21,CuYoVa22,
  HuKoPeYoHo10,HuYoHo08,ScYo16,Pa17,OrYo16,PuTo08,ImYuAg10,Yoshida2024}.

\ifecoop
\subparagraph{Network-level monitoring: opportunities and challenges.}
\else
\paragraph{Network-level monitoring: opportunities and challenges}
\fi
A \emph{monitor} observes a system at runtime and checks conformance to a specification. For a session type $\stT$, it observes sent and received messages; on a violation, it can raise an alert or drop the message.
Runtime monitors are useful when the programs running on certain nodes cannot be statically type-checked, or for providing defense in depth. Most prior work has focused on runtime monitoring at the application
level~\cite{BuFrSc21,BuFrScTrTu21,BoChDeHoYo17,DBLP:journals/fac/NeykovaBY17,DBLP:journals/fmsd/DemangeonHHNY15,DBLP:conf/rv/NeykovaYH13}. This
paper asks a different question: \textit{can we synthesize
  monitors that enforce session types at the network level?}
We have two primary motivations.

First, deploying monitors deeper in the network stack
places them beyond end-host control.
This gives stronger assurance in mixed-trust settings: in a public cloud, provider-managed
monitors can enforce session types even when tenants do not trust one another.
This motivation is illustrated in \Cref{fig:network-microservice-example},
which shows a network based on the ``BookInfo''
application described by Istio~\cite{IstioBookinfo}. The network
includes four end hosts, each implementing a different microservice
(\lstinline|Info|, \lstinline|Review|, \lstinline|Details|, and
\lstinline|Ratings|) connected to each other and to an external
\lstinline|Client|. \change{\#123B: outline BookInfo protocol}{
first have the \lstinline|Client| query the \lstinline|Info| end host
for information on a book. \lstinline|Info| then queries \lstinline|Review|
and \lstinline|Details| for information, where the former also itself
queries \lstinline|Ratings|. Finally, \lstinline|Info| replies to the
\lstinline|Client| with the information obtained.}

Without network-level session monitoring, faulty
or malicious code running on one of the end hosts (e.g.,
\lstinline|Info|) may generate invalid packets that reach other end
hosts, consuming network resources and potentially crashing
applications when they receive unexpected messages.

\begin{figure}
    \centering
\newcommand{\nestlogosmall}[2]{
  \begin{scope}[shift={(#1)}, scale=#2]
    \draw[
      fill=gray!20,
      draw=black!100,
      line width=0.7pt,
      drop shadow={shadow xshift=1.5pt, shadow yshift=-1.5pt, opacity=0.7}
    ]
    (-0.7,0.62)
      to[out=90, in=180] (0,0.7)
      to[out=0,  in=90] (0.7,0.62)
      -- (0.7,0.3)
      to[out=-80, in= 10] (0,-0.9)
      to[out=170, in=-100] (-0.7,0.3)
      -- cycle;

      \node[circle, draw, fill=red!15, inner sep=2pt, drop shadow={shadow xshift=1pt, shadow yshift=-1pt}] (T) at (270:0.2){\textbf{T}};
      \node[circle, draw, fill=red!15, inner sep=2pt, drop shadow={shadow xshift=1pt, shadow yshift=-1pt}] (S) at (0:0.2)  {\textbf{S}};
      \node[circle, draw, fill=red!15, inner sep=2pt, drop shadow={shadow xshift=1pt, shadow yshift=-1pt}] (E) at (90:0.2) {\textbf{E}};
      \node[circle, draw, fill=red!15, inner sep=2pt, drop shadow={shadow xshift=1pt, shadow yshift=-1pt}] (N) at (180:0.2){\textbf{N}};
  
      \draw[thick] (N) to[bend left=60] (E);
      \draw[thick] (E) to[bend left=60] (S);
      \draw[thick] (S) to[bend left=60] (T);
      \draw[thick] (T) to[bend left=60] (N);
  \end{scope}
}

\begin{tabular}{m{12cm}m{4cm}}
    \begin{tikzpicture}[
        scale=0.8, transform shape,
        font=\small,
        >=latex,
        role/.style={very thick, rectangle, draw=green!50!black, fill=green!20,
                     text centered, minimum height=2em, minimum width=3em,
                     rounded corners},
        sw/.style={very thick, circle, draw=blue!50!black, fill=blue!20,
                   text centered, minimum size=2.5em},
        highlight/.style={draw=red, very thick},
        link/.style={very thick},
        highlightLink/.style={-latex, very thick, color=red}
      ]

      \node[sw]               (SW0) at (0,0)        {SW0};
      \node[sw, highlight]    (SW1) at (1.7, 0)     {SW1};
      \node[sw, highlight]    (SW2) at (0.85,-1.47) {SW2};
      \node[sw, highlight]    (SW3) at (-0.85,-1.47){SW3};
      \node[sw, highlight]    (SW4) at (-1.7,0)     {SW4};
      \node[sw]               (SW5) at (-0.85,1.47) {SW5};
      \node[sw, highlight]    (SW6) at (0.85,1.47)  {SW6};

      \node[role]             (Ratings) at (-3.5,  1.47)  {\texttt{Ratings}};
      \node[role]             (Client)  at (-3.5,  0)  {\texttt{Client}};
      \node[role, highlight]  (Details) at (3, 1.47)  {\texttt{Details}};
      \node[role, highlight]  (Review)  at (-3.5, -1.47)    {\texttt{Review}};
      \node[role, highlight]  (Info)    at (3, -1.47)   {\texttt{Info}};

      \draw[link]                  (Ratings) -- (SW5);
      \draw[link]                  (Client) to[bend left=40] (SW0);
      \draw[highlightLink, <-]     (Details) -- (SW6);
      \draw[highlightLink, <-]     (Review)  -- (SW4);
      \draw[highlightLink, ->]     (Info)    -- (SW2);

      \draw[link]                  (SW0) -- (SW1);
      \draw[link]                  (SW0) -- (SW4);
      \draw[link]                  (SW0) -- (SW5);
      \draw[highlightLink, <-]     (SW1) -- (SW2);
      \draw[highlightLink, ->]     (SW1) -- (SW6);
      \draw[highlightLink, ->]     (SW2) -- (SW3);
      \draw[highlightLink, ->]     (SW3) -- (SW4);

      \node[text width=5.5cm, align=center, font=\large] at (8,0) {\textbf{\underline{Without \toolName:} \\ \color{red}{faulty and malicious packets reach deep into the network}}};
      
      \end{tikzpicture}
    \end{tabular}    
    \begin{tabular}{m{12cm}m{4cm}}
      \\[-1pt] \hline \\
      \centering
      \begin{tikzpicture}[
          scale=0.8, transform shape,
          font=\small,
          >=latex,
          role/.style={very thick, rectangle, draw=green!50!black, fill=green!20,
                       text centered, minimum height=2em, minimum width=3em,
                       rounded corners},
          sw/.style={very thick, circle, draw=blue!50!black, fill=blue!20,
                     text centered, minimum size=2.5em},
          highlight/.style={draw=red, very thick},
          link/.style={very thick},
          highlightLink/.style={-latex, very thick, color=red}
        ]

        \node[sw]               (SW0) at (0,0)        {SW0};
        \node[sw]    (SW1) at (1.7, 0)     {SW1};
        \node[sw, highlight]    (SW2) at (0.85,-1.47) {SW2};
        \node[sw]    (SW3) at (-0.85,-1.47){SW3};
        \node[sw]    (SW4) at (-1.7,0)     {SW4};
        \node[sw]               (SW5) at (-0.85,1.47) {SW5};
        \node[sw]    (SW6) at (0.85,1.47)  {SW6};

        \node[role]             (Ratings) at (-3.5,  1.47)  {\texttt{Ratings}};
        \node[role]             (Client)  at (-3.5,  0)  {\texttt{Client}};
        \node[role]  (Details) at (3, 1.47)  {\texttt{Details}};
        \node[role]  (Review)  at (-3.5, -1.47)    {\texttt{Review}};
        \node[role, highlight]  (Info)    at (3, -1.47)   {\texttt{Info}};

        \draw[link]                  (Ratings) -- (SW5);
        \draw[link]                  (Client) to[bend left=40] (SW0);
        \draw[link]     (Details) -- (SW6);
        \draw[link]     (Review)  -- (SW4);
        \node (Reject) at (1.6,-1.4) {\huge\color{red}\xmark};

        \draw[link, highlight, ->]     (Info)    -- ($(SW2)+(1.1,0)$);

        \draw[link]                  (SW0) -- (SW1);
        \draw[link]                  (SW0) -- (SW4);
        \draw[link]                  (SW0) -- (SW5);
        \draw[link]     (SW1) -- (SW2);
        \draw[link]     (SW1) -- (SW6);
        \draw[link]     (SW2) -- (SW3);
        \draw[link]     (SW3) -- (SW4);
        \nestlogosmall{1.6,-1.4}{0.5}

        \nestlogosmall{-0.55,0.35}{0.3}
        \nestlogosmall{-2.10,-0.4}{0.3}
        \nestlogosmall{-1.45,1.37}{0.3}
        \nestlogosmall{1.45,1.37}{0.3}
  
        \node[text width=4cm, align=center, font=\large\bfseries] 
          at (8,1.7) {\underline{With \toolName:} \\ \color{green!50!black}{faulty and malicious packets are stopped}};
        \begin{scope}[shift={(7,-0.5)}, scale=1.5]
          \draw[
            fill=gray!20,
            draw=black!80,
            line width=0.7pt,
            drop shadow={shadow xshift=1.5pt, shadow yshift=-1.5pt, opacity=0.7}
          ]
          (-0.7,0.62)
            to[out=90, in=180] (0,0.7)
            to[out=0,  in=90] (0.7,0.62)
            -- (0.7,0.3)
            to[out=-80, in= 10] (0,-0.9)
            to[out=170, in=-100] (-0.7,0.3)
            -- cycle;

          \node[circle, draw, fill=red!15, inner sep=2pt, drop shadow={shadow xshift=1pt, shadow yshift=-1pt}] (T) at (270:0.3){\textbf{T}};
          \node[circle, draw, fill=red!15, inner sep=2pt, drop shadow={shadow xshift=1pt, shadow yshift=-1pt}] (S) at (0:0.3)  {\textbf{S}};
          \node[circle, draw, fill=red!15, inner sep=2pt, drop shadow={shadow xshift=1pt, shadow yshift=-1pt}] (E) at (90:0.3) {\textbf{E}};
          \node[circle, draw, fill=red!15, inner sep=2pt, drop shadow={shadow xshift=1pt, shadow yshift=-1pt}] (N) at (180:0.3){\textbf{N}};
      
          \draw[thick] (N) to[bend left=60] (E);
          \draw[thick] (E) to[bend left=60] (S);
          \draw[thick] (S) to[bend left=60] (T);
          \draw[thick] (T) to[bend left=60] (N);

          \node[text width=1.5cm, align=left, font=\small\bfseries]
            at (1.9,0) {\underline{N}etwork \underline{E}nforced \underline{S}ession \underline{T}ypes};
        \end{scope}

        \draw[very thick, dashed, decorate, gray] ($(N)+(-1,0)$) .. controls (3,0) and (3,-0.5) .. (Reject);
        
        \end{tikzpicture}
      \end{tabular}
     \caption{A network of microservices based on the Istio ``BookInfo''
      application \cite{IstioBookinfo}, highlighting a flow of bad packets
      generated by a faulty implementation of the \lstinline|Info|
      microservice, and how \toolName can stop them.}
    \label{fig:network-microservice-example}
\end{figure}

Second, network-level monitors can run efficiently on suitable hardware, such as programmable switches and NICs. %
\change{\#123A: why no performance evaluation}{%
  For instance, the P4 language for programming network switches \cite{p4}
  is designed for high performance, with line-rate processing of %
  packets~\cite{P4FPGA19,NetCache2017}.
}%

\change{\#123B: rationale for application-layer monitoring at network layer}{%
  Notably, since session types describe application-level protocols, their monitoring
  at the network level breaks the classical network layering. %
  This is not uncommon in modern networks, where classical layering abstraction
  are sometimes broken to implement various functionalities. Middleboxes such as
  NAT boxes, load balancers, proxies, and content caches realise functionality
  at the transport layer (i.e., TCP/UDP or Layer 4) or above, by manipulating
  packets at the network layer (i.e., IP or Layer 3): see the APLOMB paper
  \cite{DBLP:conf/sigcomm/SherryHSKRS12} for a more detailed explanation,
  including a survey of cloud operators. %
  Moreover, existing network-layer devices already enforce simple
  application-level patterns (e.g., firewalls, NAT, DPI) -- but these policies
  are limited: for example, NAT lets internal hosts receive packets only after
  outbound traffic. %
}%
Compared to these approaches, session types can express richer protocols and
come equipped with formal guarantees.

However, realizing network-level session type monitors requires addressing several challenges:
\begin{itemize}
\item Session types can model rich behaviors that go well beyond static policies and simple firewalls. Hence, monitor synthesis must be \textit{automatic}. 
\item Monitors for session types must also be \textit{stateful}, to track the current protocol state and accept or
  reject packets accordingly. While basic connection tracking exists in devices such as stateful firewalls and NAT boxes,
  session types bring significant complexity.
\item Network-level monitors must handle reordered packets and retransmissions after \textit{loss}, often with limited hardware buffering. By contrast, existing application-level session monitors assume reliable transport (e.g., TCP) and enough buffering to reorder packets.
\end{itemize}

\ifecoop
\subparagraph{Contributions and outline.}
\else
\paragraph{Contributions and outline}
\fi%
\change{\#123B: improve contributions description}{%
   To address these challenges, we design and implement \toolName, a tool for generating and deploying network-level session monitors. \toolName takes as input a set of local session types (represented with a Scala 3 embedded DSL) and their associated roles, generates corresponding P4 monitor representations, and uploads them on P4-enabled network devices. We develop the formal foundations of \toolName, and evaluate it on a set of representative multiparty protocols using Mininet \cite{DBLP:conf/hotnets/LantzHM10}, a realistic software-defined networking emulation platform. %
}%
By enforcing session types at the network perimeter, \toolName monitors can discard protocol-violating packets early, saving network resources and preventing invalid traffic from reaching downstream hosts and devices.
More broadly, our approach realises an ``off-by-default'' network~\cite{BaChRaRoSh05}, where only authorized packets can traverse the network.

The main contributions of this paper are as follows:
\begin{description}
  \item[\Cref{sec:design}] introduces \toolName. Given
    \change{\#123B: clarify \toolName input}{%
      a set of session types written in a Scala 3 embedded DSL, %
    }%
    \toolName synthesizes network-layer monitors based on P4.
    \toolName also generates an API for
    writing end-host programs whose communication patterns are tracked
    by our monitors (\Cref{sec:implementation-st-api}).
  \item[\Cref{sec:formal-model}] presents the formal model at the basis of \toolName monitors.
    We introduce a novel monitor synthesis technique (\Cref{def:monitor-state-st}) tailored
    for network-level session monitors that reject bad packets, while ensuring
    soundness -- i.e., messages from well-behaved end hosts
    are not rejected (\Cref{thm:net-mon-bisim}).
  \item[\Cref{sec:evaluation}] presents a qualitative and quantitative
    evaluation of \toolName, showing that the monitors it generates
    accept valid packets and reject invalid ones across a variety of
    representative multiparty protocols.
\end{description}
Finally, \Cref{sec:related-work} discusses related work, and \Cref{sec:conclusion} concludes with future directions.
\change{Cite artifact}{%
  \toolName is available in the companion artifact of this work %
  \iftechreport{\cite{larsen_2026_19661497} }\else{}\fi%
  with instructions for reproducing the evaluation in \Cref{sec:evaluation}.%
}%

\section{Background}
\label{sec:background}

We briefly review session types using the Istio
BookInfo application~\cite{IstioBookinfo} from
\Cref{sec:introduction}, a microservice application for an online bookseller.
First, a \lstinline|Client| requests information on a given book. The
outward-facing \lstinline|Info| service then gathers data from two
internal microservices: \lstinline|Details| and
\lstinline|Review|. The \lstinline|Details| microservice replies
immediately with data (e.g., author and ISBN).
\begin{wrapfigure}{r}{0.48\textwidth}
\vspace{-5mm}
\begin{lstlisting}[escapechar=\%,basicstyle=\scriptsize\ttfamily]
protocol ReviewProtocol {
  role Client, Info, Review, Ratings, Details;
  Client -> Info: request(int);
  Info -> Review: review_request(int);
  Info -> Details: detail_request(int);
  Review -> Ratings: ratings_request(int);
  Ratings -> Review: ratings_response(float);
  Review -> Info: review_response(str);
  Details -> Info: detail_response(str);
  Info -> Client: response(str);
}
\end{lstlisting}
\vspace{-4mm}
\caption{Global session type for BookInfo.}
\label{fig:InfoGlobalType}
\vspace{-4mm}
\end{wrapfigure}
Meanwhile, \lstinline|Review| first queries the internal \lstinline|Ratings|
service and then replies with reviews and ratings. The behavior of
BookInfo is captured by the global session type in
\Cref{fig:InfoGlobalType}: \lstinline|Client| sends \lstinline|Info| a
\lstinline|request| message with \lstinline|int| payload;
\lstinline|Info| then sends \lstinline|Review| a
\lstinline|review_request(int)| message, and so on.\change{\#123B,\#123C: clarify \toolName's inputs with a footnote}{}\footnote{%
  \label{footnote:global-types-proj}%
  The syntax of the global type in \Cref{fig:InfoGlobalType} is inspired by tools like Scribble (\url{https://github.com/scribble/scribble-java})
  and $\nu$Scr (\url{https://github.com/nuScr/nuScr})
  and only serves to illustrate the standard session types framework and the BookInfo protocol. %
  As explained later in \Cref{sec:design} and
  \Cref{remark:determining-liveness-half-duplex}, \toolName takes as input one or more
  \emph{local} session types that are part of a multiparty protocol that is output-live (\Cref{def:live-net}) and half-duplex. If needed, such local session types can be
  obtained from a global type using the standard \emph{projection} techniques mentioned in this section,
  and implemented in tools like Scribble, $\nu$Scr, and mpstk (\url{https://github.com/alcestes/mpstk-crash-stop}).%
}%

By \textit{projecting} a global session type to one role, we obtain
a \emph{local session type}. The local session type in \Cref{fig:InfoLocalSessionType} describes the communication protocol
enacted by role \lstinline|Info|. Although this simple example is linear, session types can also express \emph{branching} and \emph{recursion}:
\change{\#123B: explain branching and recursion}{%
   branching session types are choice points where a role may send or receive one among several different messages
   and possibly continue the session in a different way (see \Cref{ex:bad-network-message} later on), and recursive session types
   allow for repeating part of a session (see \Cref{ex:infinite-state-monitor}).%
}%

Local session types are typically used for compile-time type checking. In practice, many components cannot be session-typed (e.g., unsupported languages/frameworks or inaccessible participants such as the external \lstinline|Client| in BookInfo). Even then, local session types remain a precise and expressive protocol specification language, so we use them as the basis for network monitoring.

\begin{figure}
  \centering
\begin{minipage}[t]{0.49\textwidth}
\centering
\begin{lstlisting}[escapechar=\%,basicstyle=\scriptsize\ttfamily]
ReviewProtocol[Info] = {
    Client?request(int);
    Review!review_request(int);
    Details!detail_request(int);
    Review?review_response(str);
    Details?detail_response(str);
    Client!response(str);
}
\end{lstlisting}
\end{minipage}\;\begin{minipage}[t]{0.49\textwidth}
\centering
\begin{lstlisting}[language=py,basicstyle=\scriptsize\ttfamily]
sh = BookReviewSessionHandler(42, Info)
while True:
  book_id = sh.recvMsg(Client, request)
  sh.sendMsg(Review, review_request(book_id))
  sh.sendMsg(Details, detail_request(book_id))
  review = sh.recvMsg(Review, review_response)
  detail = sh.recvMsg(Details, detail_response)
  sh.sendMsg(Client, response(review + detail))
\end{lstlisting}
\end{minipage}
\caption{Local session type (left) and end host code (right) for the \lstinline|Info| role in the BookInfo application~~\cite{IstioBookinfo}. The \texttt{?} and \texttt{!} symbols represent receiving and sending a message, respectively.}
\label{fig:our-approach}%
\label{fig:InfoLocalSessionType}
\end{figure}
\section{\toolName: overview and end-host monitor design}
\label{sec:design}

This section overviews \toolName, our toolkit for synthesizing and deploying network-level monitors from session type specifications.
\Cref{fig:tool-pipeline} summarizes the workflow.

Our design for \toolName relies on two key assumptions. First, we assume we
are giving a local session type $\stT$ that captures the intended behavior of
each node. For now, an intuitive understanding of session types will suffice.
The formal definition will be given in \Cref{sec:formal-model}. Second, we
assume that the behavior of the devices at the edge of the network can be
specified in P4, a domain-specific language for programming network
switches~\cite{p4}. To understand \toolName, a deep understanding of P4 will not
be necessary. For now, there are two things to know: (i) P4 provides match-action
tables (MATs), which can be populated with entries at runtime to control how
packets are processed; (ii) P4 provides mutable registers, with associated read
and write operations, which can be used to implement stateful packet processing.
We will use both of these features in our design for \toolName.

\change{\#123B: intuition on MATs}{%
A MAT is essentially a table
where each row expresses a rule for recognising and handling packets. %
Intuitively, a MAT has two columns called \textit{match} and \textit{action}. The ``match'' column determines what is matched
(e.g., packet IP address, port, or P4 state register) and the expected values. The
``action'' column determines which action is applied to a matching packet (e.g., forward,
drop, etc.).
The MATs defined by \toolName encode a state machine: in each row, the ``match'' column specifies
how to match a packet based on the current state of the switch, and on the message label,
sender and receiver carried by the packet; then, the ``action'' column
specifies whether to accept the packet (transitioning to another state) or reject it.
}

Given an application's (local) session type, \toolName generates the following:

\begin{itemize}[leftmargin=*]
	\item An \textbf{API} for sending and receiving messages in
          the format expected by our network monitors. We describe the
          message format in \Cref{sec:implementation-st-headers} and
          the session API generation in
          \Cref{sec:implementation-st-api}.
	\item A set of \textbf{P4 routing table entries} (i.e., MAT entries) for
          enforcing the session type on incoming traffic; these encode the
          session type's state progression as messages are sent and received by
          participants.
\end{itemize}

\begin{figure}[t]
	\centering
	\includegraphics[width=.7\textwidth]{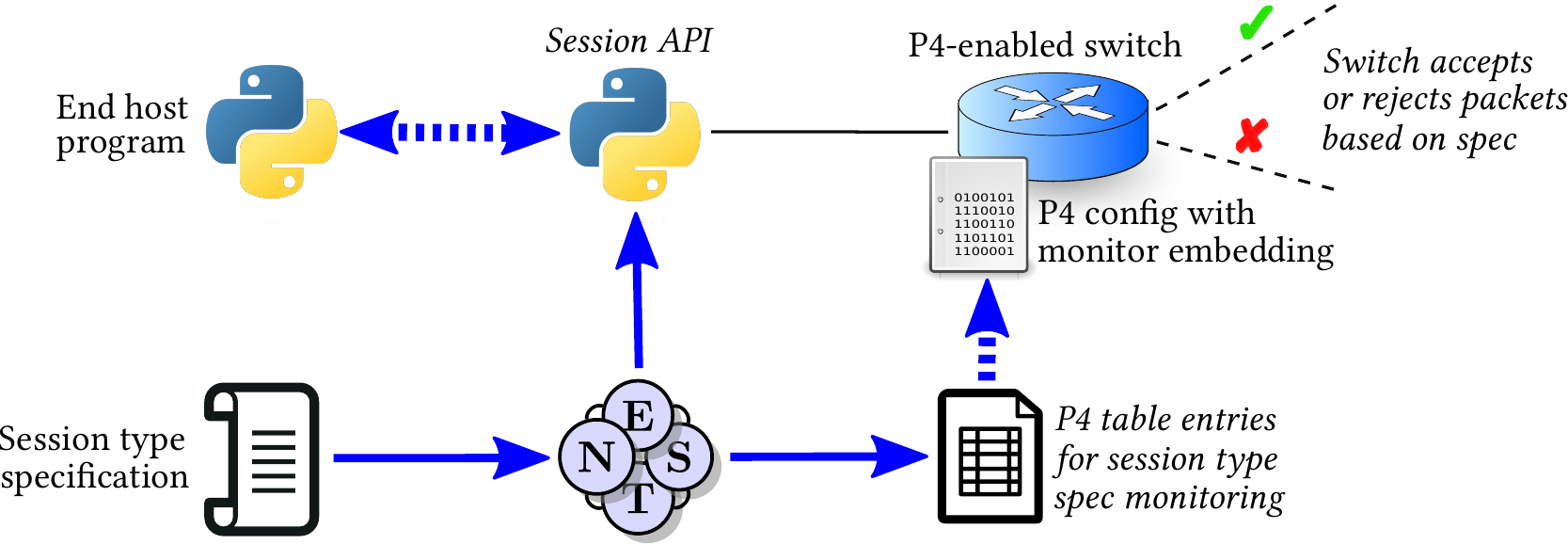}
\caption{Overview of \toolName: the programmer writes a session type specification (left);
\toolName generates a Python session API, which end-host programs use for network communication (right).
\toolName also generates P4 switch configurations, which drop non-conformant packets.
}
	\label{fig:tool-pipeline}
\end{figure}

Our monitors reject illegal packets at the network perimeter.
For example, in \Cref{fig:network-microservice-example}, traffic from the faulty node is dropped at switch \texttt{SW2}.

Given a session type such as \Cref{code:st-info-impl} and a role, \toolName synthesizes a monitor in four steps
\change{\#123B: ``bringing down'' session types to the network layer}{%
  that effectively ``bring down'' an application-level session type specification
  into the lower-level network layer:%
}%

\begin{enumerate}
\item \toolName's synthesis module constructs a state machine for the session-type monitor.
\item \toolName converts this state machine into MAT entries, mapping each transition to one entry. For \Cref{code:st-info-impl}, the generated entries are shown in \Cref{tab:st-monitor-transitions}.
\item %
\toolName then translates these MAT entries into entries for the P4 table \texttt{monitor\_table} (\Cref{sec:implementation-st-headers}), encoding sender/receiver roles and message labels as enumerated IDs.
At this stage, \toolName can also generate monitoring logic for packet loss, duplication, and reordering in TCP connections (\Cref{sec:implementation-practical-protocols}).
\item Finally, \toolName deploys the generated entries on P4-enabled devices acting as monitors.\footnote{%
	The deployment phase uses the P4R-Type library~\cite{LaGuHaSc23}
	to statically ensure that the deployed entries conform to the \texttt{monitor\_table}
	table definition.%
}
\end{enumerate}

\begin{figure}
  \centering
  \noindent%
\begin{minipage}{0.46\textwidth}
\begin{lstlisting}[language=scala,numbers=none,numbersep=0pt,xleftmargin=0pt,frame=none,basicstyle=\fontsize{7}{8}\selectfont\ttfamily]
val lst_BookInfo =
  ExtCh(a_BookClient, Map(l_Request ->
  IntCh(Map((a_BookReview, l_ReviewRequest) ->
  IntCh(Map((a_BookDetails, l_DetailRequest) ->
  ExtCh(a_BookReview, Map(l_ReviewResponse ->
  ExtCh(a_BookDetails, Map(l_DetailResponse ->
  IntCh(Map((a_BookClient, l_Response) ->
  End() ))))))))))))
\end{lstlisting}
\vspace{-3mm}%
		\caption{The \lstinline|Info| session type in
		\Cref{fig:InfoLocalSessionType} (left), represented as a Scala 3 data type
		instance which \toolName takes as input. %
    \lstinline|ExtCh| means ``await an incoming message chosen and sent by another role,'' whereas
    \lstinline|IntCh| means ``choose a recipient role and a message, and perform the send operation.''
    (In this example, all choices have only one option.)}
		\label{code:st-info-impl}
	\end{minipage}
  \hfill
  \begin{minipage}{0.53\textwidth}
    \centering
    \scriptsize
    \renewcommand{\arraystretch}{1.3} %
    \begin{tabular}{|c|c|c|c|c|}
        \hline
        \rowcolor{lightgray} \multicolumn{4}{|c|}{\textbf{Match}} & \cellcolor{lightgray}\\
        \cline{1-4}
        \rowcolor{lightgray}\textbf{State} & \textbf{Sender} & \textbf{Receiver} & \textbf{Label} & \multirow{-2}{*}{\textbf{Action}} \\
        \hline\hline
        m0 & \lstinline|Client|  & \lstinline|Info|    & $\labFmt{req}$    & \ttt{accept}(m1) \\ \hline
        m1 & \lstinline|Info|    & \lstinline|Review|  & $\labFmt{r\_req}$ & \ttt{accept}(m2) \\ \hline
        m2 & \lstinline|Review|  & \lstinline|Info|    & $\labFmt{r\_rsp}$ & \ttt{accept}(m3) \\ \hline
        m2 & \lstinline|Info|    & \lstinline|Details| & $\labFmt{d\_req}$ & \ttt{accept}(m4) \\ \hline
        m3 & \lstinline|Info|    & \lstinline|Details| & $\labFmt{d\_req}$ & \ttt{accept}(m5) \\ \hline
        m4 & \lstinline|Review|  & \lstinline|Info|    & $\labFmt{r\_rsp}$ & \ttt{accept}(m5) \\ \hline
        m4 & \lstinline|Details| & \lstinline|Info|    & $\labFmt{d\_rsp}$ & \ttt{accept}(m6) \\ \hline
        m5 & \lstinline|Details| & \lstinline|Info|    & $\labFmt{d\_rsp}$ & \ttt{accept}(m7) \\ \hline
        m6 & \lstinline|Review|  & \lstinline|Info|    & $\labFmt{r\_rsp}$ & \ttt{accept}(m7) \\ \hline
        m7 & \lstinline|Info|    & \lstinline|Client|  & $\labFmt{rsp}$    & \ttt{accept}(m8) \\ \hline
        \multicolumn{4}{|c|}{Otherwise} & \ttt{reject} \\ \hline
    \end{tabular}
    \vspace{-3mm}
    \caption{A match-action table (MAT) generated by \toolName from the session type in \Cref{code:st-info-impl}. %
    The message labels are shortened for brevity. %
    In the ``action'' column, \texttt{accept}(m$X$) indicates that the monitor transitions to state m$X$ upon accepting the message.%
    }
    \label{tab:st-monitor-transitions}
  \end{minipage}
\end{figure}

After deployment, the P4 device enforces session-type monitoring: upon
receiving a packet, \texttt{monitor\_table} inspects the packet header and
accepts or rejects it. %
\toolName also generates a session API for sending and
receiving packets with the headers expected by the monitors (\Cref{sec:implementation-st-api}).

\ifecoop
  \subparagraph*{Challenges.}
\else
  \paragraph{Challenges.}
\fi
The rest of this section addresses three practical challenges.
\begin{itemize}[leftmargin=*]
  \item \textbf{Accept/reject decisions and session state tracking (\Cref{sec:implementation-st-headers}):}
		how should a P4 device correctly accept/reject packets while tracking multiple session types concurrently?
		\item \textbf{Shared entry points (\Cref{sec:implementation-practical-sharedentry}):} how should one monitor handle multiple end hosts sharing an ingress point?
  \item \textbf{Packet loss, duplication, and reordering (\Cref{sec:implementation-practical-protocols}):}
    how should \toolName support transport protocols (e.g., TCP) that affect packet sequencing?
\end{itemize}

\ifecoop
  \subparagraph*{Assumptions and limitations.}
\else
  \paragraph{Assumptions and limitations.}
\fi

We assume the devices at the edge of the network can be programmed in P4, so that all communication between protocol roles passes through monitored devices. The current version of \toolName also assumes each session message fits into a single packet; ideas for lifting this restriction to handle fragmentation and additional transport protocols are discussed in \Cref{sec:conclusion-futurework}.

\subsection{Accepting/Rejecting Packets and Tracking Session-Type State}
\label{sec:implementation-st-headers}

\toolName-generated monitors decide whether to accept or reject a packet by processing a dedicated
\emph{session header} (\Cref{fig:session-header}). %
Each monitored-session packet must carry this header; others are rejected by default.
The P4 monitoring logic is implemented by the table \texttt{monitor\_table}
(\Cref{fig:p4-monitor-table}), which extracts from the session header the
\emph{message size} (required for deparsing the packet),
\emph{sender role}, \emph{receiver role}, and
\emph{message label ID}, to decide whether to accept or reject
the packet.\footnote{%
	The session header currently used by \toolName (depicted in
	\Cref{fig:session-header}) supports protocols with up to 15 distinct roles
	and up to 63 distinct message labels. This bound can be increased if
	needed, although doing so may be constrained by packet size and the memory
	available on the P4 switch.%
} %
\texttt{monitor\_table} matches extracted session-header fields
against entries generated by \toolName from a session-type MAT such as \Cref{tab:st-monitor-transitions}. %
Packets matching at least one entry are accepted; otherwise, the default \texttt{reject} action drops them (line~13 in \Cref{fig:p4-monitor-table}).

\begin{figure}
\begin{minipage}{0.5\textwidth}
  \centering%
  \begin{tabular}{|l|l|l|l|l|l|l|l|} \hline
    \rowcolor{lightgray}  \textbf{0} & \textbf{1} & \textbf{2} & \textbf{3} & \textbf{4} & \textbf{5} & \textbf{6} & \textbf{7} \\ \hline
    \multicolumn{8}{|c|}{Message Size} \\ \hline
    \multicolumn{8}{|c|}{Session ID} \\ \cline{3-8}
    \multicolumn{2}{|c|}{} &
     \multicolumn{6}{c|}{Message Label ID} \\ \hline
    \multicolumn{4}{|c|}{Sender Role} & \multicolumn{4}{c|}{Receiver Role}\\ \hline
    \multicolumn{8}{|c|}{\multirow{2}{*}{Session Sequence Number}} \\
    \multicolumn{8}{|c|}{} \\ \hline
  \end{tabular}
    \caption{Specification of the session header.}
    \label{fig:session-header}
\end{minipage}
  \hfill%
  \begin{minipage}{0.45\textwidth}
    \begin{lstlisting}[style=P4Style,basicstyle=\scriptsize\ttfamily]
table monitor_table {
  key = {
        hdr.session.sender: exact;
        hdr.session.label: exact;
        hdr.session.receiver: exact;
        currentState: exact;
    }
    actions = {
        accept;
        reject;
    }
    size = 1024;
    default_action = reject();
}
\end{lstlisting}
    \vspace{-3mm}%
    \caption{Specification of the P4 table for monitor transitions.}
    \label{fig:p4-monitor-table}
  \end{minipage}
\end{figure}

\texttt{monitor\_table}'s \texttt{accept} action also inspects the
\textit{session ID}, allowing one device to distinguish and track many sessions in parallel.
When a packet with session ID $s$ is accepted, \texttt{accept} uses $s$ to index a P4 stateful register and update that session's state.

The session header also carries a \textit{session sequence number}, used to handle retransmissions (discussed in \Cref{sec:implementation-practical-protocols}).

\ifecoop
  \subparagraph*{Computing the session monitor MAT.}
\else
  \paragraph{Computing the session monitor MAT}
\fi
\toolName's monitoring strategy hinges on correctly synthesizing
the session-type monitor MAT that determines which packets are accepted or rejected. %
For instance, %
the state machine in \Cref{fig:st-info-visual-graph} corresponds to the
standard semantics of the session type in \Cref{code:st-info-impl} -- whereas
the state machine in \Cref{fig:st-monitor-visual-graph} is synthesized by \toolName
from the same session type, and then converted into the MAT in
\Cref{tab:st-monitor-transitions}. %
Observe that state m2 of the MAT allows
\lstinline|Info| to either send $\stLabFmt{d\_req}$ to \lstinline|Details| or
receive $\stLabFmt{r\_rsp}$ from \lstinline|Review| -- whereas these actions are
sequential in \Cref{code:st-info-impl,fig:st-info-visual-graph} (first send, then receive). %
\change{\#123B: ``bringing down'' session types to the network layer}{%
  The difference between \Cref{fig:st-info-visual-graph} and
  \Cref{fig:st-monitor-visual-graph} is a consequence of ``bringing down''
  a session type specification to the network layer for monitoring purposes. %
}%
This is because, while enforcing a session type $\stT$,
\toolName's network-level monitor may see packets that diverge from $\stT$'s expected order.
The monitor must distinguish ``bad'' packets that violate $\stT$ from ``good'' packets delivered in a different order. %
We formalize this in \Cref{sec:formal-model} and prove correctness.

\newcommand{\stInfo}{\stT[\text{\lstinline|Info|}]}%
\begin{figure}
        \centering
        \begin{subfigure}{.4\textwidth}
            \small
            \centering
            \scalebox{0.95}{%
            \begin{tikzpicture}
                \node[circle, draw=black, double=white, minimum size=7mm, very thick] (s0)                     {s0};
                \node[circle, draw=black, minimum size=7mm, very thick] (s1) [below = .5cm of s0] {s1};
                \node[circle, draw=black, minimum size=7mm, very thick] (s2) [below = .5cm of s1] {s2};
                \node[circle, draw=black, minimum size=7mm, very thick] (s3) [below = .5cm of s2] {s3};
                \node[circle, draw=black, minimum size=7mm, very thick] (s4) [below = .5cm of s3] {s4};
                \node[circle, draw=black, minimum size=7mm, very thick] (s5) [below = .5cm of s4] {s5};
                \node[circle, draw=black, minimum size=7mm, very thick] (s6) [below = .5cm of s5] {s6};
                \draw[->, thick] (s0.south) -- (s1.north) node[midway, right = .1cm]          {$\stEnqueueTag{\roleFmt{c}}{\labFmt{req}}{\tyInt}$};
                \draw[->, thick] (s1.south) -- (s2.north) node[midway, right = .1cm]          {$\stSendTag{\roleFmt{r}}{\labFmt{r\_req}}{\tyInt}$};
                \draw[->, thick] (s2.south) -- (s3.north) node[midway, right = .1cm]          {$\stSendTag{\roleFmt{d}}{\labFmt{d\_req}}{\tyInt}$};
                \draw[->, thick] (s3.south) -- (s4.north) node[midway, right = .1cm]          {$\stEnqueueTag{\roleFmt{r}}{\labFmt{r\_rsp}}{\tyString}$};
                \draw[->, thick] (s4.south) -- (s5.north) node[midway, right = .1cm]          {$\stEnqueueTag{\roleFmt{d}}{\labFmt{d\_rsp}}{\tyString}$};
                \draw[->, thick] (s5.south) -- (s6.north) node[midway, right = .1cm]          {$\stSendTag{\roleFmt{c}}{\labFmt{rsp}}{\tyString}$};
            \end{tikzpicture}
            }%
            \caption{The state machine of the session type %
              in \Cref{fig:InfoLocalSessionType,code:st-info-impl} %
              (following the formal semantics of $\stInfo$ by \Cref{def:local-semantics-no-queues}).}
            \label{fig:st-info-visual-graph}
        \end{subfigure}%
        \hspace{1cm}
        \begin{subfigure}{.52\textwidth}
            \tiny
            \centering
            \scalebox{0.9}{%
            \begin{tikzpicture}
                \node[circle, draw=black, double=white, minimum size=7mm, very thick, font=\scriptsize] (m0)                          {m0};
                \node[circle, draw=black, minimum size=7mm, very thick, font=\scriptsize] (m1) [below = .5cm of m0, font=\scriptsize]      {m1};
                \node[circle, draw=black, minimum size=7mm, very thick, font=\scriptsize] (m2) [below = .5cm of m1, font=\scriptsize]       {m2};
                \node[circle, draw=black, minimum size=7mm, very thick, font=\scriptsize] (m3) [below left = 1cm of m2, font=\scriptsize]   {m3};
                \node[circle, draw=black, minimum size=7mm, very thick, font=\scriptsize] (m4) [below right = 1cm of m2, font=\scriptsize]  {m4};
                \node[circle, draw=black, minimum size=7mm, very thick, font=\scriptsize] (m5) [below right = 1cm of m3, font=\scriptsize]  {m5};
                \node[circle, draw=black, minimum size=7mm, very thick, font=\scriptsize] (m6) [below right = 1cm of m4, font=\scriptsize]  {m6};
                \node[circle, draw=black, minimum size=7mm, very thick, font=\scriptsize] (m7) [below right = 1cm of m5, font=\scriptsize]  {m7};
                \node[circle, draw=black, minimum size=7mm, very thick, font=\scriptsize] (m8) [below = .5cm of m7, font=\scriptsize]       {m8};
                \draw[->, thick, font=\scriptsize] (m0.south) -- (m1.north) node[midway, right = .1cm, font=\scriptsize]                         {$\monLabYes{\stEnqueueTag{\roleFmt{c}}{\labFmt{req}}{\tyInt}}$};
                \draw[->, thick, font=\scriptsize] (m1.south) -- (m2.north) node[midway, right = .1cm]                         {$\monLabYes{\stSendTag{\roleFmt{r}}{\labFmt{r\_req}}{\tyInt}}$};
                \draw[->, thick, font=\scriptsize] (m2.west) .. controls +(left:5mm) and +(up:5mm) .. (m3.north) node[midway, left = .1cm, font=\scriptsize] {$\monLabYes{\stEnqueueTag{\roleFmt{r}}{\labFmt{r\_rsp}}{\tyString}}$};
                \draw[->, thick, font=\scriptsize] (m2.east) .. controls +(right:5mm) and +(up:5mm) .. (m4.north) node[midway, right = .1cm, font=\scriptsize]                {$\monLabYes{\stSendTag{\roleFmt{d}}{\labFmt{d\_req}}{\tyInt}}$};
                \draw[->, thick, font=\scriptsize] (m3.south) .. controls +(down:5mm) and +(left:5mm) .. (m5.west) node[midway, left = .1cm, font=\scriptsize]         {$\monLabYes{\stSendTag{\roleFmt{d}}{\labFmt{d\_req}}{\tyInt}}$};
                \draw[->, thick, font=\scriptsize] (m4.south) .. controls +(down:5mm) and +(right:5mm) .. (m5.east) node[midway, above left = .2cm and -.2cm, font=\scriptsize]  {$\monLabYes{\stEnqueueTag{\roleFmt{r}}{\labFmt{r\_rsp}}{\tyString}}$};
                \draw[->, thick, font=\scriptsize] (m4.east) .. controls +(right:5mm) and +(up:5mm) .. (m6.north) node[midway, right = .1cm, font=\scriptsize]        {$\monLabYes{\stEnqueueTag{\roleFmt{d}}{\labFmt{d\_rsp}}{\tyString}}$};
                \draw[->, thick, font=\scriptsize] (m5.south) .. controls +(down:5mm) and +(left:5mm) .. (m7.west) node[midway, left = .1cm, font=\scriptsize]         {$\monLabYes{\stEnqueueTag{\roleFmt{d}}{\labFmt{d\_rsp}}{\tyString}}$};
                \draw[->, thick, font=\scriptsize] (m6.south) .. controls +(down:5mm) and +(right:5mm) .. (m7.east) node[midway, right = .1cm, font=\scriptsize]        {$\monLabYes{\stEnqueueTag{\roleFmt{r}}{\labFmt{r\_rsp}}{\tyString}}$};
                \draw[->, thick, font=\scriptsize] (m7.south) -- (m8.north) node[midway, left = .1cm, font=\scriptsize]                          {$\monLabYes{\stSendTag{\roleFmt{c}}{\labFmt{rsp}}{\tyString}}$};
            \end{tikzpicture}
            }%
            \caption{The state machine of the monitor for the session type in \Cref{fig:InfoLocalSessionType,code:st-info-impl} %
              (by the formal semantics of $\enc{\stInfo}$ by \Cref{def:monitor-state-st}). %
              Note that this diagram only shows the monitor's accepting transitions; %
              it omits the transitions that reject any send/receive action which is not explicitly accepted %
              (by rule~\textsc{STMon-Bad} in \Cref{def:monitor-state-st}).
              }
            \label{fig:st-monitor-visual-graph}
        \end{subfigure}
        \caption{The state machines of the session type in \Cref{fig:InfoLocalSessionType,code:st-info-impl} and its monitor. %
          For brevity, we shorten the message labels and write $\roleFmt{c}$, $\roleFmt{r}$, and $\roleFmt{d}$ for \lstinline|Client|, \lstinline|Review|, and \lstinline|Details|, respectively.}%
        \label{fig:st-monitor-visual}
\end{figure}

\subsection{Shared Entry Points}
\label{sec:implementation-practical-sharedentry}

Although \Cref{fig:network-microservice-example} shows one switch per end host, \toolName also supports multiple end hosts per switch, even with different roles.
This raises a practical challenge: \textit{how can one device monitor multiple roles concurrently?}
Suppose a device must monitor roles $\roleP$ and $\roleQ$ in session types $\stT[\roleP]$ and $\stT[\roleQ]$.
Let $\enc{\stT[\roleP]}$ and $\enc{\stT[\roleQ]}$ be the monitors synthesized when these roles are monitored separately.
We build one monitor from these two using standard process-calculus techniques:
\begin{enumerate}
		\item Compute the labelled transition system (LTS) of the \emph{parallel composition} $\cpmPar{\enc{\stT[\roleP]}}{\enc{\stT[\roleQ]}}$ in the style of CCS~\cite{DBLP:books/daglib/0067019}, allowing synchronization when the monitor $\enc{\stT[\roleP]}$ accepts an outgoing message to $\enc{\stT[\roleQ]}$ (or \emph{vice versa}).
		\item Prune non-synchronising transitions where the monitor $\enc{\stT[\roleP]}$ accepts a message to/from $\enc{\stT[\roleQ]}$ (or \emph{vice versa}); in CCS terms, apply restriction $\cpmRst{\roleP}{\roleQ}{\left(\cpmPar{\enc{\stT[\roleP]}}{\enc{\stT[\roleQ]}}\right)}$.
		\item Deploy on the P4 device a monitor state machine matching this restricted composition, so the device monitors $\stT[\roleP]$ and $\stT[\roleQ]$ together.\footnote{%
	  With a minor extension of the formal model introduced later
	  in~\Cref{sec:formal-model}, the behavior of a network using this joint
	  monitor for $\stT[\roleP]$ and $\stT[\roleQ]$ can be proven bisimilar to a network where
	  the end hosts of $\roleP$ and $\roleQ$ are connected to different devices,
	  matching the premises of our monitor correctness result
	  (\Cref{thm:net-mon-bisim}).%
  }%
\end{enumerate}

\subsection{Handling Packet Loss, Duplication, and Reordering}
\label{sec:implementation-practical-protocols}

So far, \toolName-generated monitor state machines and MATs
(\Cref{fig:st-monitor-visual-graph,tab:st-monitor-transitions})
cover the ``core logic'' from \Cref{sec:implementation-st-headers}: for a session type $\stT$, they distinguish violating packets from out-of-order but valid packets.
This is also the focus of the formal synthesis in \Cref{sec:formal-model}.
The result is a P4 monitor for reliable networks.

However, such monitors would often behave incorrectly in real-world
networks, where packets may experience %
loss, duplication, and reordering (even between packets with the same sender).
To address these issues, transport protocols such as TCP add transmission logic
to packets and may adjust their sending order to guarantee reliable
delivery.
Therefore, \toolName is designed to be flexible with respect to transport
protocols between end hosts. In particular, \toolName supports generation
of \emph{TCP-oriented P4 monitors for session types} by augmenting a ``core''
monitor state machine and MAT (like the ones in
\Cref{fig:st-monitor-visual-graph,tab:st-monitor-transitions}, which follow
\Cref{def:monitor-state-st}) with transitions and checks for TCP socket setup
and teardown, packet acknowledgements, and retransmissions within a TCP
connection.
For example, \Cref{fig:implementation-tcp-monitor-visual} depicts the
TCP-oriented monitor obtained from the state machine in~\Cref{fig:st-monitor-visual-graph}
by adding TCP-specific transitions (dashed arrows) and checks.

\begin{itemize}[leftmargin=*]
		\item The TCP-oriented monitor always accepts packets related to TCP
		connection setup and teardown (i.e., TCP headers with the \texttt{SYN} or
		\texttt{FIN} flag set) without changing its state. %
		It also accepts pure TCP acknowledgements (i.e., TCP headers with \texttt{ACK}) unless the \texttt{ACK} is piggybacked with a session header, in which case normal monitoring applies.
	\item As mentioned in \Cref{sec:implementation-st-headers}, the session header
  carries a \textit{session sequence number (SSN)} field, which records the total number
  of messages sent by the sender so far. These sequence numbers are
  tracked by our TCP-oriented monitors. %
  The TCP-oriented monitor uses a detailed decision procedure based on SSNs; the details of this
  decision procedure are simplified away for the transitions shown in~\Cref{fig:implementation-tcp-monitor-visual},
  but are explained in~\Cref{fig:tcp-monitor-decision} (essentially, the end states in~\Cref{fig:tcp-monitor-decision}
  correspond to transitions that each state in~\Cref{fig:implementation-tcp-monitor-visual} can take).
  We explain the decision procedure:
    \begin{enumerate}
      \item If an incoming packet's SSN is less than or equal to the monitor's stored SSN for that sender, the packet is accepted as a retransmission and does not update monitor state.
	      \item Otherwise (SSN greater), the packet is first matched on sender and receiver \textit{only} (ignoring label).%
            \footnote{To match packets only on sender and receiver, the TCP-oriented monitor uses an additional table,
            \texttt{receiver\_table}, a partial version of \texttt{monitor\_table} that omits \texttt{hdr.session.label}.}
            \change{\#123C: why match on sender/receiver only?}{This is done to distinguish truly invalid packets
            from valid packets that arrive out of order. If the monitor was to match simultaneously on sender, recipient
            and label, a failed match would be ambigious. The monitor must distinguish these two cases: in the first
            case it must drop the session (see step 4), while in the second case it must just reject the packet (that will be
            retransmitted).}%
      \item If both sender and recipient match, the monitor then matches the packet label.
	          If that matches, there are two cases. %
            Suppose that the monitor is guarding the network border for end host $\roleP$:
          \begin{itemize}
            \item For each outgoing packet sent by $\roleP$, if the packet SSN matches the next expected value (i.e., the stored SSN plus one), the monitor accepts the packet and updates its state. Instead, if the SSN is too high, the monitor drops the packet: this is because the SSN being too high means that some packets were lost, hence the monitor awaits their retransmission with the correct SSN.
            \item For each incoming packet towards $\roleP$, the monitor accepts the SSN and updates its state without further checks: the monitor assumes that the SSN was already checked and accepted by the border monitor on the sender side.
          \end{itemize}
	      \item If recipient matches but label does not, the packet is rejected.
	          Moreover, if the monitor is guarding end host $\roleP$ and the packet is sent by $\roleP$ with an SSN that is exactly one higher than stored SSN, the monitor rejects the packet \textit{and} permanently closes the session -- because this indicates a session violation (not mere packet retransmission or reordering).\footnote{%
              Closing the session, rather than only rejecting the packet, ensures the bad message is not later accepted as a retransmission after SSN advances.%
            }%

	  \end{enumerate}
\end{itemize}

\begin{figure}
      \begin{minipage}{0.35\textwidth}
        \tiny
        \centering
        \begin{tikzpicture}
            \node[circle, draw=black, double=white, minimum size=7mm, very thick] (m0)                            {m0};
            \node[circle, draw=black, minimum size=7mm, very thick]  (fail)  [right = 1cm of m0] {fail};
            \node[circle, draw=black, minimum size=7mm, very thick]  (m1)  [below = 1cm of m0] {m1};
            \node[circle, draw=black, minimum size=7mm, very thick]  (m2)  [below = 1cm of m1] {m2};
            \node (m3)  [below left = 5mm of m2] {};
            \node (m4)  [below right = 5mm of m2] {};
            \draw[->, thick] (m0.south) -- (m1.north) node[midway, left] {\textbf{Accept} ($\monLabYes{\stEnqueueTag{\roleFmt{c}}{\labFmt{req}}{\tyInt}}$)};
            \path[->, dashed] (m0) edge [loop left] node[midway, left] {\shortstack{\textbf{Accept retransmit}\\or \textbf{Reject}}} (m0);
            \draw[->, dashed] (m0.east) -- (fail.west) node[midway, above] {\shortstack{\textbf{Reject and}\\\textbf{close}}};
            \draw[->, thick] (m1.south) -- (m2.north) node[midway, left] {\textbf{Accept} ($\monLabYes{\stSendTag{\roleFmt{r}}{\labFmt{r\_req}}{\tyInt}}$)};
            \path[->, dashed] (m1) edge [loop left] node[midway, left] {\shortstack{\textbf{Accept retransmit}\\or \textbf{Reject}}} (m0);
            \draw[->, dashed] (m1.east) .. controls +(right:10mm) .. (fail.south) node[midway, above left=1mm and -2mm] {\shortstack{\textbf{Reject and}\\\textbf{close}}};
            \draw[->, thick] (m2.south west) -- (m3.north east) {};
            \draw[->, thick] (m2.south east) -- (m4.north west) {};
            \path[->, dashed] (m2) edge [loop left] node[midway, left] {\shortstack{\textbf{Accept retransmit}\\or \textbf{Reject}}} (m0);
            \draw[->, dashed] (m2.east) .. controls +(right:13mm) .. (fail.south) node[midway, above left=-2mm and 0mm] {\shortstack{\textbf{Reject and}\\\textbf{close}}};
        \end{tikzpicture}
        \captionof{figure}{Initial states and transitions of the TCP-oriented
          monitor generated from \Cref{fig:st-monitor-visual-graph}.}
        \label{fig:implementation-tcp-monitor-visual}
      \end{minipage}
      \hfill%
      \begin{minipage}{0.625\linewidth}
    \centering
    \includegraphics[width=0.9\linewidth]{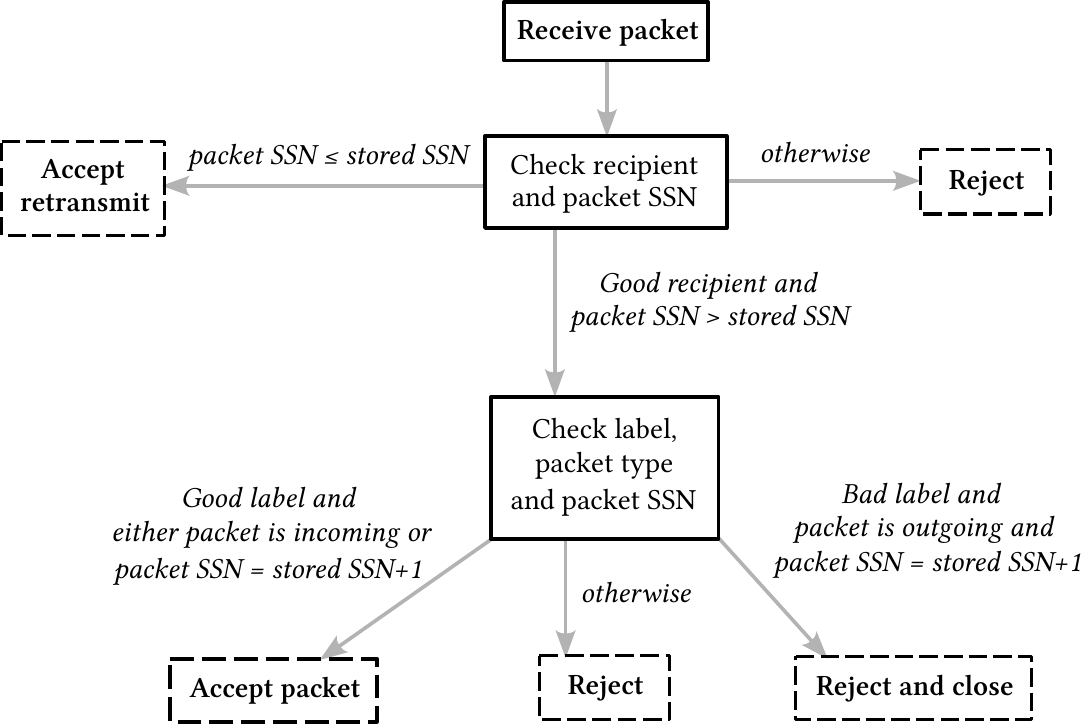}
    \captionof{figure}{A flowchart showing how the TCP-oriented monitor processes incoming packets.}
    \label{fig:tcp-monitor-decision}
      \end{minipage}
\end{figure}

With this approach, our TCP-enabled monitors can handle packet loss (since TCP
eventually retransmits lost and unacknowledged packets), duplication (treated as
a special case of retransmission), and reordering of packets from the same
sender (by simply ignoring out-of-order packets and waiting for their
retransmission).

\ifecoop
\subparagraph{Limitations.}%
\else
\paragraph{Limitations.}%
\fi
Our approach to monitoring TCP connections has limitations:

\begin{itemize}
	\item It requires TCP-oriented monitors to let all \texttt{ACK} packets pass
	through the network. A malicious end host might abuse this to flood the
	network with spoofed ACK messages, which the monitors would not reject. %
	This is a common risk in networks with TCP services, and it may require
	mitigations such as rate limiting to defend against TCP-based DoS
	(denial-of-service) attacks.%
	\item
	Since the TCP-oriented monitors do not reject packets with low session
	sequence numbers, a malicious or faulty sender could in principle send
	messages with low session sequence numbers through the network without having
	them rejected at the border.
		To mitigate this, monitors could rate-limit forwarding of such packets, since they can be treated as retransmissions. The rate-limiting design is orthogonal to our monitoring logic, so we leave it to future work. Packets with lower-than-expected session sequence numbers are ignored by recipient-side programs using the \toolName-generated APIs (\Cref{sec:implementation-st-api}).%
		\item The current version of \toolName does not support packet fragmentation; we discuss remedies in \Cref{sec:conclusion-futurework}.
\end{itemize}

\subsection{API Generation for Session-Monitored Applications}
\label{sec:implementation}
\label{sec:implementation-st-api}

Because our monitors require packets to carry the session header described in
\Cref{sec:implementation-st-headers,sec:implementation-practical-protocols},
\toolName generates an API for writing monitor-compliant applications.
The API hides session-header details and exposes human-readable message-label constants derived from the input session type, rather than numeric label IDs.
Our current prototype targets Python and supports the send/receive style in \Cref{fig:our-approach}.
\change{\#123B: clarify that the API is not ``Scribble-style''}{%
  By design, the API generated by \toolName is minimalistic and does not
  enforce the ordering of send/receive operations specified in the input session type.
  We take advantage of this in our
  evaluation (\Cref{sec:evaluation-qualitative-reject}) to write programs that do \emph{not} follow a
  session type and show that \toolName monitors correctly reject their messages.%
}%

The generated API is based on a \ttt{SessionManager} class that instantiates
\ttt{Session} objects, each representing one session.
Each \ttt{Session} is created with a protocol and a session ID.
The API also maintains a queue of incoming messages, from which
\texttt{Session} objects dequeue via \texttt{recvMsg()}. This prevents the host
program from incorrectly dropping messages (e.g., when a message is delivered
earlier than expected).

\ifecoop
\subparagraph{Automatic Session ID Propagation.} %
\else
\paragraph{Automatic Session ID Propagation.} %
\fi
Beyond basic support for sends and receives, the \ttt{SessionManager} also handles automatic
\textit{propagation} of session IDs.
A program can create a \ttt{Session} without an ID; it then adopts the ID from the first
incoming message that carries an ID not already in use on that \ttt{SessionManager}.
Subsequent messages sent by that \ttt{Session} propagate that ID, and the \ttt{Session}
dequeues only messages with that ID.%

To support automatic ID propagation, the protocol must include an \emph{initiator}
that creates a new session ID and sends the first message(s) to one or more participants, who then
learn and propagate that ID.
All of the multiparty protocols we use in our evaluation (\Cref{sec:evaluation})
follow this pattern.
\section{Proving the Correctness of \toolName Monitors} %
\label{sec:formal-model}

In this section, we establish the correctness of \toolName's monitor synthesis,
i.e., the monitor state machine and MAT outlined at the end of \Cref{sec:implementation-st-headers}. %
We focus on two key challenges for defining the ``core logic'' of our network-layer monitors:
\begin{enumerate}
	\item\label{item:formal:buffer} Monitors must correctly accept or reject
	packets \emph{immediately}, without resorting to buffering -- which is assumed in previous work on session monitoring %
    \change{\#123B: add citations}{%
        \cite{BoChDeHoYo13,BoChDeHoYo17,DBLP:journals/fmsd/DemangeonHHNY15} %
    }%
    but is often infeasible in network devices due to memory limitations.
    \item\label{item:formal:async} Monitors must make correct decisions even
    when packets from different senders are delivered in an order
	that does not match the expectations of the end hosts.
\end{enumerate}

To isolate these challenges, we study monitor synthesis and
correctness in an idealised network where messages are delivered
instantaneously, without fragmentation, loss, duplication, or same-sender reordering; %
as explained in \Cref{sec:implementation-practical-protocols}, \toolName handles
these aspects\footnote{%
    Except packet fragmentation, which is less common in modern networks configured
    with a consistent maximum transmission unit (MTU). Support for fragmentation
    is future work discussed in \Cref{sec:conclusion-futurework}.%
} %
by augmenting the ``core logic'' of the monitors with additional checks and
transitions tailored to TCP as a transport protocol.
Challenges~\ref{item:formal:buffer} and \ref{item:formal:async}, instead,
fundamentally affect monitoring logic, independently of the transport
protocol in use. %

In \Cref{sec:formal-model:session-types-networks} we formalize networks where end-host behaviors are modeled as session types and protocol-violating packets may still reach hosts. In \Cref{sec:formal-model:monitored-st-networks}, monitors block those packets.
\Cref{sec:formal-model:mon-synthesis} then formalizes our monitor synthesis (\Cref{def:monitor-state-st}).
Finally, \Cref{sec:formal-model:correctness} proves soundness: synthesized monitors do not reject traffic when all end hosts follow the enforced protocol (\Cref{thm:net-mon-bisim}).

\subsection{Session Types and Networks}
\label{sec:formal-model:session-types-networks}

In \Cref{def:local-types,def:local-semantics} we model a network end host as a
(local) session type with a \emph{multi-input queue} that stores incoming
messages from multiple senders, while preserving the order of messages from each
sender \cite{DBLP:conf/fsttcs/DemangeonY15}. The idea is that the session type
models the behavior of a message-passing program, while the queue models the
end host's ability to buffer incoming messages (e.g.~in its network stack). This
modeling is standard in the session-types literature, except that
multi-output queues are often used instead of input queues.

\begin{definition}[End host model]%
    \label{def:local-types}%
    The syntax of \textbf{session types with multi-input queues} %
    is:

    \smallskip%
  \centerline{$%
  \begin{array}{r@{\hskip 2mm}c@{\hskip 2mm}l@{\hskip 2mm}l}
    \textstyle%
    \mbox{{Payload type, a.k.a.~sort}} &
    \stS &\coloncolonequals& \tyBool \bnfsep \tyInt \bnfsep \tyString \bnfsep \cdots
    \\[1mm]
    \mbox{{Role}} & \roleP, \roleQ, \roleR &\coloncolonequals& \roleFmt{Info} \bnfsep \roleFmt{Client} \bnfsep \roleFmt{Review} \bnfsep \roleFmt{Details} \bnfsep \cdots
    \\[1mm]
    \mbox{{Session type}} &
    \stT%
    &\coloncolonequals&%
     \stIntSum{i \in I}{\roleP[i]}{\stChoice{\stLab[i]}{\stS[i]} \stSeq \stT[i]}%
     \bnfsep%
     \stExtSum{i \in I}{\roleP}{\stChoice{\stLab[i]}{\stS[i]} \stSeq \stT[i]}%
     \bnfsep \stEnd%
      \bnfsep%
      \stRec{\stRecVar}{\stT}%
      \bnfsep%
      \stRecVar%
     \\[1mm]
   \mbox{Input queue type} &
   \stQ
   &\coloncolonequals&%
    \stQCons{\stQMsg{\roleP}{\stLab}{\stS}}{\stQ}%
    \bnfsep \stQEmpty
    \\[1mm]
    \mbox{Session type with input queue} &
    \stqT
    &\coloncolonequals&%
    \stqPair{\stT}{\stQ}%
   \\[1mm]
  \end{array}
  $}
  \smallskip%

  \noindent%
  where $I \!\neq\! \emptyset$ and the \emph{message labels} $\stLab[i]$ are pairwise distinct. %
  We require session types to be closed %
  and recursion variables to be guarded.
\end{definition}

The type $\stIntSum{i \in I}{\roleP[i]}{\stChoice{\stLab[i]}{\stS[i]} \stSeq \stT[i]}$
represents an \emph{internal choice} where the end host selects one
recipient role $\roleP[i]$ and sends a message with label $\stLab[i]$ carrying payload type $\stS[i]$;
then, the interaction continues as specified in $\stT[i]$.
Dually, $\stExtSum{i \in I}{\roleP}{\stChoice{\stLab[i]}{\stS[i]} \stSeq \stT[i]}$
represents an \emph{external choice} where the end host awaits a message $\stLab[i]$ with payload type $\stS[i]$ from sender $\roleP$;
then, the interaction continues as specified in $\stT[i]$. %
\change{\#123B: clarify roles and payload types lists are just examples}{%
  The lists of possible roles $\roleP, \roleQ, \roleR$ and payload types $\stS$ are provided as examples and can be extended as needed. %
}%
The type $\stEnd$ represents terminated
sessions, while $\stRec{\stRecVar}{\stT}$ and $\stRecVar$ represent recursion. We define $\stqPair{\stT}{\stQ}$ as the
pairing of a session type with a multi-input queue $\stQ$, which may contain
elements of the form $\stQMsg{\roleP}{\stLab}{\stS}$, representing a
message $\stLab$ sent by role $\roleP$ with payload type $\stS$.

\begin{example}
    \label{ex:formal-info-st}
    The formal definition of the session type $\stInfo$ for the role \lstinline|Info| from
    \Cref{fig:InfoLocalSessionType} is the following (for brevity, we shorten role names and message labels):

    \smallskip%
    \noindent%
    $%
        \fontsize{8}{12}\selectfont%
        \stInfo \;=\; \stFmt{\roleFmt{c} \stExtC \labFmt{req}(\tyInt) \stSeq\ \roleFmt{r} \stIntC \labFmt{r\_req}(\tyInt) \stSeq\ \roleFmt{d} \stIntC \labFmt{d\_req}(\tyInt) \stSeq\ \roleFmt{r} \stExtC \labFmt{r\_rsp}(\tyString) \stSeq\ \roleFmt{d} \stExtC \labFmt{d\_rsp}(\tyString) \stSeq\ \roleFmt{c} \stIntC \labFmt{rsp}(\tyString)}
    $%
\end{example}%

\begin{definition}
    \label{def:local-semantics-no-queues}%
    \label{def:local-semantics}\label{def:local-semantics-queues}\label{def:queue-congruence}%
    The labeled transition system (LTS) semantics of session types (without queues) is defined as follows,
    using the labels
    $\stTagNoQueue \coloncolonequals \stOutTag{\roleP}{\stLab}{\stS} \bnfsep \stInTag{\roleP}{\stLab}{\stS}$:

    \smallskip%
    \centerline{\(
      \small%
      \begin{array}{c}
        \Rule
            {k \in I}
            {\transition
                {\stIntSum{i \in I}{\roleP[i]}{\stChoice{\stLab[i]}{\stS[i]} \stSeq \stT[i]}}
                {\stOutTag{\roleP[k]}{\stLab[k]}{\stS[k]}}
                {\stT[k]}}
            {S-IntC}
        \quad
        \Rule
            {k \in I}
            {\transition
                {\stExtSum{i \in I}{\roleP}{\stChoice{\stLab[i]}{\stS[i]} \stSeq \stT[i]}}
                {\stInTag{\roleP}{\stLab[k]}{\stS[k]}}
                {\stT[k]}}
            {S-ExtC}
        \\\\[-5pt]
        \Rule
            {\transition{\subst{\stT}{\stRecVar}{\stRec{\stRecVar}{\stT}}}{\stTagNoQueue}{\stTi}}
            {\transition{\stRec{\stRecVar}{\stT}}{\stTagNoQueue}{\stTi}}
            {S-Rec}
      \end{array}
    \)}%
    \medskip%

    The LTS semantics of session types with input queues is defined as follows:

    \smallskip%
    \centerline{\(
      \small%
      \begin{array}{c}
        \Rule
            {\transition{\stT}{\stOutTag{\roleP}{\stLab}{\stS}}{\stTi}}
            {\transition
                {\stqPair{\stT}{\stQ}}
                {\stSendTag{\roleP}{\stLab}{\stS}}
                {\stqPair{\stTi}{\stQ}}}
            {SQ-Send}
        \quad
        \Rule
            {\transition{\stT}{\stInTag{\roleP}{\stLab}{\stS}}{\stTi}
              \quad \stQ \equiv \stQCons{\stQMsg{\roleP}{\stLab}{\stS}}{\stQi}}
            {\transition
                {\stqPair{\stT}{\stQ}}
                {\stIntTag{\dequeueTag{\roleP}{\stLab}{\stS}}}
                {\stqPair{\stTi}{\stQi}}}
            {SQ-Deq}
        \\\\[-10pt]
        \Rule
            {}
            {\transition
                {\stqPair{\stT}{\stQ}}
                {\stEnqueueTag{\roleP}{\stLab}{\stS}}
                {\stqPair{\stT}{\stQCons{\stQ}{\stQMsg{\roleP}{\stLab}{\stS}}}}}
            {SQ-Recv}
      \end{array}
    \)}%
    \medskip%

    \noindent%
    where $\tau$ denotes an internal transition that does not synchronize with others; in \textsc{SQ-Deq}, $\equiv$ is the smallest congruence s.t.
    $\roleP \neq \roleQ$ implies\; %
        $\stQCons{\stQCons{\stQMsg{\roleP}{\stLab}{\stS}}{\stQMsg{\roleQ}{\stLabi}{\stSi}}}{\stQ} \equiv \stQCons{\stQCons{\stQMsg{\roleQ}{\stLabi}{\stSi}}{\stQMsg{\roleP}{\stLab}{\stS}}}{\stQ}$.\\%
    \change{\#123C: missing definition of $\stExtTag$}{%
        We use the symbol $\stExtTag$ to denote either a label
        $\stSendTag{\roleP}{\stLab}{\stS}$ (send message, emitted by rule \textsc{SQ-Send}) or
        $\stEnqueueTag{\roleP}{\stLab}{\stS}$ (enqueue message, emitted by rule
        \textsc{SQ-Recv}).%
    }%
\end{definition}

In \Cref{def:local-semantics-no-queues}, a session type $\stT$ transitions
by emitting labels representing an internal choice
$\stOutTag{\roleP}{\stLab}{\stS}$ or an external choice
$\stInTag{\roleP}{\stLab}{\stS}$, by rules
\textsc{S-IntC}, \textsc{S-ExtC}, and \textsc{S-Rec}. %
For instance, for the session type $\stInfo$ in \Cref{ex:formal-info-st}, these rules yield the
transition system in \Cref{fig:st-info-visual-graph}.

When $\stT$ is composed with a multi-input queue, %
rule \textsc{SQ-Send} says that internal choices of $\stT$ enable a ``send''
action $\stSendTag{\roleP}{\stLab}{\stS}$. %
Rule \textsc{SQ-Deq} uses a standard queue congruence $\equiv$ allowing for
swapping two queued messages with different senders: this enables the selective
dequeuing and consumption of the oldest queued message from each sender.
Then, rule \textsc{SQ-Deq}
says that the session type can consume the oldest message from $\roleP$ (from the queue head, via congruence $\equiv$)
with an internal action (a.k.a.~``$\tau$-action'') $\stIntTag{\dequeueTag{\roleP}{\stLab}{\stS}}$;
this can happen only if $\stT$ is an external choice that awaits a message $\stLab$ from $\roleP$ with payload type $\stS$,
and the oldest queued message from $\roleP$ satisfies these conditions. %
Finally, rule \textsc{SQ-Recv} allows an arbitrary message $\stQMsg{\roleP}{\stLab}{\stS}$
to be received from the ``outside world'' and appended to the queue,
via a ``receive'' action $\stEnqueueTag{\roleP}{\stLab}{\stS}$.

\begin{example}[Semantics of session types with input queues]
    \label{ex:session-type-queue-reduct}
    Consider the session type from \Cref{ex:formal-info-st}, paired with
    an initially empty input queue:

    \smallskip%
    \centerline{\(%
    \begin{array}{c}
        \small%
        \stqPair
            {\roleFmt{c} \stExtC \labFmt{req}(\tyInt) \stSeq\ \roleFmt{r} \stIntC \labFmt{r\_req}(\tyInt) \stSeq\ \roleFmt{d} \stIntC \labFmt{d\_req}(\tyInt) \stSeq\ \roleFmt{r} \stExtC \labFmt{r\_rsp}(\tyString) \stSeq\ \roleFmt{d} \stExtC \labFmt{d\_rsp}(\tyString) \stSeq\ \roleFmt{c} \stIntC \labFmt{rsp}(\tyString)}
            {\stQEmpty}
    \end{array}
    \)}%
    \smallskip%

    Suppose that $\roleFmt{c}$ sends the expected message with label
    $\labFmt{req}$ to the end host modeled by this session type. First,
    the message is moved to the input queue by rule \tsc{SQ-Recv}, via a transition $\transition{}{\stEnqueueTag{\roleFmt{c}}{\labFmt{req}}{\tyInt}}{}$,
    resulting in the following session type with queue:

    \smallskip%
    \centerline{\(%
    \begin{array}{c}
        \fontsize{8}{12}\selectfont%
        \stqPair
            {\roleFmt{c} \stExtC \labFmt{req}(\tyInt) \stSeq\,\roleFmt{r} \stIntC \labFmt{r\_req}(\tyInt) \stSeq\,\roleFmt{d} \stIntC \labFmt{d\_req}(\tyInt) \stSeq\,\roleFmt{r} \stExtC \labFmt{r\_rsp}(\tyString) \stSeq\,\roleFmt{d} \stExtC \labFmt{d\_rsp}(\tyString) \stSeq\,\roleFmt{c} \stIntC \labFmt{rsp}(\tyString)}
            {\stQCons{\stQMsg{\roleFmt{c}}{\labFmt{req}}{\tyInt}}{\stQEmpty}}
    \end{array}
    \)}%
    \smallskip%

    Since the head of the input queue now contains a message that matches one of the cases
    in the topmost external choice, rule \tsc{SQ-Deq} enables a transition $\transition{}{\stIntTag{\dequeueTag{\roleFmt{c}}{\labFmt{req}}{\tyInt}}}{}$
    which consumes the queued message, resulting in:

    \smallskip%
    \centerline{\(%
    \begin{array}{c}
        \small%
        \stqPair
            {\roleFmt{r} \stIntC \labFmt{r\_req}(\tyInt) \stSeq\ \roleFmt{d} \stIntC \labFmt{d\_req}(\tyInt) \stSeq\ \roleFmt{r} \stExtC \labFmt{r\_rsp}(\tyString) \stSeq\ \roleFmt{d} \stExtC \labFmt{d\_rsp}(\tyString) \stSeq\ \roleFmt{c} \stIntC \labFmt{rsp}(\tyString)}
            {\stQEmpty}
    \end{array}
    \)}%
    \smallskip%

    We can then immediately fire the internal choices by applying~\tsc{SQ-Send}
    twice, with transitions $\transition{}{\stSendTag{\roleFmt{r}}{\labFmt{r\_req}}{\tyInt}}{}$ and $\transition{}{\stSendTag{\roleFmt{d}}{\labFmt{d\_req}}{\tyInt}}{}$,
    resulting in:

    \smallskip%
    \centerline{\(%
    \begin{array}{c}
        \small%
        \stqPair
            {\roleFmt{r} \stExtC \labFmt{r\_rsp}(\tyString) \stSeq\ \roleFmt{d} \stExtC \labFmt{d\_rsp}(\tyString) \stSeq\ \roleFmt{c} \stIntC \labFmt{rsp}(\tyString)}
            {\stQEmpty}
    \end{array}
    \)}%
    \smallskip%

    Now, suppose role $\roleFmt{d}$ sends the expected message with label
    $\labFmt{r\_rsp}$ first. We can enqueue the message with rule \tsc{SQ-Recv}
    and transition
    $\transition{}{\stEnqueueTag{\roleFmt{d}}{\labFmt{d\_rsp}}{\tyString}}{}$,
    leading to the following configuration. Note that the message from
    $\roleFmt{d}$ is at the head of the queue, but cannot be consumed yet
    because it does not match any of the cases in the topmost external choice:

    \smallskip%
    \centerline{\(%
    \begin{array}{c}
        \small%
        \stqPair
            {\roleFmt{r} \stExtC \labFmt{r\_rsp}(\tyString) \stSeq\ \roleFmt{d} \stExtC \labFmt{d\_rsp}(\tyString) \stSeq\ \roleFmt{c} \stIntC \labFmt{rsp}(\tyString)}
            {\stQCons{\stQMsg{\roleFmt{d}}{\labFmt{d\_rsp}}{\tyString}}{\stQEmpty}}
    \end{array}
    \)}%
    \smallskip%

    Once we receive the response message from $\roleFmt{r}$ and enqueue it
    with~\tsc{SQ-Recv}, we can consume both messages by applying~\tsc{SQ-Deq}
    twice. The first application of \tsc{SQ-Deq} below uses queue congruence
    $\equiv$ (\Cref{def:queue-congruence}) to swap the two messages in the
    queue (since they have different senders) and bring the one from
    $\roleFmt{r}$ to the head of the queue, enabling its consumption.

    \smallskip%
    \centerline{\(%
    \begin{array}{c}
        \small%
        \stqPair
            {\roleFmt{r} \stExtC \labFmt{r\_rsp}(\tyString) \stSeq\ \roleFmt{d} \stExtC \labFmt{d\_rsp}(\tyString) \stSeq\ \roleFmt{c} \stIntC \labFmt{rsp}(\tyString)}
            {\stQCons{\stQMsg{\roleFmt{d}}{\labFmt{d\_rsp}}{\tyString}}{\stQEmpty}}\\
        \phantom{\quad \tx{(\tsc{SQ-Recv})}} \transition{}{\stEnqueueTag{\roleFmt{r}}{\labFmt{r\_rsp}}{\tyString}}{} \quad \tx{(\tsc{SQ-Recv})}\\
        \stqPair
            {\roleFmt{r} \stExtC \labFmt{r\_rsp}(\tyString) \stSeq\ \roleFmt{d} \stExtC \labFmt{d\_rsp}(\tyString) \stSeq\ \roleFmt{c} \stIntC \labFmt{rsp}(\tyString)}
            {\stQCons{\stQMsg{\roleFmt{d}}{\labFmt{d\_rsp}}{\tyString}}{\stQCons{\stQMsg{\roleFmt{r}}{\labFmt{r\_rsp}}{\tyString}}{\stQEmpty}}}\\
        \phantom{\quad \tx{(\tsc{SQ-Deq}, \tsc{SQ-Deq})}} \transition{}{\stIntTag{\dequeueTag{\roleFmt{r}}{\labFmt{r\_rsp}}{\tyString}}}{} \transition{}{\stIntTag{\dequeueTag{\roleFmt{d}}{\labFmt{d\_rsp}}{\tyString}}}{} \quad \tx{(\tsc{SQ-Deq}, \tsc{SQ-Deq})}\\
        \stqPair
            {\roleFmt{c} \stIntC \labFmt{rsp}(\tyString)}
            {\stQEmpty}
    \end{array}
    \)}%
    \smallskip%

    Finally, we send the response to $\roleFmt{c}$ and end the protocol:

    \smallskip%
    \hfill$\small%
        \stqPair{\roleFmt{c} \stIntC \labFmt{rsp}(\tyString)}{\stQEmpty}
        \transition{}{\stIntTag{\stSendTag{\roleFmt{c}}{\labFmt{rsp}}{\tyString}}}{}
        \stqPair{\stEnd}{\stQEmpty} \quad \tx{(\tsc{SQ-Send})}
    $%
\end{example}

\begin{example}[Stuck session types and queues due to bad messages]
    \label{ex:bad-messages-unmonitored-networks}%
    Rule \textsc{SQ-Recv} in
    \Cref{def:local-semantics-no-queues} allows enqueuing a message that
    the session type can never consume: this models the case
    where an unexpected message is delivered to the end host from the
    surrounding network. %
    E.g., consider this configuration from \Cref{ex:session-type-queue-reduct}:

    \smallskip%
    \centerline{\(%
    \begin{array}{c}
        \small%
        \stqPair
            {\roleFmt{r} \stExtC \labFmt{r\_rsp}(\tyString) \stSeq\ \roleFmt{d} \stExtC \labFmt{d\_rsp}(\tyString) \stSeq\ \roleFmt{c} \stIntC \labFmt{rsp}(\tyString)}
            {\stQEmpty}
    \end{array}
    \)}%
    \smallskip%
    If the surrounding network now delivers a message with label
    $\labFmt{unexpected}$ from role $\roleFmt{r}$, we can enqueue it using
    \textsc{SQ-Recv}, with transition
    $\transition{}{\stEnqueueTag{\roleFmt{r}}{\labFmt{unexpected}}{\tyInt}}{}$
    leading to:

    \smallskip%
    \centerline{\(%
    \begin{array}{c}
        \small%
        \stqPair
            {\roleFmt{r} \stExtC \labFmt{r\_rsp}(\tyString) \stSeq\ \roleFmt{d} \stExtC \labFmt{d\_rsp}(\tyString) \stSeq\ \roleFmt{c} \stIntC \labFmt{rsp}(\tyString)}
            {\stQCons{\stQMsg{\roleFmt{r}}{\labFmt{unexpected}}{\tyInt}}{\stQEmpty}}
    \end{array}
    \)}%
    \smallskip%
    This session type with queue is now stuck: it can only proceed by dequeuing a message
    with label $\labFmt{r\_rsp}$ from $\roleFmt{r}$, but the oldest queued message from $\roleFmt{r}$ has label
    $\labFmt{unexpected}$. If more messages are queued, they will not be consumed either.
\end{example}

We model networks in \Cref{def:network} as parallel compositions of
roles with an end host behavior represented as a session type with an input queue.

\begin{definition}[Network]
    \label{def:network}
    \label{def:network-semantics}
    We define a network as:

    \smallskip\centerline{\(%
        \netN \;\;\coloncolonequals\;\;
        \netN \netPar \netNi
        \;\bnfsep\;
        \netPair{\roleP[]}{\stqT[]}
        \qquad\text{where $\roleP$ does not occur in $\stqT$}
    \)}%

    \noindent%
    with the following LTS semantics, using the labels
    \(\netLab \;\coloncolonequals\; \netIOLab{\roleP}{\stExtTag} \bnfsep \netDequeueLab{\roleP}{\roleQ}{\stLab}{\stS} \bnfsep \netCommLab{\roleP}{\roleQ}{\stLab}{\stS}\)\;
    with $\stExtTag$ from \Cref{def:local-semantics-queues}: (for brevity we omit the symmetric rules of \textsc{Net-Par} and \textsc{Net-Comm}):

    \smallskip%
    \centerline{\(%
    \begin{array}{c}
        \Rule
            {\transition{\stqT}{\stExtTag}{\stqTi}}
            {\transition
                {\netPair{\roleP}{\stqT}}
                {\netIOLab{\roleP}{\stExtTag}}
                {\netPair{\roleP}{\stqTi}}
            }
            {Net-$\stExtTag$}
        \qquad
        \Rule
            {\transition{\stqT}{\stIntTag{\dequeueTag{\roleQ}{\stLab}{\stS}}}{\stqTi}}
            {\transition
                {\netPair{\roleP}{\stqT}}
                {\netDequeueLab{\roleP}{\roleQ}{\stLab}{\stS}}
                {\netPair{\roleP}{\stqTi}}
            }
            {Net-Deq}\\
        \Rule
            {
                \transition{\netN[1]}{\netLab}{\netNi[1]}
            }
            {\transition{\netN[1] \netPar \netN[2]\;}{\netLab}{\;\netNi[1] \netPar \netN[2]}}
            {Net-Par}
        \qquad
        \Rule
            {
                \transition{\netN[1]}{\netIOLab{\roleP}{\stSendTag{\roleQ}{\stLab}{\stS}}}{\netNi[1]}
                \quad
                \transition{\netN[2]}{\netIOLab{\roleQ}{\stEnqueueTag{\roleP}{\stLab}{\stS}}}{\netNi[2]}
            }
            {\transition{\netN[1] \netPar \netN[2]\;}{\netCommLab{\roleP}{\roleQ}{\stLab}{\stS}}{\;\netNi[1] \netPar \netNi[2]}}
            {Net-Comm}
    \end{array}
    \)}%
    \vspace{-6mm}%
\end{definition}

In \Cref{def:network-semantics} above, rules \textsc{Net-$\stExtTag$} and
\textsc{Net-Deq} decorate a transition of a session type with input queue
$\stqT$ by including the role $\roleP$ that emitted the transition; %
specifically, \textsc{Net-$\stExtTag$} is used when $\stqT$
emits or enqueues a message (via rules~\textsc{SQ-Send} or \textsc{SQ-Recv} in \Cref{def:local-semantics-queues}),
while \textsc{Net-Deq} is used when $\stqT$ internally consumes a message from its queue
(via rule~\textsc{SQ-Deq} in \Cref{def:local-semantics-queues}). %
Rule \textsc{Net-Comm} says that if in the sub-network $\netN[1]$ there is $\roleP$
sending a message to $\roleQ$, and in the sub-network $\netN[2]$ there is $\roleQ$
that can receive that message, then the network advances with a communication
transition $\netCommLab{\roleP}{\roleQ}{\stLab}{\stS}$. %
The rightmost premise of rule~\textsc{Net-Comm} implies that the message
from $\roleP$ is added to $\roleQ$'s input queue, by \textsc{Net-$\stExtTag$} and
\textsc{SQ-Recv} in \Cref{def:local-semantics-queues}. %
Also, by the same rules, \emph{any} $\netN[2]$ that includes $\roleQ$ can always receive
\emph{any} message from \emph{any} $\roleP$, i.e., rule
\textsc{Net-Comm} allows invalid messages to be sent/received between end
hosts; see \Cref{ex:bad-network-message}.

\begin{example}[Bad messages in an unmonitored network]
    \label{ex:bad-network-message}%
    Consider the following network with roles $\roleP$, $\roleQ$, and $\roleR$: (for brevity, we omit the payload types and $\stEnd$s)

    \smallskip%
    \centerline{\(%
    \fontsize{8}{12}\selectfont
    \begin{array}{c}
        \netPair{\roleP}{\stqPair{\stT[\roleP]}{\stQEmpty}} \;\netPar\;%
        \netPair{\roleQ}{\stqPair{\stT[\roleQ]}{\stQEmpty}} \;\netPar\;%
        \netPair{\roleR}{\stqPair{\stT[\roleR]}{\stQEmpty}}%
        \qquad\text{where}\quad%
        \small%
        \stT[\roleP] =
           \stFmt{\sum} \left\{\begin{array}{@{}l@{}}
                        \roleQ \stIntC \stLabFmt{a} \\
                        \roleQ \stIntC \stLabFmt{b}
            \end{array}\right.;
        \quad%
        \stT[\roleQ] =
            \roleP \stExtC \left\{\begin{array}{@{}l@{}}
                        \stLabFmt{a} \stSeq \roleR \stExtC \stLabFmt{a'} \\
                        \stLabFmt{b} \stSeq \roleR \stExtC \stLabFmt{b'}
            \end{array}\right.;
        \quad%
        \stT[\roleR] = \roleQ \stIntC \stLabFmt{a'}
    \end{array}
    \)}%
    \smallskip%

    \noindent%
    Here, $\stT[\roleP]$ can send either $\stLabFmt{a}$ or $\stLabFmt{b}$ to $\roleQ$.
    Meanwhile, $\stT[\roleQ]$ expects to receive either $\stLabFmt{a}$ from $\roleP$ and
    then $\stLabFmt{a'}$ from $\roleR$, or $\stLabFmt{b}$ from $\roleP$ and then
    $\stLabFmt{b'}$ from $\roleR$. %
    Instead, $\stT[\roleR]$ just sends $\stLabFmt{a'}$ to $\roleQ$.

    By \Cref{def:network-semantics} the network could reduce as follows,
    with $\stLabFmt{a}$ sent by $\roleP$ and enqueued by $\roleQ$:

    \smallskip%
    \centerline{\(%
    \small%
    \begin{array}{c}
        \Rule
            {\Rule
                {\transition
                    {\netPair{\roleP}{\stqPair{\stT[\roleP]}{\stQEmpty}}}
                    {\netIOLab{\roleP}{\stSendTag{\roleQ}{\stLabFmt{\stLabFmt{a}}}{}}}
                    {\netPair{\roleP}{\stqPair{\stEnd}{\stQEmpty}}}
                \qquad
                \transition
                    {\netPair{\roleQ}{\stqPair{\stT[\roleQ]}{\stQEmpty}}}
                    {\netIOLab{\roleQ}{\stEnqueueTag{\roleP}{\stLabFmt{\stLabFmt{a}}}{}}}
                    {\netPair{\roleQ}{\stqPair{\stT[\roleQ]}{\stQMsg{\roleP}{\stLabFmt{\stLabFmt{a}}}{}}}}}
                {\transition
                    {\netPair{\roleP}{\stqPair{\stT[\roleP]}{\stQEmpty}} \netPar \netPair{\roleQ}{\stqPair{\stT[\roleQ]}{\stQEmpty}}\;}
                    {\netCommLab{\roleP}{\roleQ}{\stLabFmt{\stLabFmt{a}}}{}}
                    {\;\netPair{\roleP}{\stqPair{\stEnd}{\stQEmpty}} \netPar \netPair{\roleQ}{\stqPair{\stT[\roleQ]}{\stQMsg{\roleP}{\stLabFmt{\stLabFmt{a}}}{}}}}}
                {Net-Comm}}
            {\transition
                {\netPair{\roleP}{\stqPair{\stT[\roleP]}{\stQEmpty}} \netPar \netPair{\roleQ}{\stqPair{\stT[\roleQ]}{\stQEmpty}} \netPar \netPair{\roleR}{\stqPair{\stT[\roleR]}{\stQEmpty}}}
                {\netCommLab{\roleP}{\roleQ}{\stLabFmt{\stLabFmt{a}}}{}}
                {\netPair{\roleP}{\stqPair{\stEnd}{\stQEmpty}} \netPar \netPair{\roleQ}{\stqPair{\stT[\roleQ]}{\stQMsg{\roleP}{\stLabFmt{\stLabFmt{a}}}{}}} \netPar \netPair{\roleR}{\stqPair{\stT[\roleR]}{\stQEmpty}}}}
            {Net-Par}
    \end{array}
    \)}%
    \smallskip%

    \noindent%
    Then, $\roleQ$ can consume the message $\stLabFmt{a}$ from $\roleP$, and
    later enqueue and consume $\stLabFmt{a'}$ from $\roleR$: in this case, the
    network reaches a successful final state where every end host is $\stEnd$
    with an empty queue.
    Similarly, if $\roleR$ sends $\stLabFmt{a'}$ to $\roleQ$ first, and $\roleP$
    sends $\stLabFmt{a}$ afterwards, then $\roleQ$ can consume both messages (like the last transitions of \Cref{ex:session-type-queue-reduct}) reaching success. Therefore, in both cases, $\stLabFmt{a'}$ from
    $\roleR$ is a ``good message'' for $\roleQ$.
    However, if $\roleP$ sends message $\stLabFmt{b}$ to $\roleQ$, then
    $\stT[\roleQ]$ consumes it and enters the branch where it expects
    $\stLabFmt{b'}$ from $\roleR$---but $\roleR$ sends $\stLabFmt{a'}$
    instead:

    \newcommand{\startState}{%
        \netPair{\roleP}{\stqPair{\stT[\roleP]}{\stQEmpty}} \netPar \netPair{\roleQ}{\stqPair{\stT[\roleQ]}{\stQEmpty}} \netPar \netPair{\roleR}{\stqPair{\stT[\roleR]}{\stQEmpty}}%
    }%
    \smallskip%
    \centerline{\(%
    \fontsize{8}{12}\selectfont%
    \begin{array}{l@{\;}l}
        \transition
            {\startState\;}
            {\netCommLab{\roleP}{\roleQ}{\stLabFmt{\stLabFmt{b}}}{}}
            {\;\netPair{\roleP}{\stqPair{\stEnd}{\stQEmpty}} \netPar \netPair{\roleQ}{\stqPair{\stT[\roleQ]}{\stQMsg{\roleP}{\stLabFmt{\stLabFmt{b}}}{}}} \netPar \netPair{\roleR}{\stqPair{\stT[\roleR]}{\stQEmpty}}}
            &
            \text{\tiny(by \textsc{Net-Comm}, \textsc{Net-Par})}
            \\
        \transition
            {\phantom{\startState}\;}
            {\netDequeueLab{\roleQ}{\roleP}{\stLabFmt{b}}{}}
            {\;\;\,\netPair{\roleP}{\stqPair{\stEnd}{\stQEmpty}} \netPar \netPair{\roleQ}{\stqPair{\roleR \stExtC \stLabFmt{b'}}{\stQEmpty}} \netPar \netPair{\roleR}{\stqPair{\stT[\roleR]}{\stQEmpty}}}
            &
            \text{\tiny(by \textsc{Net-Deq}, \textsc{Net-Par})}
            \\
        \transition
            {\phantom{\startState}\;}
            {\netCommLab{\roleR}{\roleQ}{\stLabFmt{a'}}{}}
            {\;\netPair{\roleP}{\stqPair{\stEnd}{\stQEmpty}} \netPar \netPair{\roleQ}{\stqPair{\roleR \stExtC \stLabFmt{b'}}{\stQMsg{\roleR}{\stLabFmt{a'}}{}}} \netPar \netPair{\roleR}{\stqPair{\stEnd}{\stQEmpty}}}
            &
            \text{\tiny(by \textsc{Net-Comm}, \textsc{Net-Par})}
    \end{array}
    \)}%
    \smallskip%

    \noindent%
    Therefore, $\roleQ$ cannot consume $\roleR$'s message and gets stuck --
    similarly to \Cref{ex:bad-messages-unmonitored-networks}. %
    Note that in this execution, unlike the cases above, $\stLabFmt{a'}$ from
    $\roleR$ is a ``bad message'' for $\roleQ$.
\end{example}

\subsection{Monitored Networks}
\label{sec:formal-model:monitored-st-networks}

To model the pairing of an end host with a monitor that intercepts all its communications, %
in \Cref{def:st-monitor} below we combine a session type with queue $\stqT$ and a
generic \emph{monitor} $\monM$. For now we only assume that $\monM$ has an LTS
semantics with labels of the form
$\monLabYes{\stSendTag{\roleP}{\stLab}{\stS}}$ /
$\monLabYes{\stEnqueueTag{\roleP}{\stLab}{\stS}}$ to signal that $\monM$ accepts
the corresponding send/receive action by $\stqT$,
and $\monLabNo{\stSendTag{\roleP}{\stLab}{\stS}}$ /
$\monLabNo{\stEnqueueTag{\roleP}{\stLab}{\stS}}$ to signal that $\monM$ rejects
them. %
(We present a concrete instantiation of $\monM$ in \Cref{def:monitor-state-st}
below.)

\begin{definition}
    \label{def:st-monitor}%
    \label{def:st-monitor-semantics}%
    We define a \textbf{monitored session type (with input queue)} as:

    \smallskip%
    \centerline{$%
    \begin{array}{r@{\hskip 2mm}c@{\hskip 2mm}l@{\hskip 2mm}l}
        \textstyle%
        \monT%
            &\coloncolonequals&%
            \mstPair{\stqT}{\monM}
            &\\[1mm]
    \end{array}
    $}%
    \smallskip%
    \noindent%
    where $\monM$ is a monitor. %
    We also define the monitored session type semantics:

    \smallskip%
    \centerline{\(%
    \begin{array}{c}
        \Rule
            {\transition
                {\stqT}
                {\stExtTag}
                {\stqTi}
            \qquad
            \transition
                {\monM}
                {\monLabYes{\stExtTag}}
                {\monMi}}
            {\transition
                {\mstPair{\stqT}{\monM}}
                {\monLabYes{\stExtTag}}
                {\mstPair{\stqTi}{\monMi}}}
            {M-Good}\qquad
       \Rule
           {\transition
               {\stqT}
               {\stOutTag{\roleP}{\stLab}{\stS}}
               {\stqTi}
           \quad
           \transition
               {\monM}
               {\monLabNo{\stOutTag{\roleP}{\stLab}{\stS}}}
               {\monMi}}
           {\transition
               {\mstPair{\stqT}{\monM}}
               {\monLabNo{\stOutTag{\roleP}{\stLab}{\stS}}}
               {\mstPair{\stqTi}{\monMi}}}
           {M-BadOut}\\
        \Rule
            {\transition
                {\monM}
                {\monLabNo{\stInTag{\roleP}{\stLab}{\stS}}}
                {\monMi}}
            {\transition
                {\mstPair{\stqT}{\monM}}
                {\monLabNo{\stInTag{\roleP}{\stLab}{\stS}}}
                {\mstPair{\stqT}{\monMi}}}
            {M-BadIn}\qquad
        \Rule
            {\transition
                {\stqT}{\stIntTag{\dequeueTag{\roleP}{\stLab}{\stS}}}{\stqTi}}
            {\transition
                {\mstPair{\stqT}{\monM}}
                {\monLabSilent{\dequeueTag{\roleP}{\stLab}{\stS}}}
                {\mstPair{\stqTi}{\monM}}}
            {M-Dequeue}
    \end{array}
    \)}%
    \vspace{-5mm}%
\end{definition}

By rule \textsc{M-Good} in \Cref{def:st-monitor-semantics}, the pair
$\mstPair{\stqT}{\monM}$ performs a transition $\monLabYes{\stExtTag}$ if
$\monM$ explicitly accepts the send/receive action $\stExtTag$ emitted by
$\stqT$. %
Rule \textsc{M-BadOut} says that $\monM$ can reject and drop a message sent by
$\stqT$. Rule \textsc{M-BadIn} says that $\monM$ can reject and drop an incoming
message before it lands in $\stqT$'s input queue. %
By rule \textsc{M-Dequeue}, $\monM$ cannot interfere with the
internal action that $\stqT$ performs when consuming a message from its input
queue. %
In \Cref{def:monitored-network} below we monitor networks
(\Cref{def:network}), by adding a monitor to each end host.

\begin{definition}[Monitored Network]
    \label{def:monitored-network}
    \label{def:monitored-network-semantics}
    We define a monitored network as:

    \smallskip\centerline{\(%
        \mnetN \;\;\coloncolonequals\;\;
        \mnetN \mnetPar \mnetNi
        \;\bnfsep\;
        \mnetPair{\roleP[]}{\monT[]}
        \qquad\text{where $\roleP$ does not occur in $\monT$}
    \)}%
    \smallskip

    \noindent%
    with the following semantics, using the labels
    \(\mnetLab \coloncolonequals \mnetIOLabYes{\roleP}{\stExtTag} \bnfsep \mnetIOLabNo{\roleP}{\stExtTag} \bnfsep \mnetDequeueLab{\roleP}{\roleQ}{\stLab}{\stS} \bnfsep \mnetCommLab{\roleP}{\roleQ}{\stLab}{\stS}\)\;
    (for brevity we omit the symmetric rules of \textsc{MNet-Par} and \textsc{MNet-Comm}):

    \smallskip%
    \centerline{\(%
    \small%
    \begin{array}{c}
        \Rule
            {\transition{\monT}{\monLabYes{\stExtTag}}{\monTi}}
            {\transition
                {\mnetPair{\roleP}{\monT}}
                {\mnetIOLabYes{\roleP}{\stExtTag}}
                {\netPair{\roleP}{\monTi}}
            }
            {MNet-$\stExtTag$-Good}
        \quad%
        \Rule
            {\transition{\monT}{\monLabNo{\stExtTag}}{\monTi}}
            {\transition
                {\mnetPair{\roleP}{\monT}}
                {\mnetIOLabNo{\roleP}{\stExtTag}}
                {\netPair{\roleP}{\monTi}}
            }
            {MNet-$\stExtTag$-Bad}
        \quad%
        \Rule
            {
                \transition{\mnetN[1]}{\mnetLab}{\mnetNi[1]}
            }
            {\transition{\mnetN[1] \mnetPar \mnetN[2]\;}{\mnetLab}{\;\mnetNi[1] \mnetPar \mnetN[2]}}
            {MNet-Par}
        \\
        \Rule
            {\transition{\monT}{\monLabSilent{\dequeueTag{\roleQ}{\stLab}{\stS}}}{\monTi}}
            {\transition
                {\netPair{\roleP}{\monT}}
                {\mnetDequeueLab{\roleP}{\roleQ}{\stLab}{\stS}}
                {\netPair{\roleP}{\monTi}}
            }
            {MNet-Deq}\quad%
        \Rule
            {
                \transition{\mnetN[1]}{\mnetIOLabYes{\roleP}{\stSendTag{\roleQ}{\stLab}{\stS}}}{\mnetNi[1]}
                \qquad
                \transition{\mnetN[2]}{\mnetIOLabYes{\roleQ}{\stEnqueueTag{\roleP}{\stLab}{\stS}}}{\mnetNi[2]}
            }
            {\transition{\mnetN[1] \mnetPar \mnetN[2]\;}{\mnetCommLab{\roleP}{\roleQ}{\stLab}{\stS}}{\;\mnetNi[1] \mnetPar \mnetNi[2]}}
            {MNet-Comm}
    \end{array}
    \)}%
    \vspace{-3mm}%
\end{definition}

In \Cref{def:monitored-network-semantics}, each end host is modeled as a role
$\roleP$ with a session type (with an input queue) representing the end host
behavior, equipped with a monitor. Rules
$\textsc{MNet-$\stExtTag$-Good}$ and $\textsc{MNet-$\stExtTag$-Bad}$ annotate
accepted and rejected actions (with labels $\monLabYes{\stExtTag}$ and
$\monLabNo{\stExtTag}$ from \Cref{def:st-monitor-semantics}) with the end host
role $\roleP$ where such actions occurred.
The rule $\textsc{MNet-Comm}$ is different from \textsc{Net-Comm}
in~\Cref{def:network-semantics}, because it only allows communications between
two end hosts if their respective monitors
accept their outgoing and incoming message; %
\change{\#123C: clarify \textsc{MNet-Comm}}{%
    this is because the transitions in the premises of rule
    $\textsc{MNet-Comm}$ (denoting the acceptance of a message send and
    enqueuing) can only be fired via rule $\textsc{MNet-$\stExtTag$-Good}$.%
}

\subsection{Synthesizing Network Layer Monitors from Session Types}
\label{sec:formal-model:mon-synthesis}%

We now provide a concrete instantiation of
monitor $\monM$ from \Cref{def:st-monitor}: %
in \Cref{def:monitor-state-st} we formalize how to synthesize a
network-layer \emph{monitor state $\enc{\stT}$} from a session type $\stT$. %
In \Cref{sec:implementation-st-headers} (\Cref{fig:st-monitor-visual}) we
anticipated that the monitor state machine $\enc{\stT}$ differs from
that of $\stT$. This is because
$\enc{\stT}$ has to meet several non-trivial requirements:

\begin{enumerate}[label=\textbf{(R\arabic*)},leftmargin=10mm]
\item\label{req:stmon:stateful-accept} $\enc{\stT}$ must accept all messages that an end
    host implementing $\stT$ may send/receive to/from the network, depending on
    $\stT$'s state;
\item\label{req:stmon:stateful-reject} $\enc{\stT}$ should reject invalid messages that an end host
    implementing $\stT$ should \emph{not} send/receive, again depending on
    $\stT$'s state;
\item\label{req:stmon:out-of-order} $\enc{\stT}$ may receive messages from different senders in an order that
    does not match $\stT$'s expectations (due to the network semantics),\footnote{%
        In \Cref{sec:implementation-practical-protocols} we also addressed the
        issue of out-of-order delivery of messages \emph{from the same sender},
        which is orthogonal and handled by protocols like TCP.%
    } and yet, $\enc{\stT}$ must accept the valid (``good'') messages while still
    rejecting the invalid (``bad'') ones;
\item\label{req:stmon:immediate-decision} $\enc{\stT}$ must decide whether to accept or reject a message
    \emph{immediately}, \emph{without buffering}, to accommodate the limited
    memory and processing power of most network devices.
\end{enumerate}

\begin{definition}[Session-type-based network monitor]
    \label{def:monitor-state-st}%
    \label{def:monitor-state-st-queue}%
    We write $\enc{\stT}$ to represent the state of a monitor $\monM$ based on a
    session type $\stT$, with semantics given by the following rules:

    \change{\#123B: overflowing text area (too small?)}{%
    \smallskip%
    \centerline{\(%
    \fontsize{7}{7}\selectfont%
    \begin{array}{c}
        {\Rule
            {k \in I}
            {\transition
                {\enc{\stIntSum{i \in I}{\roleP[i]}{\stChoice{\stLab[i]}{\stS[i]} \stSeq \stT[i]}}}
                {\monLabYes{\stSendTag{\roleP[k]}{\stLab[k]}{\stS[k]}}}
                {\enc{\stT[k]}}}
            {STMon-IntC}
        \quad%
        \Rule
            {k \in I}
            {\transition
                {\enc{\stExtSum{i \in I}{\roleP}{\stChoice{\stLab[i]}{\stS[i]} \stSeq \stT[i]}}}
                {\monLabYes{\stEnqueueTag{\roleP}{\stLab[k]}{\stS[k]}}}
                {\enc{\stT[k]}}}
            {STMon-ExtC}}\\[10mm]
        \Rule
            {
                \exists \text{ maximal } \emptyset \neq K \subseteq I,\roleQ,\stLab,\stS:%
                \quad%
                \forall i \in I: \roleQ \neq \roleP[i]%
                \quad%
                \forall k \in K: \exists \stTi[k]:%
                {\transition
                    {\enc{\stT[k]}}
                    {\monLabYes{\stEnqueueTag{\roleQ}{\stLab}{\stS}}}
                    {\enc{\stTi[k]}}}
            }
            {\transition
                {\enc{\stIntSum{i \in I}{\roleP[i]}{\stChoice{\stLab[i]}{\stS[i]} \stSeq \stT[i]}}}
                {\monLabYes{\stEnqueueTag{\roleQ}{\stLab}{\stS}}}
                {\enc{\stIntSum{k \in K}{\roleP[k]}{\stChoice{\stLab[k]}{\stS[k]} \stSeq \stTi[k]}}}}
            {STMon-IntPfx}\\[10mm]
        \Rule
            {
                \exists \text{ maximal } \emptyset \neq K \subseteq I,\roleQ,\stLab,\stS:%
                \quad%
                \roleQ \neq \roleP%
                \quad%
                \forall k \in K: \exists \stTi[k]:%
                {\transition
                    {\enc{\stT[k]}}
                    {\monLabYes{\stEnqueueTag{\roleQ}{\stLab}{\stS}}}
                    {\enc{\stTi[k]}}}
            }
            {\transition
                {\enc{\stExtSum{i \in I}{\roleP}{\stChoice{\stLab[i]}{\stS[i]} \stSeq \stT[i]}}}
                {\monLabYes{\stEnqueueTag{\roleQ}{\stLab}{\stS}}}
                {\enc{\stExtSum{k \in K}{\roleP}{\stChoice{\stLab[k]}{\stS[k]} \stSeq \stTi[k]}}}}
            {STMon-ExtPfx}\\[10mm]
        \Rule
            {\transition{\enc{\subst{\stT}{\stRecVar}{\stRec{\stRecVar}{\stT}}}}{\monLabYes{\stExtTag}}{\enc{\stTi}}}
            {\transition{\enc{\stRec{\stRecVar}{\stT}}}{\monLabYes{\stExtTag}}{\enc{\stTi}}}
            {STMon-Rec}\qquad
       \Rule
           {\noTransition{\enc{\stT}}{\monLabYes{\stExtTag}}}
           {\transition{\enc{\stT}}{\monLabNo{\stExtTag}}{\enc{\stT}}}
           {STMon-Bad}
    \end{array}
    \small%
    \)}
    \smallskip}%

    \noindent%
    We write $\enc{\stqPair{\stT}{\stQ}}$ to represent the monitor state defined
    as follows:

\smallskip\centerline{\(%
        \enc{\stqPair{\stT}{\stQ}} = \enc{\stTi}
        \qquad\text{if and only if}\qquad
        \transition{\enc{\stT}}{\monLabYes{\enqueueTag{\stQ}}}{\enc{\stTi}}
    \)}%
    \smallskip

    \noindent where %
    $\transition{\enc{\stT}}{
      \enqueueTag\monLabYes{(\stQConsWide{\stQMsgWide{\roleP[i]}{\stLab[1]}{\stT[1]}}{\stQConsWide{\,\cdots\,}{\stQMsgWide{\roleP[n]}{\stLab[n]}{\stT[n]}}})}
    }{\enc{\stTi}}$ %
    iff %
    $\transition{\enc{\stT}}{{\monLabYes{\stEnqueueTag{\roleP[1]}{\stLab[1]}{\stS[1]}}}}{
        \transition{\cdots}{{\monLabYes{\stEnqueueTag{\roleP[n]}{\stLab[n]}{\stS[n]}}}}{
            \enc{\stTi}
        }
    }$.
\end{definition}

By rule~\textsc{STMon-IntC} in \Cref{def:monitor-state-st}, if $\stT$ is an
internal choice, then $\enc{\stT}$ accepts the corresponding send actions and
updates its state. %
Dually, by rule~\textsc{STMon-ExtC}, if $\stT$ is an
external choice, then $\enc{\stT}$ accepts the corresponding receive actions and
updates its state. This reflects requirements \ref{req:stmon:stateful-accept} and \ref{req:stmon:immediate-decision}.

Rules~\textsc{STMon-IntPfx} and \textsc{STMon-ExtPfx} allow $\enc{\stT}$ to
accept an incoming message $\stLab$ with payload type $\stS$ from role $\roleQ$,
even if the shape of $\stT$ does not expect a message from $\roleQ$ right now. %
This is necessary to satisfy requirements \ref{req:stmon:out-of-order} and \ref{req:stmon:immediate-decision}.
By the premises of these rules, acceptance is allowed only if %
role $\roleQ$ is \emph{not} an immediate recipient/sender in the internal/external choice $\stT$, %
and at least one monitor $\enc{\stT[k]}$ (where $\stT[k]$ is a continuation of $\stT$)
can indeed accept that message by firing a transition %
$\transition{\enc{\stT[k]}}{\monLabYes{\stEnqueueTag{\roleQ}{\stLab}{\stS}}}{\enc{\stTi[k]}}$.
If these conditions hold, then $\enc{\stT}$ performs the same accepting transition and becomes $\enc{\stTi}$, where $\stTi$ has the same shape as $\stT$, except:
\begin{enumerate}
\item $\stTi$ keeps only the branches of $\stT$ (indexed by the maximal set
    $K \subseteq I$) that could accept $\stEnqueueTag{\roleQ}{\stLab}{\stS}$
    in their future transitions; and
\item The continuation of each kept branch is reduced to $\stTi[k]$ (for $k \in K$).
\end{enumerate}%
In other words, after accepting $\stEnqueueTag{\roleQ}{\stLab}{\stS}$,
rules \textsc{STMon-IntPfx} and \textsc{STMon-ExtPfx}
``prune'' $\stT$ by removing all the branches that, if taken, could not
possibly accept $\stEnqueueTag{\roleQ}{\stLab}{\stS}$ in their future transitions. %
Note that, %
these rules can fire only if $K \neq \emptyset$:
there must therefore be at least one branch of $\stT$ that can accept $\stEnqueueTag{\roleQ}{\stLab}{\stS}$
in its future transitions.

Rule~\textsc{STMon-Rec} unfolds recursion. Rule~\textsc{STMon-Bad} rejects
any send/receive action that $\enc{\stT}$ does not explicitly accept,
per requirements \ref{req:stmon:stateful-reject} and \ref{req:stmon:immediate-decision}.
Finally, $\enc{\stqPair{\stT}{\stQ}}$ represents the monitor state obtained by
feeding all messages in $\stQ$ as inputs to $\enc{\stT}$, which must
accept all of them: i.e., $\enc{\stqPair{\stT}{\stQ}}$ is undefined if
$\enc{\stT}$ does not accept some message in $\stQ$.

\paragraph{Examples.} To illustrate how our session monitors work, we present
three examples:%
\begin{itemize}
    \item \Cref{ex:st-monitor-reductions} shows how a monitor $\enc{\stT}$ can
        accept messages that arrive in an order different from $\stT$'s expectations, and
        how doing so restricts the inputs and outputs it will accept next.
    \item \Cref{ex:complex-mon-st-progression} revisits
        \Cref{ex:session-type-queue-reduct} to track how a monitor evolves
        alongside the end host's session type and input queue.
    \item \Cref{ex:infinite-state-monitor} shows that some session types yield
        infinite-state monitors under \Cref{def:monitor-state-st}, which
        cannot be represented using a finite number of states in P4
        (\Cref{sec:design}).%
    \item A further monitor-reduction example appears in
        \Cref{ex:st-monitor-reductions-2} in the appendix.
\end{itemize}

\begin{example}
    \label{ex:st-monitor-reductions}%
    Consider the type $\stT[\roleQ]$ from \Cref{ex:bad-network-message}: %
    (for brevity, we omit the payload types)

    \smallskip%
    \centerline{\(%
    \stT[\roleQ] \;=\;
            \roleP \stExtC \left\{\begin{array}{@{}l@{}}
                        \stLabFmt{a} \stSeq \roleR \stExtC \stLabFmt{a'} \stSeq \stEnd \\
                        \stLabFmt{b} \stSeq \roleR \stExtC \stLabFmt{b'} \stSeq \stEnd
            \end{array}\right.
    \)}%
    \smallskip%

    By \Cref{def:monitor-state-st}, the corresponding monitor $\enc{\stT[\roleQ]}$ can
    immediately accept not only the two top-level messages from $\roleP$, but also
    the successive messages from $\roleR$ -- which appear later in $\stT[\roleQ]$, but may be
    delivered earlier by the surrounding network. %
    For the top-level inputs we have:

    \smallskip%
    \centerline{\(%
        \fontsize{9}{12}\selectfont%
            {\transition
                {\enc{\stT[\roleQ]}}
                {\monLabYes{\stInTag{\roleP}{\stLabFmt{a}}{}}}
                {\enc{\roleR \stExtC \stLabFmt{a'} \stSeq \stEnd}}}
        \quad\text{and}\quad
            {\transition
                {\enc{\stT[\roleQ]}}
                {\monLabYes{\stInTag{\roleP}{\stLabFmt{b}}{}}}
                {\enc{\roleR \stExtC \stLabFmt{b'} \stSeq \stEnd}}}
        \qquad%
        \text{\footnotesize(by \textsc{STMon-ExtC} in \Cref{def:monitor-state-st})}
    \)}%
    \smallskip%

    \noindent%
    Notice that the message sent by $\roleP$ restricts what the monitor $\enc{\stT[\roleQ]}$
    accepts from $\roleR$ afterwards. %
    If $\enc{\stT[\roleQ]}$ is deployed in the network of
    \Cref{ex:bad-network-message} to monitor end host $\roleQ$, then, if
    $\roleP$ sends $\stLabFmt{a}$, the monitor will accept $\stLabFmt{a'}$ from
    $\roleR$ (which is a ``good'' message in this state); instead, if $\roleP$
    sends $\stLabFmt{b}$, the monitor will reject $\stLabFmt{a'}$ from $\roleR$
    (which is a ``bad'' message in this state).

    Notably, the monitor $\enc{\stT[\roleQ]}$ can also immediately accept the
    messages from $\roleR$.
    The transitions $\monLabYes{\stEnqueueTag{\roleR}{\stLabFmt{a'}}{}}$
    and $\monLabYes{\stEnqueueTag{\roleR}{\stLabFmt{b'}}{}}$
    are fired by the following derivations:

    \change{\#123B: overflowing text area (too small?)}{
    \centerline{\(%
    \fontsize{6}{10}\selectfont%
        \Rule
        {
            \Rule
            {}
            {
                {\transition
                    {\enc{\roleR \stExtC \stLabFmt{a'} \stSeq \stEnd}}
                    {\monLabYes{\stEnqueueTag{\roleR}{\stLabFmt{a'}}{}}}
                    {\enc{\stEnd}}}
            }
            {STMon-ExtC}
        }
        {\transition
            {\enc{\stT[\roleQ]}}
            {\monLabYes{\stEnqueueTag{\roleR}{\stLabFmt{a'}}{}}}
            {\enc{\roleP \stExtC \stLabFmt{a} \stSeq \stEnd}}}
        {STMon-ExtPfx}
        \quad%
        \Rule
        {
            \Rule
            {}
            {
                {\transition
                    {\enc{\roleR \stExtC \stLabFmt{b'} \stSeq \stEnd}}
                    {\monLabYes{\stEnqueueTag{\roleR}{\stLabFmt{b'}}{}}}
                    {\enc{\stEnd}}}
            }
            {STMon-ExtC}
        }
        {\transition
            {\enc{\stT[\roleQ]}}
            {\monLabYes{\stEnqueueTag{\roleR}{\stLabFmt{b'}}{}}}
            {\enc{\roleP \stExtC \stLabFmt{b} \stSeq \stEnd}}}
        {STMon-ExtPfx}
    \)}%
    \smallskip}%

    \noindent%
    Observe that the message $\stLabFmt{a'}$ (resp.~$\stLabFmt{b'}$) from
    $\roleR$ causes rule \textsc{STMon-ExtPfx} to ``prune'' the session type
    $\stT[\roleQ]$ in the monitor state, only keeping the branch where
    $\stLabFmt{a}$ (resp~$\stLabFmt{b}$) from $\roleP$ can be received. %
    Therefore, if $\enc{\stT[\roleQ]}$ is deployed in the network of
    \Cref{ex:bad-network-message} to monitor end host $\roleQ$, it will accept
    $\stLabFmt{a'}$ from $\roleR$ even before $\roleP$ sends $\stLabFmt{a}$ or
    $\stLabFmt{b}$: this is because in this state it is still possible for the
    end host to consume $\stLabFmt{a'}$ without getting stuck. %
    Then, after accepting $\stLabFmt{a'}$ from $\roleR$:
    \begin{enumerate}
      \item the monitor will accept $\stLabFmt{a}$ from $\roleP$---which is a
        ``good'' message in this state, because $\stT[\roleQ]$ can consume
        $\stLabFmt{a}$ from $\roleP$ and then $\stLabFmt{a'}$ from $\roleR$ from
        the end host's input queue. However,
      \item the monitor will reject $\stLabFmt{b}$ from $\roleP$---which is a
        ``bad'' message in this state, because $\stT[\roleQ]$ cannot consume
        $\stLabFmt{b}$ from $\roleP$ and then $\stLabFmt{a'}$ from $\roleR$ (as
        shown at the end of \Cref{ex:bad-network-message}).
    \end{enumerate}

    \noindent%
    This strategy for accepting messages is necessary because, depending on the
    overall multiparty interaction, messages from $\roleR$ may be delivered
    before those from $\roleP$. This phenomenon is further illustrated in
    \Cref{ex:complex-mon-st-progression} below.
\end{example}

\begin{example}
    \label{ex:complex-mon-st-progression}%
    Consider again the session type
    $\stInfo$ from \Cref{ex:formal-info-st} (for the \lstinline|Info| role in
    \Cref{fig:our-approach}): its LTS is shown in
    \Cref{fig:st-info-visual-graph}. %
    Consider also the example execution of $\stInfo$ (with an input queue)
    in \Cref{ex:session-type-queue-reduct}. We
    now instrument $\stInfo$ and an empty input queue with a monitor $\enc{\stInfo}$, visualized in
    \Cref{fig:st-monitor-visual-graph}, obtaining $\mstPair{\stqPair{\stInfo}{\stQEmpty}}{\enc{\stInfo}}$ (by \Cref{def:st-monitor});
    we explain how their respective states change %
    as they send and receive messages, according to \Cref{def:st-monitor-semantics}.

     In \Cref{fig:st-monitor-visual}, the session type $\stInfo$ and its monitor $\enc{\stInfo}$ begin in state s0
     and m0 in  respectively. At this point,
     $\enc{\stInfo}$ will \emph{only} allow the message
     $\stEnqueueTag{\roleFmt{c}}{\stLabFmt{req}}{\tyInt}$ from the \lstinline|Client|
     (represented by $\roleFmt{c}$) to go through. Once the message arrives,
     $\enc{\stInfo}$ will accept it by progressing to state m1
     (by~\tsc{STMon-ExtC}). As part of accepting the message, the monitor
     forwards it to the end host input queue, and then the session type
     $\stInfo$ consumes it (by~\tsc{SQ-Recv} and~\tsc{SQ-Deq}), reaching state s1. The session
     type can then immediately progress to state s2, then s3, by sending
     $\stSendTag{\roleFmt{r}}{\labFmt{r\_req}}{\tyInt}$ and
     $\stSendTag{\roleFmt{d}}{\labFmt{d\_req}}{\tyInt}$ to \lstinline|Review| (role
     $\roleFmt{r}$) and \lstinline|Details| (role $\roleFmt{d}$) respectively.
     Let's assume that the monitor
     forwards both of these messages before it sees a response, progressing to state m2, then m4.

    Now, the session type (now in state s3) expects a response from $\roleFmt{r}$ and then from $\roleFmt{d}$,
    but there is no guarantee that the responses will be delivered in this exact order. %
    The monitor (now in state m2) accounts for this. %
    Suppose that the monitor receives $\stEnqueueTag{\roleFmt{d}}{\labFmt{d\_rsp}}{\tyString}$ as the
    first response. The monitor accepts the message, and progresses to state m6
    while forwarding the message. The session type, however, does not progress
    immediately, but remains in state s3 as the message from $\roleFmt{d}$ in its queue does not
    match any of the branches in its external choice. (See the execution in \Cref{ex:session-type-queue-reduct}.)

    Eventually, the monitor also receives the response message
    $\stEnqueueTag{\roleFmt{r}}{\labFmt{r\_rsp}}{\tyString}$ from $\roleFmt{r}$, and
    progresses to state m7 while forwarding the message to the session type's
    input queue---which can then finally dequeue both messages from
    $\roleFmt{r}$ and $\roleFmt{d}$, proceeding to state s4, then s5.

    Finally, the session type (now in state s5) sends the message
    $\stSendTag{\roleFmt{c}}{\labFmt{rsp}}{\tyString}$ to $\roleFmt{c}$ and progresses
    to state s6; the monitor (now in state m7) accepts the outgoing message and progresses to
    state m8, at which point the protocol has finished.
\end{example}

\begin{example}[On unmonitorable session types]
    \label{ex:infinite-state-monitor}
    Consider the session type %
    \(%
        \stRec{\stRecVar}{\stIn{\roleQ}{\stLabFmt{a}}{} \stSeq \stIn{\roleP}{\stLabFmt{b}}{} \stSeq \stRecVar}
    \) %
    (for brevity, we omit the payload types)
    By~\tsc{STMon-ExtC} and~\tsc{STMon-Rec}, a monitor with this session type in its state
    can transition by receiving from $\roleQ$. Moreover, the same monitor can
    transition by receiving from $\roleP$, with the following derivation:

    \centerline{\(%
    \small%
        \Rule
        {
            \Rule
                {
                    \Rule
                        {}
                        {
                            \transition
                            {\enc{\stIn{\roleP}{\stLabFmt{b}}{} \stSeq \stRec{\stRecVar}{\stIn{\roleQ}{\stLabFmt{a}}{} \stSeq \stIn{\roleP}{\stLabFmt{b}}{} \stSeq \stRecVar}}}
                            {\monLabYes{\stEnqueueTag{\roleP}{\stLabFmt{b}}{}}}
                            {\enc{\stRec{\stRecVar}{\stIn{\roleQ}{\stLabFmt{a}}{} \stSeq \stIn{\roleP}{\stLabFmt{b}}{} \stSeq \stRecVar}}}
                        }
                        {STMon-ExtC}
                }
                {\transition
                    {\enc{\stIn{\roleQ}{\stLabFmt{a}}{} \stSeq \stIn{\roleP}{\stLabFmt{b}}{} \stSeq \stRec{\stRecVar}{\stIn{\roleQ}{\stLabFmt{a}}{} \stSeq \stIn{\roleP}{\stLabFmt{b}}{} \stSeq \stRecVar}}}
                    {\monLabYes{\stEnqueueTag{\roleP}{\stLabFmt{b}}{}}}
                    {\enc{\stIn{\roleQ}{\stLabFmt{a}}{} \stSeq \stRec{\stRecVar}{\stIn{\roleQ}{\stLabFmt{a}}{} \stSeq \stIn{\roleP}{\stLabFmt{b}}{} \stSeq \stRecVar}}}}
                {STMon-ExtPfx}
        }
        {\transition
            {\enc{\stRec{\stRecVar}{\stIn{\roleQ}{\stLabFmt{a}}{} \stSeq \stIn{\roleP}{\stLabFmt{b}}{} \stSeq \stRecVar}}}
            {\monLabYes{\stEnqueueTag{\roleP}{\stLabFmt{b}}{}}}
            {\enc{\stIn{\roleQ}{\stLabFmt{a}}{} \stSeq \stRec{\stRecVar}{\stIn{\roleQ}{\stLabFmt{a}}{} \stSeq \stIn{\roleP}{\stLabFmt{b}}{} \stSeq \stRecVar}}}}
        {STMon-Rec}
    \)}%
    \smallskip%

    \noindent%
    The monitor could then accept an incoming message $\stLabFmt{a}$ from $\roleQ$, and return
    to its original state.
    However, %
    the monitor can also accept the next input from $\roleP$:

    \smallskip%
    \centerline{\(%
        \Rule
        {
            \Rule{
                \vdots \quad \text{\emph{(Same derivation above)}}
            }
            {
                \transition
                    {\enc{\stRec{\stRecVar}{\stIn{\roleQ}{\stLabFmt{a}}{} \stSeq \stIn{\roleP}{\stLabFmt{b}}{} \stSeq \stRecVar}}}
                    {\monLabYes{\stEnqueueTag{\roleP}{\stLabFmt{b}}{}}}
                    {\enc{\stIn{\roleQ}{\stLabFmt{a}}{} \stSeq \stRec{\stRecVar}{\stIn{\roleQ}{\stLabFmt{a}}{} \stSeq \stIn{\roleP}{\stLabFmt{b}}{} \stSeq \stRecVar}}}
            }
            {STMon-Rec}
        }
        {
            \transition
                {\enc{\stIn{\roleQ}{\stLabFmt{a}}{} \stSeq \stRec{\stRecVar}{\stIn{\roleQ}{\stLabFmt{a}}{} \stSeq \stIn{\roleP}{\stLabFmt{b}}{} \stSeq \stRecVar}}}
                {\monLabYes{\stEnqueueTag{\roleP}{\stLabFmt{b}}{}}}
                {\enc{\stIn{\roleQ}{\stLabFmt{a}}{} \stSeq \stIn{\roleQ}{\stLabFmt{a}}{} \stSeq \stRec{\stRecVar}{\stIn{\roleQ}{\stLabFmt{a}}{} \stSeq \stIn{\roleP}{\stLabFmt{b}}{} \stSeq \stRecVar}}}
        }
        {SMon-ExtPfx}
    \)}%
    \smallskip%

    \noindent%
    We can repeat this transition to accept more inputs from $\roleP$, each time reaching a new monitor state that expects more inputs from $\roleQ$:

    \smallskip%
    \centerline{\(%
        \transition
            {\ldots}
            {\monLabYes{\stEnqueueTag{\roleP}{\stLabFmt{b}}{}}}
            {\transition
                {\enc{\stIn{\roleQ}{\stLabFmt{a}}{} \stSeq \stIn{\roleQ}{\stLabFmt{a}}{} \stSeq \stIn{\roleQ}{\stLabFmt{a}}{} \stSeq \stIn{\roleQ}{\stLabFmt{a}}{} \stSeq\ \dotsc}}
                {\monLabYes{\stEnqueueTag{\roleP}{\stLabFmt{b}}{}}}
                {\ldots}
            }
    \)}%
    \smallskip%

    \noindent%
    Consequently, the LTS of this session-type monitor has infinitely many states.
\end{example}

Our monitor synthesis implementation (\Cref{sec:design}) rejects session
types such as \Cref{ex:infinite-state-monitor}, %
because the monitor state machine is constrained by the (often limited) amount
of storage available in network hardware. %
To avoid infinite-state monitors, the session types being monitored cannot
receive unbounded inputs from multiple roles. Many communication protocols
involve ``request-response'' patterns that keep our monitors
finite-state, including all the examples we evaluate in \Cref{sec:evaluation}.

\subsection{Soundness of Session-Types-Based Network Monitoring}
\label{sec:formal-model:correctness}

A non-negotiable feature of session-type-based monitors from
\Cref{def:monitor-state-st} is \emph{soundness}: this means that monitors must
not reject ``good'' messages---i.e., monitors must not produce false
positives and interfere with a well-behaved network. %
We formalize this intuition by considering a monitored network $\mnetN$ where all
monitors $\enc{\stT[\roleP]}$ (for all roles $\roleP$ in $\mnetN$) are
based on session types $\stT[\roleP]$ that are mutually compatible, and
each end host for role $\roleP$ behaves according to $\stT[\roleP]$. %
In \Cref{thm:net-mon-bisim} we show that the monitors in such $\mnetN$
are \emph{transparent}: they never disrupt communications between
well-behaved hosts.

We now develop the technical machinery for this result. In
\Cref{def:consistent-mon-instr} we define a \emph{consistent instrumentation}
where each end host is given a monitor matching the end host specification.

\begin{definition}[Consistent Instrumentation of a Network]
    \label{def:consistent-mon-instr}%
    Given a network $\netN$, we define its \emph{consistent monitor
    instrumentation} $\mnetInstr{\netN}$ as:

    \smallskip\centerline{\(%
        \mnetInstr{\netBigPar{i \in I}{\mnetPair{\roleP[i]}{\stqT[i]}}}
        \;=\;
        \mnetBigPar{i \in I}{\mnetPair{\roleP[i]}{\left(\mstPair{\stqT[i]}{\enc{\stqT[i]}}\right)}}
    \)}%
    \vspace{-6mm}%
\end{definition}

For an arbitrary $\netN$, the instrumented network $\mnetInstr{\netN}$ may reject
messages if the underlying session types are not ``compatible'' with each other,
e.g., some $\roleP$ may send to $\roleQ$ a message that $\roleQ$ does not
expect. For instance, if $\netN$ is the network in \Cref{ex:bad-network-message},
then $\mnetInstr{\netN}$ would reject messages as shown in \Cref{ex:st-monitor-reductions}.

To prove monitor soundness, we must ensure that monitored session types are compatible: we require \emph{output-liveness} as in \Cref{def:live-net}
below. %
Our output-liveness is a weaker variant of the \emph{typing context liveness} property
adopted in many session typing papers \cite{Scalas2019,BarwellSY022,DBLP:journals/tocl/GhilezanPPSY23,Prokic2025Federated,DBLP:series/lncs/YoshidaH24}:
like the standard liveness definition, we require that messages sent by a participant are eventually consumed by the intended recipient
(assuming fair scheduling)
\change{\#123B: clarify output-liveness definition}{%
  -- but unlike the standard definition, we do \emph{not} require that a participant awaiting a message will eventually receive one. %
}%
In other words, our \Cref{def:live-net} does \emph{not} allow a network to have
orphan messages that are sent and queued but never consumed -- but it allows a network to have participants
that wait forever to receive messages which are never sent.

\begin{definition}[Output-Live Session Type Networks, adapted from {\cite[Def.~4.7]{DBLP:journals/tocl/GhilezanPPSY23}}]
    \label{def:path}%
    \label{def:fair-path}%
    \label{def:live-path}%
    \label{def:live-net}%
    A \emph{network path} is a possibly infinite sequence of pairs of network
    configurations $(\netN[i])_{i \in I}$, %
    where $I = \{0, 1, \ldots\}$ is a set of consecutive natural numbers and,
    $\forall i \in I$, $\transition{\netN[i]}{\netTau[\gamma_i]}{\netNi[i]}$. %
    We say that \emph{a network path is fair} iff, $\forall i \in I$:
    \begin{enumerate}
      \item\label{item:fair:send} if $\transition{\netN[i]}{\netCommLab{\roleP}{\roleQ}{\stLab}{\stS}}{}$\!,
        then $\exists k \in I$ such that $k \ge i$ and $\transition{\netN[k]}{\netCommLab{\roleP}{\roleQi}{\stLabi}{\stSi}}{\netN[k+1]}$;
      \item\label{item:fair:recv} if $\transition{\netN[i]}{\netDequeueLab{\roleP}{\roleQ}{\stLab}{\stS}}{}$\!,
        then $\exists k \in I$ such that $k \ge i$ and $\transition{\netN[k]}{\netDequeueLab{\roleP}{\roleQ}{\stLab}{\stS}}{\netN[k+1]}$.
    \end{enumerate}

    We say that \emph{a network path is output-live} iff,
    taking any $i \in I$ and letting $\netN[i] = \netBigPar{j \in J}{\netPair{\roleP[j]}{(\stqPair{\stT[j]}{\stQ[j]})}}$,
    $\forall j \in J$ we have that
        if $\stQ[j] \equiv \stQCons{\stQMsg{\roleQ}{\stLab}{\stS}}{\stQi}$,
        then $\exists k \in I$ such that $k \ge i$ and $\transition{\netN[k]}{\netDequeueLab{\roleP[j]}{\roleQ}{\stLab}{\stS}}{\netN[k+1]}$.

    We say that \emph{$\netN$ is output-live} if every fair path beginning with
    $\netN$ is output-live.
\end{definition}

In \Cref{def:live-net}, a ``path'' represents a possible network execution.
A path is \textit{fair} if it eventually
\change{\#123C: fix typo and clarify}{%
    allows enabled communications between participants (item~\ref{item:fair:send}) and every enabled dequeuing action (item~\ref{item:fair:recv}); %
    note that, in item~\ref{item:fair:send}, the existence of $\roleQ, \stLab, \stS$
    establishes that $\roleP$ is ready to send some message
    (with an internal choice) that a recipient $\roleQ$ is ready to enqueue
    -- while $\roleQi, \stLabi, \stSi$ are the actual recipient and message
    selected by $\roleP$ in this execution path. %
}%
A path is \textit{output-live} if every
queued message is eventually consumed by its intended recipient.

\begin{example}[Output-live networks]
    \label{ex:live-net}%
    Consider the network $\netN$ in \Cref{ex:bad-network-message}: %
    $\netN$ is not output-live, because it has e.g.~a fair path where $\roleR$
    sends $\stLabFmt{a'}$ to $\roleQ$, $\roleP$ sends $\stLabFmt{b}$ to
    $\roleQ$, hence $\roleQ$ cannot consume the queued message $\stLabFmt{a'}$
    from $\roleR$. %
    In contrast, the network $\netNi$ obtained by replacing the session type of
    $\roleP$ with $\stTi[\roleP] = \stOut{\roleQ}{\stLabFmt{a}}{} \stSeq \stEnd$
    is output-live, because in every fair path of $\netNi$ every queued message
    is eventually consumed. %
    Also, all the examples we evaluate in \Cref{sec:evaluation} are output-live.
\end{example}

To ensure monitoring correctness,
we require a further \emph{half-duplex} assumption to control monitor state-space size. %
Intuitively, a network $\netN$ is \emph{half-duplex} if, for any two roles
$\roleP$ and $\roleQ$ in $\netN$, %
\change{\#123B: clarify half-duplex definition}{%
  data can flow only in one direction at a time, i.e., if $\roleP$ sent a
  message to $\roleQ$, then $\roleQ$ must consume that message before sending
  another message to $\roleP$ (and \emph{vice versa}). %
  In other words, $\roleP$ and $\roleQ$ can only communicate by ``taking turns''
  -- thus, the input queue of $\roleP$ can contain a message from $\roleQ$ only
  if the input queue of $\roleQ$ does \emph{not} contain any message from
  $\roleP$. %
}%
All the examples we evaluate in \Cref{sec:evaluation} are half-duplex. %
\emph{(For the formal definition of half-duplex and an example showing why we need it,
see \Cref{def:half-duplex-net} and \Cref{ex:half-duplex-motivation} in the appendix.)}

To state our monitoring soundness result, we use \Cref{def:internal-bisim} to
define when two networks have equivalent internal behavior, i.e., when they communicate and consume messages in the same way.

\begin{definition}[Internal Bisimulation]
    \label{def:internal-bisim}
    Let $\phi$ be an annotation to distinguish $\tau$-labels of the form $\internalLab{\phi}$. %
    We say that \emph{$\relR$ is an internal bisimulation relation} iff,
    whenever $(s, t) \in \mathop{\relR}$,
    \begin{enumerate}
    \item\label{item:internal-bisim-lr} if $\transition{s}{\internalLab{\phi}}{s'}$,
        then $\exists t'$ such that $\transition{t}{\internalLab{\phi}}{t'}$
        and $(s', t') \in \mathop{\relR}$;
    \item\label{item:internal-bisim-rl} if $\transition{t}{\internalLab{\phi}}{t'}$,
        then $\exists s'$ such that $\transition{s}{\internalLab{\phi}}{s'}$
        and $(s', t') \in \mathop{\relR}$.
    \end{enumerate}
    We say $s$ and $t$ are \emph{internally bisimilar}, written $s \tbisim t$, iff
    there is an internal bisimulation $\relR$ such that $(s, t) \in
    \mathop{\relR}$.
\end{definition}

We now have all the ingredients to state and prove that our monitors are
\emph{sound} by runtime verification standards \cite{DBLP:series/lncs/BartocciFFR18}, i.e., they have no false positives.\footnote{%
  Another desirable property for monitors is
  \emph{completeness}, i.e., having no false negatives. %
  Here we focus on soundness because it is non-negotiable, and completeness may
  not be achievable together with soundness: we discuss these issues in
  \Cref{sec:conclusion-futurework}.%
} %
In our setting, this means they never misclassify a good message as bad
and never interfere with well-behaved end hosts, provided the implemented protocol is output-live and half-duplex. %
\emph{(Proof in \Cref{appendix:SessionTypesAddionalDetails}.)}

\begin{restatable}[Monitor Soundness]{theorem}{lemMonitorCorrectness}
    \label{thm:net-mon-bisim}
    If $\netN$ is output-live and half-duplex, then $\netN \tbisim \mnetInstr{\netN}$.
\end{restatable}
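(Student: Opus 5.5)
The plan is to exhibit an explicit internal bisimulation. Define $\relR$ as the set of pairs $(\netN, \mnetInstr{\netN})$ where $\netN$ ranges over all networks reachable from the original output-live, half-duplex network via $\tau$-transitions. Here $\mnetInstr{\netN}$ is taken in the sense of \Cref{def:consistent-mon-instr}, so each end host $\stqPair{\stT}{\stQ}$ is paired with the monitor state $\enc{\stqPair{\stT}{\stQ}}$. Reachable networks are again output-live and half-duplex, since every fair path from a reachable state extends to a fair path from the initial one. The first obligation is therefore well-definedness: for every reachable $\netN$ and every host $\netPair{\roleP}{\stqPair{\stT}{\stQ}}$, the monitor $\enc{\stT}$ must accept the whole queue $\stQ$, so that $\enc{\stqPair{\stT}{\stQ}}$ exists. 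I would prove this jointly with the bisimulation clauses, as an invariant preserved by each step.

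\textbf{Left-to-right.} The $\tau$-labels of $\netN$ are dequeues $\netDequeueLab{\roleP}{\roleQ}{\stLab}{\stS}$ and communications $\netCommLab{\roleP}{\roleQ}{\stLab}{\stS}$.

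For a dequeue, \textsc{M-Dequeue} and \textsc{MNet-Deq} fire with the monitor unchanged. It then remains to show $\enc{\stqPair{\stT}{\stQCons{\stQMsg{\roleQ}{\stLab}{\stS}}{\stQi}}} = \enc{\stqPair{\stTi}{\stQi}}$ when $\stT \xrightarrow{\roleQ\&\stLab(\stS)} \stTi$. This is a lemma: feeding $\roleQ\triangleright\stLab$ to $\enc{\stT}$ by \textsc{STMon-ExtC} yields $\enc{\stTi}$. The lemma must also handle the queue congruence $\equiv$, i.e.\ show that the messages of $\stQ$ preceding the dequeued one (from other senders) can be commuted past it. That requires a \emph{diamond lemma}: if $\enc{\stT}$ accepts inputs from distinct senders $\roleQ \ne \roleR$ in either order, both orders lead to the same state. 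I would prove it by induction on the derivation, with case analysis on \textsc{STMon-ExtC}, \textsc{STMon-ExtPfx}, \textsc{STMon-IntPfx} and \textsc{STMon-Rec}. The pruning by maximal $K$ should commute, because the surviving branch set is the intersection of the branches able to accept each message.

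For a communication $\roleP \to \roleQ$, two facts are needed. The sender's monitor $\enc{\stqPair{\stT[\roleP]}{\stQ[\roleP]}}$ accepts the output $\roleQ!\stLab$, and the receiver's monitor accepts the input $\roleP?\stLab$; after both steps the states must match $\enc{\cdot}$ of the updated hosts.

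On the sender side, the pruning applied to the host's queued inputs only removes branches that cannot consume them. Output-liveness guarantees that the branch actually chosen by $\roleP$ remains viable, since otherwise a queued message would be orphaned on a fair path. Hence \textsc{STMon-IntC}, lifted through the pruned prefix, accepts the output. A further commutation lemma (output after queued inputs) then shows the resulting state equals $\enc{\stqPair{\stT_k}{\stQ[\roleP]}}$.

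On the receiver side, I would use output-liveness again. The new message is eventually dequeued by $\roleQ$ on some fair path, so $\stT[\roleQ]$ has a sequence of transitions consuming all of $\stQCons{\stQ[\roleQ]}{\stQMsg{\roleP}{\stLab}{\stS}}$, interleaved with outputs. I would show by induction on this sequence that $\enc{\stT[\roleQ]}$ accepts the whole queue. Consumptions preceded by outputs are absorbed by \textsc{STMon-IntPfx}, and those preceded by inputs from other senders by \textsc{STMon-ExtPfx}. Half-duplexness is used here: it rules out the case where the enabling prefix contains an input from $\roleP$ itself, or an output to $\roleP$ whose reply is already queued. Those are exactly the cases where \textsc{STMon-IntPfx}/\textsc{STMon-ExtPfx} do not apply, because their side conditions require $\roleQ$ to differ from the immediate peers.

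\textbf{Right-to-left.} This direction is easy. Every $\tau$ move of $\mnetInstr{\netN}$ comes from \textsc{MNet-Comm} or \textsc{MNet-Deq}, whose premises include the underlying host transitions. The matching $\netN$ move therefore exists, and the resulting pair lies in $\relR$ by the same state-equality lemmas. \textsc{STMon-Bad} never contributes $\tau$ labels, so it plays no role in this direction.

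\textbf{Main obstacle.} I expect the receiver-side acceptance argument to be hardest, i.e.\ turning output-liveness plus half-duplexness into a syntactic derivation via the Pfx rules, together with the diamond and commutation lemmas under maximal pruning and recursion unfolding. I would first attempt a characterisation lemma: $\enc{\stT}$ accepts a sequence of inputs $\stQ$ iff some path of $\stT$, consisting of outputs and inputs with per-sender order respected, consumes $\stQ$, under the side condition that no sender of $\stQ$ is an output target earlier on that path. With that lemma in hand, both the invariant and the bisimulation clauses become bookkeeping.
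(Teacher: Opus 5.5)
Your skeleton is the paper's: the same relation $\relR$ pairing each $\netNi$ reachable from $\netN$ with $\mnetInstr{\netNi}$, the same four cases, and the same appeal to output-liveness and half-duplex for acceptance. You also flag two obligations the paper glosses over: definedness of $\mnetInstr{\netNi}$, and pruning on the sender side.

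The step that fails is your diamond lemma. Its justification, that surviving branch sets intersect, overlooks that \textsc{STMon-IntPfx} checks its side condition against \emph{all} branches of the current, possibly already pruned, choice. So the state reached depends on the order of messages from distinct senders, even when both orders are accepted. Let $X = \roleFmt{u} \stExtC \{\stLabFmt{c} \stSeq B_1,\ \stLabFmt{d} \stSeq B_2\}$. Here $B_1$ is the internal choice between $\roleFmt{s} \stIntC \stLabFmt{a} \stSeq \roleQ \stExtC \stLabFmt{n} \stSeq \roleR \stExtC \stLabFmt{m}$ and $\roleR \stIntC \stLabFmt{b}$, and $B_2 = \roleQ \stExtC \stLabFmt{n} \stSeq \roleR \stExtC \stLabFmt{m}$ (omitting $\stEnd$s).
\begin{itemize}
\item Feeding $\stQMsg{\roleQ}{\stLabFmt{n}}{}$ first prunes $\roleR \stIntC \stLabFmt{b}$. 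Then $B_1$ accepts $\stLabFmt{m}$, and $\enc{X}$ ends in $\enc{\roleFmt{u} \stExtC \{\stLabFmt{c} \stSeq \roleFmt{s} \stIntC \stLabFmt{a},\ \stLabFmt{d} \stSeq \stEnd\}}$.
\item Feeding $\stQMsg{\roleR}{\stLabFmt{m}}{}$ first makes $\enc{B_1}$ reject. This prunes $\stLabFmt{c}$, and the result is $\enc{\roleFmt{u} \stExtC \stLabFmt{d} \stSeq \stEnd}$.
\end{itemize}

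This arises in a network meeting the hypotheses. Let $\roleP$ run $\roleFmt{w} \stExtC \stLabFmt{e} \stSeq X$. Let $\roleFmt{u}, \roleFmt{w}, \roleQ, \roleR$ each send $\roleP$ one message ($\stLabFmt{d}, \stLabFmt{e}, \stLabFmt{n}, \stLabFmt{m}$ respectively), and let $\roleFmt{s}$ be $\stEnd$. The network is output-live, and it is half-duplex because only $\roleP$'s queue is ever non-empty. If $\roleQ$, $\roleR$, $\roleFmt{w}$ send in that order, $\roleP$'s monitor is in the first state. When $\roleP$ then dequeues $\stLabFmt{e}$, \textsc{SQ-Deq} may leave the residual queue as $\stQCons{\stQMsg{\roleR}{\stLabFmt{m}}{}}{\stQMsg{\roleQ}{\stLabFmt{n}}{}}$. \textsc{M-Dequeue} leaves the monitor unchanged, but $\mnetInstr{\cdot}$ of the target carries the second state. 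So $\relR$ is not an internal bisimulation.

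The paper's proof has exactly this hole: \Cref{lemma:mon-trans-congruent} leaves the swap case unproved, and this example refutes it. The theorem itself is plausibly still true, since the two states differ only on $\stLabFmt{c}$, which $\roleFmt{u}$ never sends.

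A repair is to coarsen the relation: pair each host $\stqPair{\stT}{\stQ}$ with $\enc{\stqPair{\stT}{\stQi}}$ for \emph{some} $\stQi \equiv \stQ$, namely the order in which its monitor actually saw the messages. The dequeue case then needs no diamond. $\unfold{\stT}$ is an external choice from the dequeued sender, and \textsc{STMon-ExtPfx} processes other senders' messages branch-wise. So deleting that sender's first message from $\stQi$ in place yields the same state, by determinism of the monitor. The price is that both acceptance arguments must hold for an arbitrary representative.

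On the sender side, output-liveness alone does not keep the chosen branch alive. Take the choice between $\roleQ \stIntC \stLabFmt{a} \stSeq \roleR \stExtC \stLabFmt{y}$ and $\roleQ \stIntC \stLabFmt{b} \stSeq \roleR \stIntC \stLabFmt{x} \stSeq \roleR \stExtC \stLabFmt{y}$, with $\stQMsg{\roleR}{\stLabFmt{y}}{}$ queued. The $\stLabFmt{b}$ branch is pruned yet would eventually consume the message, and only half-duplex forbids choosing it. So this side needs the same orphan-or-half-duplex contradiction as the receiver side. The paper's \Cref{lem:type-send-mon-send} silently skips this step.

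Likewise, your characterisation lemma must constrain every branch of each internal choice passed, not only the path taken. Even then it characterises acceptance, not the state reached.
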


\begin{remark}[On determining output-liveness and half-duplex properties]
    \label{remark:determining-liveness-half-duplex}%
    Output-liveness (\Cref{def:live-net}) is generally undecidable, since two
    session types with unbounded queues can encode a Turing
    machine~\cite[Theorem 2.5]{DBLP:journals/corr/abs-1211-2609}. It can,
    however, be guaranteed by decidable approximations such as bounded model
    checking or projection from a global type~\cite{DBLP:conf/cav/LangeY19,DBLP:conf/cav/LiSWZ23}. %
    Similar techniques can be used to
    ensure half-duplex execution (\Cref{def:half-duplex-net}, \cite{DBLP:journals/corr/abs-2209-10328}). %
    These checks are orthogonal to this work.
    The protocols we evaluate in \Cref{sec:evaluation} are output-live and half-duplex,
    with bounded queue sizes,
    so they have finite LTSs and are amenable to model checking.
\end{remark}

\begin{remark}[On the monitor rejection strategy]
    \label{remark:monitor-rejections}%
    The particular rejection strategy for session-type monitors does not affect
    the soundness \Cref{thm:net-mon-bisim}, which only concerns accepted
    behavior. %
    Concretely, \Cref{def:monitor-state-st} says that the monitor's verdict is
    not persistent: if a monitor rejects a message, then it can still accept a
    subsequent valid message. %
    \Cref{thm:net-mon-bisim} would still hold, for example, if rule
    \textsc{STMon-Bad} in \Cref{def:monitor-state-st} always moved $\enc{\stT}$
    to $\enc{\stEnd}$ after an invalid send/receive, making the rejection
    verdict persistent and blocking any subsequent message to/from the end host. %
    Indeed, \toolName's TCP-oriented monitors (described in
    \Cref{sec:implementation-practical-protocols}) use persistent verdicts: they
    block all end host communications (by dropping the whole TCP connection)
    when the end host sends a ``bad'' message (as we show in
    \Cref{sec:evaluation-qualitative}).
\end{remark}
\section{Empirical Evaluation}
\label{sec:evaluation}

In this section, we evaluate \toolName's correctness and effectiveness against
the following \change{\#123A: use ``research questions'' instead of
``objectives'' (also updating the text that follows)}{%
    research questions%
}:

\begin{enumerate}[label=\textbf{(Q\arabic*)},leftmargin=10mm]
	\item\label{item:eval-non-trivial} Can \toolName generate monitors for
	non-trivial multiparty protocols?
	\item\label{item:eval-correctness} Do \toolName-generated
          monitors accommodate correct communication without
          interference, while rejecting incorrect messages even in the
          presence of packet loss, duplication, and reordering (when
          using the TCP-oriented monitors described in
          \Cref{sec:implementation-practical-protocols})?
	\item\label{item:eval-concurrency} Does \toolName support monitoring
	multiple concurrent sessions?
\end{enumerate}

We first describe test cases and setup (\Cref{sec:evaluation:test-cases-setup}), then analyze a representative case (\Cref{sec:evaluation-qualitative}), and finally report aggregate monitoring statistics (\Cref{sec:overview:monitoring-statistics}).
\change{\#123A: why no performance evaluation}{%
    Our evaluation is based on simulated networks including nodes running \texttt{BMv2}~\cite{bmv2}, a P4-enabled software switch. %
    We adopt this setting because careful hardware experiments would require significant engineering efforts that are orthogonal to the main contributions of this paper. %
    The drawback of this choice is that we cannot perform meaningful performance evaluations, as BMv2 does not reflect the performance characteristic of real P4-enabled hardware. %
    However, previous work suggests that, if the \toolName-generated P4 monitors can be compiled to a hardware platform without exceeding the available resources, then they will run with little to no overhead, no matter how much traffic the device is processing, up to its limit. For example, Figures 9 and 10(c) of~\cite{NetCache2017} show constant throughput and latency, even when the P4 switch is fully loaded.%
}

\subsection{Test Cases and Evaluation Setup}
\label{sec:evaluation:test-cases-setup}

\ifecoop
\subparagraph{Methodology.}%
\else
\paragraph{Methodology}%
\fi
To address question~\ref{item:eval-non-trivial}, we selected non-trivial test cases based on real-world multiparty protocols, with varying numbers of participants and branching/looping structures (\Cref{tab:example-properties}).
We describe each test case in \Cref{appendix:ExampleDesc}, together with
the local session type of each participant.
In each test case, participant behavior is specified as a
session type, and the full system is output-live and half-duplex (\Cref{def:live-net,def:half-duplex-net}).
\change{\#123C: clarify sentence}{%
    This guarantees the formal preconditions for sound monitoring
    (\Cref{thm:net-mon-bisim}) and provides the basis for empirically evaluating
    question~\ref{item:eval-correctness} -- for which here we also consider
    networks with packet loss, duplication, and reordering (that are are not
    formally covered by \Cref{thm:net-mon-bisim}).%
}%

\begin{table}
	\footnotesize%
	\renewcommand{\arraystretch}{1.2}%
	\begin{tabular}{|c|c|c|c|c|}
		\hline\hline
		\rowcolor{lightgray} \textbf{Test case} & \textbf{Participants} & \textbf{Branching} & \textbf{Loops} & \textbf{Description} \\
		\hline\hline
		BookInfo \cite{IstioBookinfo} & 5 &  & & match a review to a book name \\
		\hline
		Store management & 7 & \textcolor{ForestGreen}{\checkmark} & \textcolor{ForestGreen}{\checkmark} & online ordering service \\
		\hline
		VPN & 4 & \textcolor{ForestGreen}{\checkmark} &  \textcolor{ForestGreen}{\checkmark}\ & authenticate communication \\
		\hline
		Stateful firewall & 2 & \textcolor{ForestGreen}{\checkmark} & \textcolor{ForestGreen}{\checkmark} & traffic filtering\\
		\hline
		DNS \cite{ross2021computer} & 5 &  &  & DNS resolver server \\
		\hline
		Auction protocol
		& 3 & \textcolor{ForestGreen}{\checkmark} & \textcolor{ForestGreen}{\checkmark} & two-buyer auction protocol \\
		\hline
		CDN \cite{ross2021computer} & 4 &  &  & content distribution network\\
		\hline
		SIP \cite{rfc3261} & 3 & \textcolor{ForestGreen}{\checkmark} &  &  session initiation protocol over proxy\\
		\hline
		POP3 \cite{rfc1734}\cite{rfc1939}& 2 & \textcolor{ForestGreen}{\checkmark} & \textcolor{ForestGreen}{\checkmark} & client sends multiple queries to a server\\
		\hline
		Multiplayer game & 4 & \textcolor{ForestGreen}{\checkmark} & \textcolor{ForestGreen}{\checkmark} & turn-based game \\
		\hline
	\end{tabular}
		\caption{The 10 test case protocols on which we evaluated \toolName. %
		For each one of them we report the number of participants, and whether
		the session type representation includes branching and/or loops.}%
\label{tab:example-properties}
\end{table}

\ifecoop
\subparagraph{Evaluation Setup.}%
\else
\paragraph{Evaluation Setup.}%
\fi
For each test case in \Cref{tab:example-properties}, we set up a simulated
network in which end hosts (i.e., multiparty-protocol participants) communicate via P4-enabled border switches, as in
\Cref{fig:network-microservice-example}. %
We use Mininet~\cite{mininet}, which allows us to simulate networks with
different topologies, end hosts, and switch configurations. %
The simulated network includes nodes running \texttt{BMv2}~\cite{bmv2}, a P4-enabled software switch.
These switches perform regular forwarding when monitoring is disabled, and deploy/run our
\toolName-generated monitors to evaluate question~\ref{item:eval-non-trivial}.

To evaluate question~\ref{item:eval-correctness}, for each test case
in \Cref{tab:example-properties} we provide correct and faulty
participant implementations and assess whether
\toolName-generated monitors accept or reject packets as expected.
Each variant is implemented in Python and executed on Mininet using the \toolName-generated API (\Cref{sec:implementation-st-api}).

To stress-test whether question \ref{item:eval-correctness} can be answered positively under different transports, we implement
each scenario in \Cref{tab:example-properties} with both UDP and
TCP communication. For TCP, we evaluate the
behavior of our TCP-oriented monitors
(\Cref{sec:implementation-practical-protocols}) both on a perfectly
reliable network, and on an unreliable network with packet loss,
duplication, and delay: %
specifically, we configure the Mininet end hosts to drop 1\% of all incoming
packets, duplicate 1\% of all outgoing packets, and delay the sending of
outgoing packets by a variable amount (up to 100\,ms).
The delay perturbs packet ordering both across senders (where arbitrary interleavings are allowed) and for packets from the same sender (where out-of-order packets are rejected and retransmitted).

To evaluate question~\ref{item:eval-concurrency}, we run multiple concurrent
sessions per test case (typically 10 to 50). All experiments were conducted on a machine with an 8-core, 3\,GHz CPU and 32\,GB of RAM,
running Ubuntu~22.04.

\subsection{Assessing the Correctness of \toolName-Generated Session Monitors}
\label{sec:evaluation-qualitative}
\label{sec:evaluation-qualitative-reject}
\label{sec:evaluation-qualitative-overhead}

To illustrate our evaluation of question \ref{item:eval-correctness}
(i.e., whether \toolName-generated monitors accept/reject messages
correctly), we focus on one of the 10 test
cases in \Cref{tab:example-properties}: the BookInfo
protocol~\cite{IstioBookinfo}, our running example from
\Cref{sec:introduction}. We applied the same assessment to every
test case in \Cref{tab:example-properties} and observed similar
results, so the analysis below is representative. We also briefly report results for the VPN
test case, which is structurally richer than BookInfo and covers all
session-type features (branching and nested loops). Full
details for the other test cases appear in \Cref{appendix:ExampleDesc}.

\begin{figure}
\centering
\begin{minipage}{0.65\textwidth}
\begin{lstlisting}[language=py, basicstyle=\scriptsize\ttfamily]
def protocol(s):
    isbn = s.recvMsg(Client,  Request).text
    s.sendMsg(Review,  RqsReview(isbn))
    s.sendMsg(Review,  RqsDetail(isbn)) # BAD
    s.sendMsg(Details, RqsDetail(isbn))
    s.sendMsg(Details, RqsReview(isbn)) # BAD
    review = s.recvMsg(Review,  RspReview).text
    details = s.recvMsg(Details, RspDetail).text
    s.sendMsg(Client,  Response(review + details))
\end{lstlisting}%
\caption{Faulty implementation of the session type in \Cref{code:st-info-impl}.}
\label{code:faulty-info-impl}
\end{minipage}
\end{figure}

For this representative case, \Cref{fig:result-barplots} reports cumulative packets
received across all hosts under different configurations (UDP or TCP,
reliable or unreliable TCP networks, faulty or correct hosts, and with or
without runtime monitors).
\change{\#123C: anomalous packet counts}{The bars show the median packet counts
over five runs, with the maximum and minimum counts shown with the black dot/line
(most noticeable in \Cref{fig:bookreview-barplot}).}

\begin{figure}
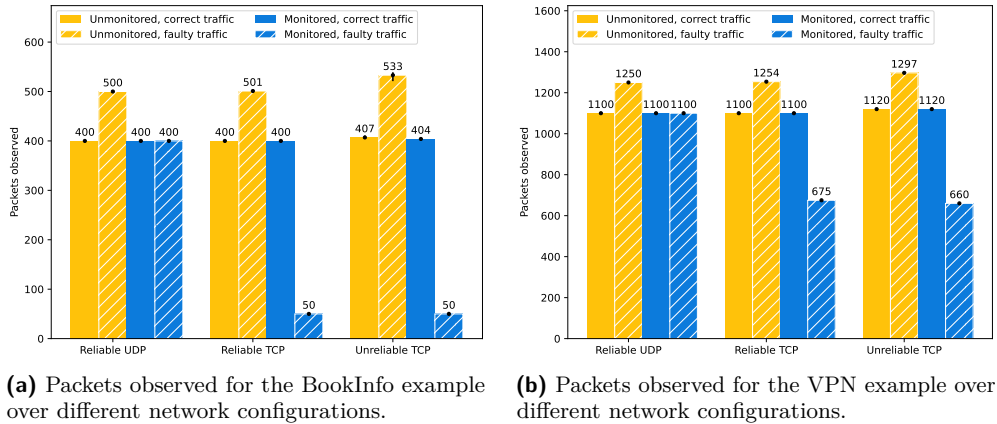

    \centering
    \begin{subfigure}{.45\textwidth}
        \includegraphics[width=\textwidth]{plots/bookreview\_barplot.pdf}
        \caption{Packets observed for the BookInfo example over different network configurations.}
        \label{fig:bookreview-barplot}
    \end{subfigure}
    \hspace{5pt}
    \begin{subfigure}{.45\textwidth}
        \includegraphics[width=\textwidth]{plots/vpn\_barplot.pdf}
        \caption{Packets observed for the VPN example over different network configurations.}
        \label{fig:vpn-barplot}
    \end{subfigure}
    \caption{Incoming packets observed across all hosts, in different examples and configurations.%
    }
    \label{fig:result-barplots}
\end{figure}

\begin{figure}[t]
    \centering
    \begin{subfigure}{.45\textwidth}
        \includegraphics[width=\textwidth]{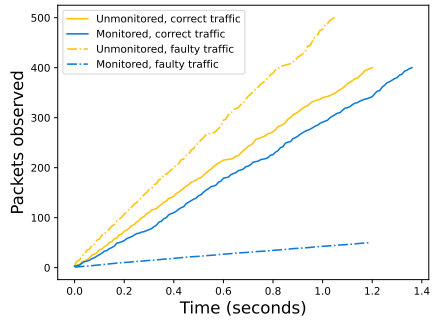}
        \caption{BookInfo, reliable network with TCP}
        \label{fig:bookinfo-tcp-network}
    \end{subfigure}\hspace{5pt}
    \begin{subfigure}{.45\textwidth}
        \includegraphics[width=\textwidth]{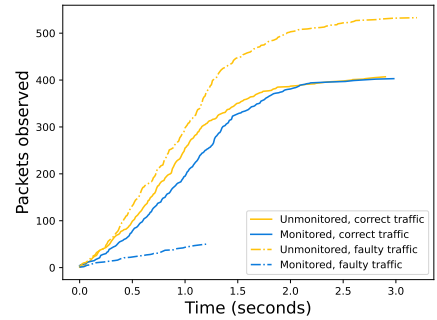}
        \caption{BookInfo, unreliable network with TCP}
        \label{fig:bookinfo-tcploss-network}
    \end{subfigure}

    \vspace{4pt}
    \begin{subfigure}{.45\textwidth}
        \includegraphics[width=\textwidth]{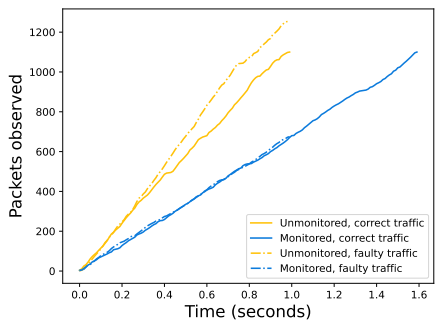}
        \caption{VPN, reliable network with TCP}
        \label{fig:vpn-tcp-network}
    \end{subfigure}\hspace{5pt}
    \begin{subfigure}{.45\textwidth}
        \includegraphics[width=\textwidth]{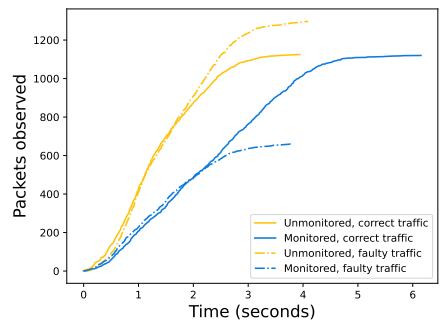}
        \caption{VPN, unreliable network with TCP}
        \label{fig:vpn-tcploss-network}
    \end{subfigure}
    \caption{Incoming packets observed across all hosts over time for BookInfo (top row) and VPN (bottom row).}
    \label{fig:tcp-timeseries-results}
\end{figure}

The ``correct traffic'' bars in \Cref{fig:bookreview-barplot}
(solid red and blue columns) represent BookInfo runs where each end host correctly implements the session protocol.
\toolName-generated monitors do not reject any packets: the same number of
packets is observed in both monitored (blue) and unmonitored (red) networks. %
This is consistent with our soundness \Cref{thm:net-mon-bisim}. %

In BookInfo configurations with ``faulty traffic'', the end host for
participant \lstinline|Info| runs a program that does \emph{not} conform
to the expected protocol and sends incorrect messages to other hosts. %
Specifically, it sends additional messages to
\lstinline|Review| and \lstinline|Details| (\Cref{code:faulty-info-impl}).
Key observations:
\begin{itemize}
\item The total number of packets with faulty traffic on
  unmonitored networks (red hatched columns) is higher than that for
  correct traffic (red solid columns), because all faulty
  packets reach their destination end host. %
\item By contrast, there is no increase in packets observed under
  monitoring (blue hatched columns), because each faulty packet
  (which is UDP in this case) is dropped by the session monitor for
  \lstinline|Info| and does not enter the network; hence, the faulty packet
  is not observed by other end hosts.%
\item With TCP transport and TCP-oriented monitors, incorrect packets cause a significant
  drop in the observed packet counts: this is due to the session-closing mechanism in
  \Cref{sec:implementation-practical-protocols}, which blocks a session as soon as a
  faulty packet is observed. In \Cref{fig:tcp-timeseries-results}
  it is possible to see that the \toolName TCP monitors keep the observed packet count
  persistently low in faulty runs, as no further packets are observed after a faulty
  one causes its session to be closed.
\item \change{\#123C: anomalous packet counts}{On unreliable networks with TCP,
  there are slight differences for observed packet counts between monitored and
  unmonitored examples. This is because (1) monitors may drop out-of-order packets,
  which may slightly reduce packet count, or slightly increase it due to
  retransmissions; and (2) we perform random packet drops, reordering, and
  delays to simulate unreliable networks. Consequently, some out-of-order and
  retransmitted packets may be dropped or received by the end host multiple
  times, causing variations in packet counts. Crucially, the plots
  show that our TCP monitors do not block correct traffic -- otherwise they
  would disrupt TCP connections and significantly drop the ``monitored, correct
  traffic'' packet count like the ``monitored, faulty traffic'' TCP count.}
\end{itemize}

\noindent
The same trend appears for VPN in the bottom row of \Cref{fig:tcp-timeseries-results}.
There, faulty packets are sent later than in BookInfo, so monitored faulty runs observe more packets before session closure. Corresponding plots for the remaining examples appear in \Cref{appendix:BarPlots}.

These observations show that our monitors do not disrupt
well-behaved programs and correctly reject non-conformant messages, for non-trivial cases: therefore,
questions \ref{item:eval-non-trivial} and \ref{item:eval-correctness} can be answered positively. %
Question \ref{item:eval-concurrency} can be answered positively as well, as these observations hold when running multiple concurrent sessions for each test case.%

\subsection{Monitoring Statistics}
\label{sec:overview:monitoring-statistics}

\begin{table}
	\tiny%
  \centering
	\renewcommand{\arraystretch}{1.2}%
  \begin{tabular}{|c||c|c|c||c|c|c|}
		\hline
		    \rowcolor{lightgray} \cellcolor{white} & \multicolumn{3}{|c|}{\textbf{UDP}}     & \multicolumn{3}{|c|}{\textbf{UDP}}    \\
        \rowcolor{lightgray} \cellcolor{white} & \multicolumn{3}{|c|}{\textbf{Correct}} & \multicolumn{3}{|c|}{\textbf{Faulty}} \\ \hline
        \rowcolor{lightgray} \cellcolor{white} & A    & R    & T                        & A    & R    & T                       \\ \hline \hline
        \textbf{VPN}                           & 1950 & 0    & 0                        & 1950 & 150  & 0                       \\ \hline
        \textbf{Book}                          & 800  & 0    & 0                        & 800  & 100  & 0                       \\ \hline
        \textbf{Store}                         & 1800 & 0    & 0                        & 1800 & 100  & 0                       \\ \hline
        \textbf{Firewall}                      & 2100 & 0    & 0                        & 2100 & 500  & 0                       \\ \hline
        \textbf{DNS}                           & 800  & 0    & 0                        & 800  & 100  & 0                       \\ \hline
        \textbf{Auction}                       & 2800 & 0    & 0                        & 2800 & 200  & 0                       \\ \hline
        \textbf{CDN}                           & 500  & 0    & 0                        & 500  & 100  & 0                       \\ \hline
        \textbf{SIP}                           & 350  & 0    & 0                        & 350  & 150  & 0                       \\ \hline
        \textbf{POP3}                          & 1000 & 0    & 0                        & 1000 & 500  & 0                       \\ \hline
        \textbf{Game}                          & 3000 & 0    & 0                        & 3000 & 250  & 0                       \\ \hline
	\end{tabular}
  \hspace{3mm}
  \begin{tabular}{|c||c|c|c||c|c|c|}
		\hline
		    \rowcolor{lightgray} \cellcolor{white} & \multicolumn{3}{|c|}{\textbf{TCP + reliable net}}     & \multicolumn{3}{|c|}{\textbf{TCP + reliable net}} \\
        \rowcolor{lightgray} \cellcolor{white} & \multicolumn{3}{|c|}{\textbf{Correct}}                & \multicolumn{3}{|c|}{\textbf{Faulty}}             \\ \hline
        \rowcolor{lightgray} \cellcolor{white} & A    & R    & T                                       & A    & R    & T                                   \\ \hline \hline
        \textbf{VPN}                           & 1950 & 0    & 0                                       & 1125 & 175  & 0                                   \\ \hline
        \textbf{Book}                          & 800  & 0    & 0                                       & 100  & 935  & 0                                   \\ \hline
        \textbf{Store}                         & 1800 & 0    & 0                                       & 400  & 450  & 0                                 \\ \hline
        \textbf{Firewall}                      & 2100 & 0    & 0                                       & 50   & 531  & 0                                \\ \hline
        \textbf{DNS}                           & 800  & 0    & 0                                       & 400  & 614  & 0                                   \\ \hline
        \textbf{Auction}                       & 2800 & 0    & 10                                      & 360  & 995  & 0                                \\ \hline
        \textbf{CDN}                           & 500  & 0    & 0                                       & 200  & 850  & 0                                 \\ \hline
        \textbf{SIP}                           & 350  & 0    & 0                                       & 200  & 1102 & 0                                  \\ \hline
        \textbf{POP3}                          & 1000 & 0    & 0                                       & 500  & 500  & 0                                 \\ \hline
        \textbf{Game}                          & 3000 & 0    & 0                                       & 800  & 447  & 0                                 \\ \hline
	\end{tabular}\\

	  \begin{tabular}{|c||c|c|c||c|c|c||c|c|c||c|c|c||c|c|c||c|c|c|}
		\hline
		\rowcolor{lightgray} \cellcolor{white}     & \multicolumn{3}{|c|}{\textbf{TCP + unreliable net}} & \multicolumn{3}{|c|}{\textbf{TCP + unreliable net}} \\
        \rowcolor{lightgray} \cellcolor{white} & \multicolumn{3}{|c|}{\textbf{Correct}}              & \multicolumn{3}{|c|}{\textbf{Faulty}}               \\ \hline
        \rowcolor{lightgray} \cellcolor{white} & A    & R    & T                                     & A    & R    & T                                     \\ \hline \hline
        \textbf{VPN}                           & 1950 & 166  & 28                                    & 1059 & 510  & 7                                     \\ \hline
        \textbf{Book}                          & 800  & 3    & 20                                    & 100  & 712  & 0                                     \\ \hline
        \textbf{Store}                         & 1800 & 84   & 40                                    & 400  & 393  & 4                                   \\ \hline
        \textbf{Firewall}                      & 2100 & 0    & 21                                    & 50   & 494  & 1                                  \\ \hline
        \textbf{DNS}                           & 800  & 0    & 14                                    & 400  & 568  & 2                                   \\ \hline
        \textbf{Auction}                       & 2800 & 315  & 52                                    & 416  & 891  & 2                                  \\ \hline
        \textbf{CDN}                           & 500  & 0    & 6                                     & 200  & 745  & 7                                   \\ \hline
        \textbf{SIP}                           & 350  & 48   & 7                                     & 221  & 827  & 3                                   \\ \hline
        \textbf{POP3}                          & 1000 & 0    & 11                                    & 500  & 450  & 6                                   \\ \hline
        \textbf{Game}                          & 3000 & 106  & 36                                    & 800  & 509  & 16                                   \\ \hline
	\end{tabular}
		\caption{Statistics collected by our \toolName-generated session monitors for the test
		cases in \Cref{tab:example-properties}. For each test case, we consider both
		correct and faulty implementations, using either UDP or TCP as the transport
		protocol, with TCP evaluated under both reliable and unreliable network
		conditions. Column~\textbf{A} reports the total number of packets
		\textbf{A}ccepted by the monitor, \textbf{R} the total number of
		\textbf{R}ejected packets, and \textbf{T} the number of packets identified as
		re\textbf{T}ransmissions.%
    }
	\label{tab:example-results}
\end{table}

\change{\#123B: expand discussion}{%
  \noindent%
  \Cref{tab:example-results} summarises various packet statistics across the test cases in \Cref{tab:example-properties}:
}
\begin{itemize}
    \item For correct implementations (first, third, and fifth columns), monitors reject no packets.
    \item For faulty implementations (second, fourth, and sixth columns), monitors reject packets.
    \item With correct programs over TCP on unreliable networks (fifth column), monitors record retransmissions without introducing false rejects.
\end{itemize}
\change{\#123B: expand discussion}{%
  The statistics support the effectiveness of \toolName against question
  \ref{item:eval-correctness} (correct monitoring) and
  \ref{item:eval-concurrency} (monitoring of parallel sessions). %
}%
\change{\#123C: anomalous packet counts}{%
  \Cref{tab:example-results} also shows that when TCP is used, packet
  retransmissions can happen even if the intended protocol is correctly
  implemented -- both on reliable and unreliable networks. %
  Retransmission may occur when packets are lost (in unreliable netorks), or
  depending on their delivery speed. E.g., the sender's TCP stack may retransmit
  a packet if an ACK does not arrive quickly enough, or the recipient's TCP
  buffer may become full and drop some packets, causing their retransmission. %
  \toolName correctly handles these situations.%
}%
\section{Related Work}
\label{sec:related-work}

Session types have been extensively developed as standalone programming
languages~\cite{Ho93,HoYoCa08,GaVa10,Wa12,JaBaKr22}, and as libraries or tools
for existing languages such as Rust~\cite{JeMuLa15,Ko19,CuYo21,CuYoVa22,ChBaTo22},
Java~\cite{HuKoPeYoHo10,HuYoHo08}, Scala~\cite{ScYo16}, OCaml~\cite{Pa17,ImYoYu19},
Haskell~\cite{PuTo08,ImYuAg10,LiGa16,OrYo16}, Go~\cite{CaHuJoNgYo19,NgYo16}, and
others~\cite{Yoshida2024}.

Techniques for enforcing session types with run-time monitors have also been
studied in prior work, e.g.~\cite{BuFrSc21,BuFrScTrTu21,BoChDeHoYo13,BoChDeHoYo17,DBLP:journals/fac/NeykovaBY17,DBLP:journals/fmsd/DemangeonHHNY15,DBLP:conf/rv/NeykovaYH13}.
These approaches focus on \emph{application-layer} monitors abstracted from
the underlying network.
Burl{\`{o}} et al.~\cite{BuFrSc21,BuFrScTrTu21} study binary session types where at least one party is a closed-box process (i.e., not statically verified). They synthesize Scala monitors, prove correctness guarantees, and establish the impossibility of sound and complete black-box monitoring.
Bocchi et al.~\cite{BoChDeHoYo13,BoChDeHoYo17} developed a monitored-network framework based on $\pi$-calculus processes and multiparty session types. Their ``networks'' are at a different layer from ours: they model a global routing application (akin to a message broker), implemented with AMQP~\cite{AMQP,DBLP:journals/fmsd/DemangeonHHNY15}. That model allows unbounded buffering, so their semantics are close to our session types with queues (\Cref{def:local-semantics}) and do not address the synthesis requirements in \Cref{sec:formal-model:mon-synthesis} that motivate \Cref{def:monitor-state-st}.

There is also growing work on run-time enforcement of network properties without session types. For example,
\texttt{Hydra}~\cite{ReRuKiVeCaMoChMcFo23} deploys ``checkers'' on P4 switches
that enforce network-wide properties. These properties are expressed in terms
of packet trajectories through the network and observations of intermediate
state at each hop. \texttt{FLM}~\cite{JoBeChMaWa24} is a language and compiler
for enforcing line-rate network monitoring using programmable switches. Our
work is complementary: both \texttt{Hydra} and \texttt{FLM} could serve as
implementation platforms for the run-time monitors we propose.
At the microservice level, Grewal, Godfrey, and Hsu use run-time monitors to enforce policies~\cite{GrGoHs23}. Their goals are similar, but the technical setting differs: they rely on Istio Envoy proxies~\cite{IstioArchitecture} on end hosts, whereas we target lower-level P4 devices. They also use declarative tree policies, while we use multiparty session types; studying whether a tree-policy-like formalism could model and monitor session protocols is an interesting direction for future work.

\change{\#123A: cite Giallorenzo et al.}{%
  Giallorenzo et al.~\cite{DBLP:conf/icsoc/GiallorenzoMMMPP24} propose
  choreographic programming for specifying the global coordination between
  Cloud-native Network Functions (CNF) architecture components. The
  choreographic program is projected (i.e., compiled) into executable Java
  programs that perform Virtual Network Functions (VNF) such as intrusion
  detection and traffic filtering. They present a case study where a P4-enabled
  virtual switch (based on \texttt{BMv2}~\cite{bmv2}, also adopted in our
  evaluation) directs network traffic to the projected VNFs for network traffic
  monitoring. The work \cite{DBLP:conf/icsoc/GiallorenzoMMMPP24} is orthogonal
  to ours: they introduce a high-level software-defined network programming
  architecture and do not address the problem of tracking session protocols;
  moreover, their VNFs can implement and run arbitrary code without the
  constraints of P4-enabled devices (which are a major factor in our work). In
  principle, the P4 monitors generated by NEST could be deployed in the CNF
  architecture of \cite{DBLP:conf/icsoc/GiallorenzoMMMPP24} to perform session
  monitoring -- and their VNFs could deploy and control NEST monitors via
  P4Runtime~\cite{p4rtspec}. A question that links our work to theirs is: is it
  possible to synthesise NEST-style MAT tables from a choreographic program that
  describes a network-level monitoring policy? This would allow moving the
  monitoring and filtering logic from their (Java-based) VNFs to P4 devices.
  This is intriguing and non-trivial work that would require bridging the wide
  expressiveness gap between choreographic programming languages and P4. %
}%
\section{Conclusion and Future Work}
\label{sec:conclusion}

\subparagraph{Conclusion.}
In this work we addressed the challenge of enforcing session types directly in the network.
We developed a formal model of session-type-based monitors, synthesized
network-level monitors, and proved correctness under suitable network assumptions.
We then designed and implemented \toolName, which generates (1) session-type monitors for P4-enabled switches and (2) APIs for writing communicating programs tracked by these monitors.
Across diverse settings and protocols, our evaluation shows accurate blocking of incorrect messages while allowing correct ones, with low network overhead.

To our knowledge, this is the first work to leverage session types to
implement network-level monitors for application-level properties. %
\change{\#123B: better conclusion}{%
\toolName demonstrates that it is possible to automatically synthesize network-level monitors from application-level protocols and deploy these on programmable network switches. Our results have limitations, in part due to the restrictions of P4 – e.g., the fact that the input protocols must be finite-state and half-duplex; still, we demonstrate that even with these restrictions, network-level session monitoring can support complex multiparty protocols.
}

\subparagraph{Future Work.}%
\label{sec:conclusion-futurework}
Although our work is a first step toward network monitoring based on session types, several theoretical and practical challenges remain.

\smallskip\emph{Towards monitoring completeness.}
A natural next step is the dual of soundness (\Cref{thm:net-mon-bisim}): \emph{completeness}, i.e., rejection of all bad messages.
Proving completeness requires a precise characterization of ``bad'' messages (see \Cref{ex:bad-network-message,ex:st-monitor-reductions}), and~\cite[Theorem~21]{BuFrSc21} suggests that sound and complete monitoring may be unattainable in our setting.
Instead, we conjecture that our monitors are maximally strict: for any $\stT \neq \stEnd$, if any accepting transition of $\enc{\stT}$ is turned into \texttt{reject}, then there exists a network $\netN$ that satisfies the hypotheses of \Cref{thm:net-mon-bisim} but not its thesis.

\change{\#123C: formalising TCP monitors}{%
  \smallskip\emph{Formalising TCP-oriented monitors}. %
  Our formal model (\Cref{sec:formal-model}) focuses on the ``core logic'' of
  network-level session monitoring in an idealised network with perfect message
  delivery. This abstraction allows us to highlight the differences between our
  network-level monitors and previous work on application-level session
  monitoring
  \cite{BoChDeHoYo13,BoChDeHoYo17,DBLP:journals/fmsd/DemangeonHHNY15}; extending
  our formal model to cover TCP-oriented monitors under message duplication,
  loss, and reordering is valuable and challenging future work. It would require
  developing a (partial) formalisation of TCP, which is a significant
  undertaking worth a separate paper, as evidenced by previous work in this area
  (e.g.,
  \cite{DBLP:journals/scp/LockefeerWF16,DBLP:journals/jacm/BishopFMNRSSW19}).
  Therefore, we chose to focus our formalisation on the core monitor logic and
  empirically validate the TCP-oriented extension outlined in
  \Cref{sec:implementation-practical-protocols}.%
}

\change{\#123B,\#123C: clarify compatibility of input session types}{%
  \smallskip\emph{Ensuring properties of \toolName's input session types.} %
  As mentioned in \Cref{footnote:global-types-proj} and
  \Cref{remark:determining-liveness-half-duplex}, the current version of
  \toolName assumes that the local session types given as input are part of a
  multiparty protocol that is output-live (\Cref{def:live-net}) and half-duplex.
  \toolName can be extended to check and guarantee these properties, e.g., via
  bounded model checking, or by interfacing to existing tools (such as Scribble,
  $\nu$Scr, mpstk) to project local session types out of global types. %
  This extension would make \toolName more user-friendly %
  without impacting its core functionality (i.e., monitor synthesis)
  and the contributions of this work.%
}%

\smallskip\emph{Encryption.} %
End-to-end encryption below the session header is compatible with \toolName.
However, the current version assumes headers down to the session header are unencrypted, which may leak information.
Supporting encryption of packet and session headers is future work, potentially building on P4 encrypted-protocol techniques~\cite{HauserIPsecP42020,HauserMACsecP42020} and homomorphic encryption~\cite{Ge09}.

\smallskip\emph{Packet fragmentation.}
Another direction concerns packet fragmentation.
The current version of \toolName assumes a one-to-one correspondence
between session-type messages and network packets. This is often acceptable in modern IP networks with consistent MTUs, but application-level messages can still span multiple packets.
\toolName and our session API (\Cref{sec:implementation-st-api}) could
be extended to support messages spanning multiple packets, while still
avoiding network-level fragmentation, by adding a sequence number or
flag to the session header (\Cref{sec:implementation-st-headers}) to
indicate whether the current message continues in the next packet.

\smallskip\emph{Supporting other protocols and consistency models.}
Finally, we plan to broaden our support for diverse transport and consistency requirements. 
Our TCP-oriented monitor implementation should adapt to other reliable ordered transports, such as \texttt{QUIC}~\cite{rfc9000} and \texttt{SCTP}~\cite{rfc2960}. 
Beyond transport, we aim to extend \toolName to enforce network-level consistency models. Recent work has fruitfully connected session types with consistency guarantees such as causal consistency~\cite{MePe17} and linearizability~\cite{SoKr24}. In parallel, other work has shown how to formally enforce a range of consistency models for programmable network~\cite{YaSoLi18, Zh21, AmBaAmJu26}.
 
\bibliographystyle{plainurl}
\bibliography{ecoop26}

\newpage
\appendix
\renewcommand{\thesection}{\Alph{section}}

\iftodo
\fi

\section*{Appendix}

\section{Formal Model: Additional Details and Proofs}
\label{app:formal-model}

\begin{example}
    \label{ex:st-monitor-reductions-2}%
    Consider the session type: (for brevity, we omit the payload types)

    \smallskip%
    \centerline{\(%
    \stT \;=\; \left\{\begin{array}{@{}l@{}}
            \roleP \stIntC \stLabFmt{a} \stSeq \roleQ \stExtC \left\{\begin{array}{@{}l@{}}
                                                                \stLabFmt{c} \stSeq \stEnd \\
                                                                \stLabFmt{d} \stSeq \stEnd
                                                            \end{array}\right.
            \\%
            \roleP \stIntC \stLabFmt{b} \stSeq \roleQ \stExtC \stLabFmt{d} \stSeq \stEnd
        \end{array}\right.%
    \)}%
    \smallskip%

    By \Cref{def:monitor-state-st}, the corresponding monitor $\enc{\stT}$ can
    immediately accept not only the two top-level outputs to $\roleP$, but also
    the inputs from $\roleQ$ --- which appear later in $\stT$, but may be
    delivered earlier by the surrounding network. %
    For the top-level outputs, we have (by rule \textsc{STMon-IntC} in \Cref{def:monitor-state-st}):

    \smallskip%
    \centerline{\(%
            {\transition
                {\enc{\stT}}
                {\monLabYes{\stSendTag{\roleP}{\stLabFmt{a}}{}}}
                {\enc{\roleQ \stExtC \left\{\begin{array}{@{}l@{}}
                                         \stLabFmt{c} \stSeq \stEnd\\
                                         \stLabFmt{d} \stSeq \stEnd
                                     \end{array}\right.}}}
        \qquad\text{and}\qquad
            {\transition
                {\enc{\stT}}
                {\monLabYes{\stSendTag{\roleP}{\stLabFmt{b}}{}}}
                {\enc{\roleQ \stExtC \stLabFmt{d}} \stSeq \stEnd}}
    \)}%
    \smallskip%

    \noindent%
    Notice that the choice of message sent to $\roleP$ restricts what the
    monitor will accept from $\roleQ$ afterwards.
    The monitor $\enc{\stT}$ can also immediately accept the inputs from $\roleQ$,
    by rule~\textsc{STMon-IntPfx} in \Cref{def:monitor-state-st}. %
    The transition $\monLabYes{\stEnqueueTag{\roleQ}{\stLabFmt{c}}{\stS}}$
    is fired by the following derivation:

    \smallskip%
    \centerline{\(%
    \small%
        \Rule
        {
            \Rule
            {}
            {
                {\transition
                    {\enc{\roleQ \stExtC \left\{\begin{array}{@{}l@{}}
                                                    \stLabFmt{c} \stSeq \stEnd \\
                                                    \stLabFmt{d} \stSeq \stEnd
                                                \end{array}\right.}}
                    {\monLabYes{\stEnqueueTag{\roleQ}{\stLabFmt{c}}{}}}
                    {\enc{\stEnd}}}
            }
            {STMon-ExtC}
        }
        {\transition
            {\enc{\stT}}
            {\monLabYes{\stEnqueueTag{\roleQ}{\stLabFmt{c}}{}}}
            {\enc{\roleP \stIntC \stLabFmt{a} \stSeq \stEnd}}}
        {STMon-IntPfx}
    \)}%
    \smallskip%

    Instead, the transition $\monLabYes{\stEnqueueTag{\roleQ}{\stLabFmt{d}}{}}$
    is fired by the following derivation:

    \smallskip%
    \centerline{\(%
    \small%
        \Rule
        {
            \Rule
            {}
            {
                {\transition
                    {\enc{\roleQ \stExtC \left\{\begin{array}{@{}l@{}}
                                                    \stLabFmt{c} \stSeq \stEnd \\
                                                    \stLabFmt{d} \stSeq \stEnd
                                                \end{array}\right.}}
                    {\monLabYes{\stEnqueueTag{\roleQ}{\stLabFmt{d}}{}}}
                    {\enc{\stEnd}}}
            }
            {STMon-ExtC}
            \quad%
            \Rule
            {}
            {
                {\transition
                    {\enc{\roleQ \stExtC \stLabFmt{d} \stSeq \stEnd}}
                    {\monLabYes{\stEnqueueTag{\roleQ}{\stLabFmt{d}}{}}}
                    {\enc{\stEnd}}}
            }
            {STMon-ExtC}
        }
        {\transition
            {\enc{\stT}}
            {\monLabYes{\stEnqueueTag{\roleQ}{\stLabFmt{d}}{}}}
            {\enc{\roleP \stIntC \left\{\begin{array}{@{}l@{}}
                                             \stLabFmt{a} \stSeq \stEnd
                                             \\%
                                             \stLabFmt{b} \stSeq \stEnd
                                        \end{array}\right.}}}
        {STMon-IntPfx}
    \)}%
    \smallskip%

    \noindent%
    Observe that in the first case, the message $\stLabFmt{c}$ from $\roleQ$
    causes rule \textsc{STMon-IntPfx} to ``prune'' the session type
    $\stT$ in the monitor state, because in its internal choice there is only one
    branch in which the message $\stLabFmt{c}$ from $\roleQ$ can be received.
    After that, the monitor only accepts sending message $\stLabFmt{a}$ to
    $\roleP$, as it is the only output compatible with $\roleQ$.
    Instead, in the second case, the message $\stLabFmt{d}$ from $\roleQ$ causes
    rule \textsc{STMon-IntPfx} to keep both branches of $\stT$ in the monitor
    state: this is because in both branches the message $\stLabFmt{d}$ from
    $\roleQ$ can be received. After that, the monitor still allows sending
    either $\stLabFmt{a}$ or $\stLabFmt{b}$ to $\roleP$, as both options are
    compatible with $\roleQ$.

    This strategy for accepting messages is necessary
    because, depending on the overall multiparty interaction, messages from
    $\roleQ$ may be delivered before those to $\roleP$ are sent. This phenomenon is further
    illustrated in \Cref{ex:complex-mon-st-progression} below.
\end{example}

\begin{definition}[Half-Duplex Network]
    \label{def:half-duplex-net}%
    Write $\stQMsg{\roleP}{\stLab}{\stS} \in \stQ$ if queue $\stQ$ contains message $\stQMsg{\roleP}{\stLab}{\stS}$.
    We say a network $\netN$ is \emph{half-duplex} if, whenever
    $\transitionS{\netN}{\netTau}{\netNi} = \netBigPar{i \in I}{\netPair{\roleP[i]}{(\stqPair{\stT[i]}{\stQ[i]})}}$,
    then for all $i,j \in I$, $\stQMsg{\roleP[j]}{\stLab}{\stS} \in \stQ[i]$
    (for some $\stLab, \stS$)
    implies $\stQMsg{\roleP[i]}{\stLabi}{\stSi} \not\in \stQ[j]$
    (for any $\stLabi, \stSi$).
\end{definition}

\begin{example}[On the half-duplex restriction]
    \label{ex:half-duplex-motivation}
    Consider the following session types:
    \[
    \stT[\roleP] \;=\; \stRec{\stRecVar}{\stOut{\roleQ}{\stLabFmt{a}}{\stS} \stSeq \stIn{\roleQ}{\stLabFmt{b}}{\stS} \stSeq \stRecVar}
    \qquad{and}\qquad%
    \stT[\roleQ] \;=\; \stRec{\stRecVari}{\stOut{\roleP}{\stLabFmt{b}}{\stS} \stSeq \stIn{\roleP}{\stLabFmt{a}}{\stS} \stSeq \stRecVari}
    \]
    Consider the following simple unmonitored network $\netN$ with the two
    session types above plus input queues, for the end host roles $\roleP$ and
    $\roleQ$:
    \[
        \netN \;\;=\;\;%
        \netPair{\roleP}{\stqPair{\stT[\roleP]}{\stQEmpty}}%
        \;\netPar\;%
        \netPair{\roleQ}{\stqPair{\stT[\roleQ]}{\stQEmpty}}%
    \]
    This network executes with $\roleP$ and $\roleQ$ sending each other the
    messages $\stLabFmt{a}$ and $\stLabFmt{b}$, respectively. The messages are
    delivered to the input queues of $\roleP$ and $\roleQ$, reaching the network
    configuration $\netNi$ below:
    \[
        \netNi \;\;=\;\;%
        \netPair{\roleP}{\stqPair{\stIn{\roleQ}{\stLabFmt{b}}{\stS} \stSeq \stT[\roleP]}{\stQCons{\stQMsg{\roleQ}{\stLabFmt{b}}{\stS}}{\stQEmpty}}}%
        \;\netPar\;%
        \netPair{\roleQ}{\stqPair{\stIn{\roleP}{\stLabFmt{a}}{\stS} \stSeq \stT[\roleQ]}{\stQCons{\stQMsg{\roleP}{\stLabFmt{b}}{\stS}}{\stQEmpty}}}%
    \]
    Then, the session types of $\roleP$ and $\roleQ$ consume the respective input messages and empty the respective queues,
    looping back to the network configuration $\netN$.

    Observe that $\netN$ is output-live (by \Cref{def:live-net}) but is \emph{not}
    half-duplex (by \Cref{def:half-duplex-net}), because it can reach the
    configuration $\netNi$ above where the input queue of $\roleP$ contains a
    message from $\roleQ$, and \emph{vice versa}.

    Therefore, $\netN$ above does \emph{not} satisfy the hypotheses of
    \Cref{thm:net-mon-bisim}. And indeed, consider the instrumented version of
    $\netN$, by \Cref{def:consistent-mon-instr}:
    \[
        \mnetInstr{\netN} \;\;=\;\;%
        \mnetPair{\roleP}{\mstPair{\stqPair{\stT[\roleP]}{\stQEmpty}}{\enc{\stT[\roleP]}}}%
        \;\netPar\;%
        \mnetPair{\roleQ}{\mstPair{\stqPair{\stT[\roleQ]}{\stQEmpty}}{\enc{\stT[\roleQ]}}}%
    \]
    By \Cref{def:monitor-state-st}, the monitor $\enc{\stT[\roleP]}$ only
    accepts an outgoing message to $\roleQ$, and rejects any incoming message from $\roleQ$; and similarly,
    $\enc{\stT[\roleQ]}$ only accepts an outgoing message to $\roleP$, and rejects any incoming message from $\roleP$. %
    Therefore, the monitored network $\mnetInstr{\netN}$ cannot
    advance to a configuration corresponding to $\netNi$ above.

    In principle, one may attempt lifting the half-duplex restriction of \Cref{thm:net-mon-bisim} (and thus, covering the example above)
    by allowing our monitors to accept inputs coming from a role $\roleP$ even if such inputs are expected after an output to $\roleP$. %
    To this end, \Cref{def:monitor-state-st} may be relaxed by extending
    rules \tsc{STMon-IntPfx} and \tsc{STMon-ExtPfx}
    to ``skip'' internal choices from $\roleP$. %

    However, this extension to \Cref{def:monitor-state-st}, would cause a
    problem: the monitor $\enc{\stT[\roleP]}$ would become infinite-state, by
    accepting unbounded sequences of inputs from $\roleQ$, similarly to
    \Cref{ex:infinite-state-monitor}. The same issue would affect the monitor
    $\enc{\stT[\roleQ]}$. In general, this relaxed version of
    \Cref{def:monitor-state-st} would cause most non-trivial session types with
    recursion (including many examples we evaluate in \Cref{sec:evaluation})
    to generate infinite-state monitors that would not be representable in
    finite P4 tables by \toolName (\Cref{sec:design}).
\end{example}

\subsection{Proofs}
\label{appendix:SessionTypesAddionalDetails}

\begin{definition}[Auxiliary notation]
    \label{def:st-unfolding}%
    Given a session type $\stT$, %
    we define its \emph{unfolding} $\unfold{\stT} =
    \unfold{\stFmt{\subst{\stTi}{\stRecVar}{\stT}}}$ if $\stT =
    \stRec{\stRecVar}{\stTi}$, and $\unfold{\stT} = \stT$ otherwise.
\end{definition}

\begin{proposition}
    \label{lem:live-persistent}%
    If $\netN$ is output-live and $\transitionS{\netN}{\netTau}{\netNi}$, then $\netNi$
    is output-live.
\end{proposition}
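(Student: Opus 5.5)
The plan is a direct path-prefixing argument from \Cref{def:live-net}. We are given $\netN$ output-live and a finite sequence of internal transitions $\netN = \netN[0] \xrightarrow{\netTau[\gamma_0]} \netN[1] \cdots \xrightarrow{\netTau[\gamma_{n-1}]} \netN[n] = \netNi$. Here each step is a $\netTau$-labelled transition, i.e.\ a communication $\netCommLab{\roleP}{\roleQ}{\stLab}{\stS}$ or a dequeue $\netDequeueLab{\roleP}{\roleQ}{\stLab}{\stS}$, which are exactly the steps allowed in network paths. We must show that every fair path $\pi'$ starting at $\netNi$ is output-live. The idea is to take such a $\pi'$ and build the concatenated path $\pi = \netN[0] \cdots \netN[n-1] \cdot \pi'$, which starts at $\netN$.

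First I would check that $\pi$ is a network path: consecutive steps are $\tau$-transitions by construction, and the index set is still consecutive naturals. Second, I would show $\pi$ is fair. For each index $i \ge n$ the fairness obligations of items~\ref{item:fair:send} and~\ref{item:fair:recv} are inherited from $\pi'$, after shifting indices by $n$. For each prefix index $i < n$, an enabled communication or dequeue at $\netN[i]$ needs a witness at some $k \ge i$. The plan here is to argue that the obligation can always be discharged in the suffix, using fairness of $\pi'$ at position $0$ once we know the action is still enabled at $\netNi$.

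Third, I would deduce that $\pi$ is output-live, since $\netN$ is output-live. Its output-liveness condition at every position $i \ge n$ is literally the output-liveness condition of $\pi'$ at position $i-n$: the witnessing dequeue at some $k \ge i$ lies in the suffix, so it is a dequeue in $\pi'$. Hence $\pi'$ is output-live, and $\netNi$ is output-live.

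The main obstacle is the fairness of the prefix positions. An action enabled at $\netN[i]$ with $i<n$ might be disabled by later prefix steps and never taken, for example when $\roleP$'s internal choice is resolved toward a different branch. I would handle this as follows.
\begin{itemize}
  \item \emph{Sends.} Item~\ref{item:fair:send} only asks that $\roleP$ eventually sends \emph{some} message. Since sends are never disabled by other participants (\textsc{SQ-Recv} always applies), $\roleP$'s pending internal choice persists until $\roleP$ itself sends. So either $\roleP$ sends within the prefix, or it is still enabled at $\netNi$ and fairness of $\pi'$ supplies the witness.
  \item \emph{Dequeues.} The same persistence holds: only $\roleP$ can consume from its own queue, so an enabled dequeue stays enabled until $\roleP$ performs it. It is then either done in the prefix or handled by fairness of $\pi'$.
\end{itemize}
If the case analysis on which rules can disable an enabled action proves more delicate, the fallback is to read \Cref{def:live-net} as quantifying fairness obligations only from the start of the path where needed. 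In that reading, the suffix argument alone suffices.
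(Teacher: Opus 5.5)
Your proof is correct and follows the argument the paper relies on: the paper's proof simply defers to \cite[Prop.~4.9]{DBLP:journals/tocl/GhilezanPPSY23}, adapted to output-liveness. That argument also prefixes the finite $\tau$-run from $\netN$ to an arbitrary fair path from $\netNi$, uses output-liveness of $\netN$ on the concatenated path, and reads off output-liveness of the suffix. Your persistence argument is the right way to discharge fairness at the prefix positions, and the justification is short: while $\roleP$ is in an internal choice it cannot dequeue, and enqueues leave its type unchanged; while $\roleP$ waits in an external choice from $\roleQ$, its only possible move is the dequeue of the oldest $\roleQ$-message, which enqueues cannot change. The fallback of reading \Cref{def:live-net} differently is therefore unnecessary and should be dropped, since it would alter the definition rather than prove the statement.
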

\begin{proof}
    Similar to the proof of \cite[Prop.~4.9]{DBLP:journals/tocl/GhilezanPPSY23},
    but using our output-liveness (\Cref{def:live-net}) instead of their
    liveness \cite[Def.~4.7]{DBLP:journals/tocl/GhilezanPPSY23}.
\end{proof}

\begin{lemma}
    \label{lemma:mon-trans-congruent}%
    If $\stQ \equiv \stQ'$ and %
    $\transition{\enc{\stT}}{\monLabYes{\inTag\stQ}}{\enc{\stTi}}$,
    then %
    $\transition{\enc{\stT}}{\monLabYes{\inTag\stQi}}{\enc{\stTi}}$,
\end{lemma}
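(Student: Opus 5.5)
The plan is to reduce the lemma to a single swap and argue by induction. The congruence $\equiv$ on queues is the smallest congruence generated by swapping adjacent messages with distinct senders. So it suffices to prove a \emph{local diamond property} and then close under reflexivity, symmetry, transitivity and contexts. The local property is: if $\roleP \neq \roleQ$ and
\[
\transition{\enc{\stT}}{\monLabYes{\stEnqueueTag{\roleP}{\stLab}{\stS}}}{\transition{\enc{\stT[1]}}{\monLabYes{\stEnqueueTag{\roleQ}{\stLabi}{\stSi}}}{\enc{\stTi}}},
\]
then there is some $\stT[2]$ with
\[
\transition{\enc{\stT}}{\monLabYes{\stEnqueueTag{\roleQ}{\stLabi}{\stSi}}}{\transition{\enc{\stT[2]}}{\monLabYes{\stEnqueueTag{\roleP}{\stLab}{\stS}}}{\enc{\stTi}}}.
\]
Given this, the lemma follows by induction on the derivation of $\stQ \equiv \stQi$. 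For a swap inside a prefix, we first run the common prefix, then apply the diamond, then run the common suffix; this works because the multi-step relation $\monLabYes{\inTag\stQ}$ is defined by sequencing single transitions. Symmetry is immediate because the swap is symmetric, and transitivity chains the two derivations.

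The diamond itself is proved by induction on the sum of the heights of the two derivations. We first unfold with \textsc{STMon-Rec}; guardedness makes this terminate, and $\unfold{\cdot}$ from \Cref{def:st-unfolding} lets us assume $\stT$ is a choice. The case analysis is on the rules used for the two inputs.

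\begin{itemize}
\item \emph{Both steps are direct.} If both use \textsc{STMon-ExtC}, the first input consumes the top-level sender $\roleP$ and the second must then come from $\roleQ$ at the next level. Swapping them means accepting $\roleQ$'s message first by \textsc{STMon-ExtPfx} (we have $\roleQ \neq \roleP$), keeping the maximal set $K$ of branches that can accept it. Then $\roleP$'s message is accepted by \textsc{STMon-ExtC} on a branch in $K$. We must check that the continuation obtained equals $\stTi$.
\item \emph{One step uses a \textsc{Pfx} rule.} We commute it past the other step, using the induction hypothesis on the continuations $\stT[k]$. Here we must check that the maximal $K$ chosen in one order yields the same pruned type as in the other.
\item \emph{Both steps use \textsc{Pfx}.} This reduces directly to the induction hypothesis branchwise, for each $k$ in the surviving index set.
\end{itemize}

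The main obstacle is the bookkeeping of maximal pruning sets. In one order, the first pruning for $\roleP$ keeps the branches $K_1$ that can accept $\roleP$'s message, and the second keeps $K_2 \subseteq K_1$ that can further accept $\roleQ$'s message. In the other order we get $K_2' \subseteq K_1'$ from the opposite criteria. We need the final surviving sets to coincide, with equal continuations. I would prove this through an auxiliary claim, by induction on $\stT$: a branch can accept ``$\roleP$ then $\roleQ$'' if and only if it can accept ``$\roleQ$ then $\roleP$'', with identical residuals. This is exactly the diamond property applied to the continuations. To make the proof well-founded, the diamond and this claim must be proved simultaneously by induction.

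A subtle point arises when the second input was only enabled by the first. An example is a $\roleQ$ external choice that lies directly under a $\roleP$ choice, where after the $\roleP$ step it is accepted by \textsc{STMon-ExtC}. Taking $\roleQ$ first, it is instead accepted by a \textsc{Pfx} rule, which requires $\roleQ$ to differ from the top-level role. That holds because $\roleQ \neq \roleP$, and for internal choices we need $\roleQ \neq \roleP[i]$ for all $i$. I expect this last condition to fail in general for an internal choice at the top. In that case the lemma may have to rely on the fact that internal choices cannot be traversed by input-only sequences involving their recipients. I would first try to show that such configurations simply cannot arise in both orders, which makes the case vacuous.
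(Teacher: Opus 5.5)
Your decomposition is the one the paper uses. The paper also proves the claim for a single swap of two adjacent messages with different senders, then chains along the sequence of swaps witnessing $\stQ \equiv \stQi$, just as your prefix/diamond/suffix argument does. The difference is that the paper only asserts the one-swap case, while you try to prove it. The situation you worry about in your last paragraph is not vacuous: with \textsc{STMon-IntPfx} as written, the one-swap property is false, and so is the lemma itself. That rule requires the incoming sender to differ from every recipient of the \emph{current} index set. An earlier input can prune away exactly the branches whose recipient is the later sender. Concretely (payloads omitted, $\roleP,\roleQ,\roleR$ distinct), take
\[
\stT \;=\; \stFmt{\sum}\left\{\begin{array}{@{}l@{}}
  \roleR \stIntC \stLabFmt{c} \stSeq \roleP \stExtC \stLabFmt{a} \stSeq \roleQ \stExtC \stLabFmt{b} \stSeq \stEnd\\
  \roleQ \stIntC \stLabFmt{d} \stSeq \stEnd
\end{array}\right.
\qquad
\stQ \;=\; \stQCons{\stQMsg{\roleP}{\stLabFmt{a}}{}}{\stQMsg{\roleQ}{\stLabFmt{b}}{}}
\;\equiv\;
\stQCons{\stQMsg{\roleQ}{\stLabFmt{b}}{}}{\stQMsg{\roleP}{\stLabFmt{a}}{}} \;=\; \stQi
\]
First, $\enc{\stT}$ accepts $\stEnqueueTag{\roleP}{\stLabFmt{a}}{}$ by \textsc{STMon-IntPfx}: the recipients $\roleR,\roleQ$ differ from $\roleP$, and only the first branch can accept, so it alone survives. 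This reaches $\enc{\roleR \stIntC \stLabFmt{c} \stSeq \roleQ \stExtC \stLabFmt{b} \stSeq \stEnd}$. Its only recipient is now $\roleR$, so it accepts $\stEnqueueTag{\roleQ}{\stLabFmt{b}}{}$, again by \textsc{STMon-IntPfx}, reaching $\enc{\roleR \stIntC \stLabFmt{c} \stSeq \stEnd}$. Hence $\transition{\enc{\stT}}{\monLabYes{\enqueueTag\stQ}}{\enc{\roleR \stIntC \stLabFmt{c} \stSeq \stEnd}}$. However, $\enc{\stT}$ has no accepting transition for $\stEnqueueTag{\roleQ}{\stLabFmt{b}}{}$. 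On an internal choice the only rule yielding an accepting input is \textsc{STMon-IntPfx}, and its premise fails because the $\stLabFmt{d}$-branch has recipient $\roleQ$. So $\stQi$ is rejected outright.

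Your third case (both steps by a prefix rule, ``reducing branchwise to the induction hypothesis'') is therefore where the argument breaks. The bookkeeping of maximal sets is not the problem: accepting transitions are deterministic, so surviving branches and residuals would agree whenever the diamond holds for the continuations. The obstruction is the recipient check at the choice itself. For the $\roleQ$-input it ranges over the already pruned set in one order and over all of $I$ in the other. The configurations you hoped to rule out occur in exactly one of the two orders, so no induction can establish your diamond.

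A true statement needs one of two things. One option is extra hypotheses on $\stT$ and $\stQ$, e.g.\ that no sender of a queued message is a recipient of an internal choice traversed while accepting the queue. Reachable configurations of output-live, half-duplex networks may supply this, but that needs its own argument. The other option is a change to \textsc{STMon-IntPfx}. The paper's sketch does not supply this step either, so the gap is shared. It also carries over to Corollary~\ref{lemma:mon-congruent}: for the $\stT$ above, one side is defined and the other is not.
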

\begin{proof}
    We first prove that if two queues $\stQii$ and $\stQiii$ are equal except
    for the order of two consecutive messages having different senders, then
    $\transition{\enc{\stT}}{\monLabYes{\inTag\stQii}}{\enc{\stTi}}$ implies
    $\transition{\enc{\stT}}{\monLabYes{\inTag\stQiii}}{\enc{\stTi}}$.

    We then prove the main statement by observing that, by hypothesis and
    \Cref{def:queue-congruence}, there is a sequence of $n$ applications of
    $\equiv$ such that $\stQ = \stQ[0] \equiv \stQ[1] \equiv \ldots \equiv
    \stQ[n] = \stQi$, and for all $i \in 1..n-1$, the queue $\stQ[i]$ is
    equal to $\stQ[i+1]$ except for the order of two consecutive messages having
    different senders. %
    Therefore, for $i \in 1..n-1$, $\transition{\enc{\stT}}{\monLabYes{\inTag\stQ[i]}}{\enc{\stTi}}$ implies
    $\transition{\enc{\stT}}{\monLabYes{\inTag\stQ[i+1]}}{\enc{\stTi}}$ (by the result above),
    and we obtain the thesis.
\end{proof}

\begin{corollary}
    \label{lemma:mon-congruent}%
    If $\stQ \equiv \stQ'$ and $\enc{\stqPair{\stT}{\stQ}}$ is defined, then
    $\enc{\stqPair{\stT}{\stQ}} = \enc{\stqPair{\stT}{\stQi}}$.
\end{corollary}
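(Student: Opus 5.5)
The statement follows from \Cref{def:monitor-state-st} and \Cref{lemma:mon-trans-congruent}, with no additional machinery. By the definition of $\enc{\stqPair{\stT}{\stQ}}$, this monitor state is defined and equal to some $\enc{\stTi}$ exactly when $\transition{\enc{\stT}}{\monLabYes{\inTag\stQ}}{\enc{\stTi}}$ holds. The plan is to transport this sequence of accepting transitions along the congruence $\stQ \equiv \stQi$, and then read the result back through the same definition.

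First, assume $\enc{\stqPair{\stT}{\stQ}}$ is defined. Then we fix $\stTi$ such that $\enc{\stqPair{\stT}{\stQ}} = \enc{\stTi}$, which gives $\transition{\enc{\stT}}{\monLabYes{\inTag\stQ}}{\enc{\stTi}}$. Second, apply \Cref{lemma:mon-trans-congruent} with the hypothesis $\stQ \equiv \stQi$. This yields $\transition{\enc{\stT}}{\monLabYes{\inTag\stQi}}{\enc{\stTi}}$. Third, the definition of $\enc{\stqPair{\stT}{\stQi}}$ then gives $\enc{\stqPair{\stT}{\stQi}} = \enc{\stTi}$, and this equals $\enc{\stqPair{\stT}{\stQ}}$.

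The only real obstacle is well-definedness. The ``if and only if'' clause in \Cref{def:monitor-state-st} defines $\enc{\stqPair{\stT}{\stQ}}$ as \emph{the} $\enc{\stTi}$ reached by feeding $\stQ$. However, rules \textsc{STMon-IntPfx}/\textsc{STMon-ExtPfx} could in principle lead to different targets. For this step I would argue determinism of accepting transitions $\monLabYes{\stEnqueueTag{\roleQ}{\stLab}{\stS}}$ from a given $\enc{\stT}$, by induction on the derivation:
\begin{itemize}
  \item labels within a choice are pairwise distinct, so \textsc{STMon-ExtC} has a unique target;
  \item the premises of \textsc{STMon-IntPfx}/\textsc{STMon-ExtPfx} exclude the immediately involved role in the \textsc{Pfx} rules, so \textsc{STMon-ExtC} and these rules cannot both apply to the same label;
  \item the maximal $K$ is unique, and each $\stTi[k]$ is unique by the induction hypothesis.
\end{itemize}
Here \textsc{STMon-Rec} is handled by guardedness. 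If this determinism is taken as implicit in the definition, which the paper appears to do, the corollary is immediate from the three steps above.
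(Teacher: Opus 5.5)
Your proposal is correct and is essentially the paper's proof, which likewise derives the corollary directly from \Cref{lemma:mon-trans-congruent} and the defining clause of $\enc{\stqPair{\stT}{\stQ}}$ in \Cref{def:monitor-state-st-queue}. Your extra discussion of determinism concerns whether that notation is well defined, which the paper takes for granted. It is not needed for the corollary itself: applying the lemma in both directions (by symmetry of $\equiv$) already shows that $\enc{\stqPair{\stT}{\stQ}}$ and $\enc{\stqPair{\stT}{\stQi}}$ are characterised by exactly the same transitions.
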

\begin{proof}
    Consequence of \Cref{lemma:mon-trans-congruent} and \Cref{def:monitor-state-st-queue}.
\end{proof}

\begin{lemma}
    \label{lem:type-mon-dequeuing}%
    Assume that $\mnetInstr{\mnetPair{\roleP}{\stqT}}$ is defined. Then, we have that %
    $\transition{\netPair{\roleP}{\stqT}}
                {\netDequeueLab{\roleP}{\roleQ}{\stLab}{\stS}}
                {\netPair{\roleP}{\stqTi}}$
    implies
    $\transition{\mnetInstr{\netPair{\roleP}{\stqT}}}
                {\mnetDequeueLab{\roleP}{\roleQ}{\stLab}{\stS}}
                {\mnetInstr{\netPair{\roleP}{\stqTi}}}$.
\end{lemma}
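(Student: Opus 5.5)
The plan is to unfold $\mnetInstr{\cdot}$ by \Cref{def:consistent-mon-instr} and reduce the claim to a statement about monitor states. By \Cref{def:consistent-mon-instr}, $\mnetInstr{\netPair{\roleP}{\stqT}} = \mnetPair{\roleP}{(\mstPair{\stqT}{\enc{\stqT}})}$. The hypothesised transition $\transition{\netPair{\roleP}{\stqT}}{\netDequeueLab{\roleP}{\roleQ}{\stLab}{\stS}}{\netPair{\roleP}{\stqTi}}$ can only come from \textsc{Net-Deq}, hence from \textsc{SQ-Deq}. So we have $\stqT = \stqPair{\stT}{\stQ}$ with $\stQ \equiv \stQCons{\stQMsg{\roleQ}{\stLab}{\stS}}{\stQi}$, a transition $\transition{\stT}{\stInTag{\roleQ}{\stLab}{\stS}}{\stTi}$, and $\stqTi = \stqPair{\stTi}{\stQi}$. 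Applying \textsc{M-Dequeue} and then \textsc{MNet-Deq} gives
\[
\transition{\mnetInstr{\netPair{\roleP}{\stqT}}}{\mnetDequeueLab{\roleP}{\roleQ}{\stLab}{\stS}}{\mnetPair{\roleP}{(\mstPair{\stqTi}{\enc{\stqT}})}}.
\]
It therefore suffices to show the key equality $\enc{\stqPair{\stT}{\stQ}} = \enc{\stqPair{\stTi}{\stQi}}$, with the right-hand side defined. In words: dequeuing a message does not change the monitor state, because that message had already been fed to the monitor.

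First, \Cref{lemma:mon-congruent} rewrites the left-hand side as $\enc{\stqPair{\stT}{\stQCons{\stQMsg{\roleQ}{\stLab}{\stS}}{\stQi}}}$, which is defined by assumption. By \Cref{def:monitor-state-st-queue}, this is the state reached by first performing $\monLabYes{\stEnqueueTag{\roleQ}{\stLab}{\stS}}$ from $\enc{\stT}$ and then feeding $\stQi$. So it remains to show two things. The first is that $\transition{\enc{\stT}}{\monLabYes{\stEnqueueTag{\roleQ}{\stLab}{\stS}}}{\enc{\stTi}}$ holds. The second is that this is the \emph{only} accepting transition of $\enc{\stT}$ with that label. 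Once both hold, the remaining run on $\stQi$ starts from $\enc{\stTi}$ and yields exactly $\enc{\stqPair{\stTi}{\stQi}}$, which is then defined and equal to the left-hand side.

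For existence, I would argue by induction on the number of recursion unfoldings. By \textsc{S-Rec} and guardedness, $\transition{\stT}{\stInTag{\roleQ}{\stLab}{\stS}}{\stTi}$ means that $\unfold{\stT}$ (\Cref{def:st-unfolding}) is an external choice $\stExtSum{i \in I}{\roleQ}{\stChoice{\stLab[i]}{\stS[i]} \stSeq \stT[i]}$ with $\stLab = \stLab[k]$, $\stS = \stS[k]$, and $\stTi = \stT[k]$. Then \textsc{STMon-ExtC} fires on the unfolded type, and \textsc{STMon-Rec}, applied once per unfolding, lifts it back to $\enc{\stT}$.

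Uniqueness is the main obstacle: I must check that no other rule yields a different target for the same label. \textsc{STMon-IntC} and \textsc{STMon-IntPfx} do not apply, since $\unfold{\stT}$ is an external choice. \textsc{STMon-ExtPfx} requires the new sender to differ from the choice's sender, but here both are $\roleQ$, so it does not apply either. Within \textsc{STMon-ExtC}, labels are pairwise distinct, so the index $k$ is unique. \textsc{STMon-Rec} only unfolds, so it adds no alternatives. I would state this determinism as a small auxiliary fact, proved by induction on the derivation and up to unfolding, and then conclude as described above.
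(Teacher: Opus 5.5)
Your proposal is correct and follows the paper's proof step by step. The shared chain is: inversion of \textsc{Net-Deq} and \textsc{SQ-Deq}; \textsc{STMon-Rec}/\textsc{STMon-ExtC} to make $\enc{\stT}$ accept $\stEnqueueTag{\roleQ}{\stLab}{\stS}$; the queue-congruence result to get $\enc{\stqPair{\stT}{\stQ}} = \enc{\stqPair{\stTi}{\stQi}}$; and finally \textsc{M-Dequeue} and \textsc{MNet-Deq}. Your explicit uniqueness check (excluding \textsc{STMon-ExtPfx} because the sender coincides, and using distinct labels in \textsc{STMon-ExtC}) is left implicit in the paper when it asserts that equality, and it is exactly what justifies it.
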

\begin{proof}
    By inversion of rule \textsc{Net-Dequeue} in \Cref{def:network-semantics}
    we have $\transition{\stqT}{\stIntTag{\dequeueTag{\roleQ}{\stLab}{\stS}}}{\stqTi}$, and thus:
    \begin{enumerate}
        \item\label{item:proof:type-mon-dequeuing:sttii-stqi}
            By inversion of rule \textsc{SQ-Dequeue} in \Cref{def:local-semantics-queues},
            $\stqT = \stqPair{\stT}{\stQ}$ such that, for some $\stTii$ and $\stQi$, %
            $\transition{\stT}{\stInTag{\roleP}{\stLab}{\stS}}{\stTii}$ and $\stQ \equiv
            \stQCons{\stQMsg{\roleQ}{\stLab}{\stS}}{\stQi}$.
        \item\label{item:proof:type-mon-dequeuing:mon-cont-shape}
            Therefore, by rule \textsc{SQ-Dequeue} in
            \Cref{def:local-semantics-queues} we have $\stqTi = \stqPair{\stTii}{\stQi}$.
        \item\label{item:proof:type-mon-dequeuing:mon-yes} By
            \Cref{item:proof:type-mon-dequeuing:sttii-stqi} and
            \Cref{def:local-semantics}, $\unfold{\stT}$ %
            must be an external choice from $\roleQ$ including a message with
            label $\stLab$ and payload type $\stS$. Thus, by
            \Cref{def:monitor-state-st}, %
            we have
            $\transition{\enc{\stT}}{\monLabYes{\stEnqueueTag{\roleQ}{\stLab}{\stS}}}{\enc{\stTii}}$
            (using rule \textsc{STMon-Rec} to unfold $\stT$ if needed, and then
            \textsc{STMon-ExtC}).
        \item\label{item:proof:type-mon-dequeuing:mon-cont-equiv} By
            \Cref{item:proof:type-mon-dequeuing:sttii-stqi},
            \Cref{item:proof:type-mon-dequeuing:mon-cont-shape},
            \Cref{item:proof:type-mon-dequeuing:mon-yes},
            \Cref{def:monitor-state-st-queue}, and
            \Cref{lemma:mon-trans-congruent}, %
            $\enc{\stqT} = \enc{\stqPair{\stT}{\stQ}} = \enc{\stqPair{\stTii}{\stQi}} = \enc{\stqTi}$.
        \item\label{item:proof:type-mon-deueuing:mon-trans}
            Hence, by rule \textsc{M-Dequeue} in \Cref{def:st-monitor-semantics}
            and \Cref{item:proof:type-mon-dequeuing:mon-cont-equiv},
            we have %
            $\transition
                {\mstPair{\stqT}{\enc{\stqT}}}
                {\monLabSilent{\dequeueTag{\roleQ}{\stLab}{\stS}}}
                {\mstPair{\stqTi}{\enc{\stqT}}} =
                {\mstPair{\stqTi}{\enc{\stqTi}}}$.
        \item\label{item:proof:type-mon-deueuing:mon-net-trans} Thus, by
            \Cref{item:proof:type-mon-deueuing:mon-trans} and rule
            \textsc{MNet-Dequeue} in \Cref{def:monitored-network-semantics} we have \thesisNewline
            $\transition{\mnetPair{\roleP}{\mstPair{\stqT[1]}{\enc{\stqT}}}}
                        {\mnetDequeueLab{\roleP[i]}{\roleQ}{\stLab}{\stS}}
                        {\mnetPair{\roleP}{{\mstPair{\stqTi}{\enc{\stqTi}}}}}$.
    \end{enumerate}
    Therefore, by \Cref{item:proof:type-mon-deueuing:mon-net-trans}
    and \Cref{def:consistent-mon-instr}
    we obtain the thesis.
\end{proof}

\begin{lemma}
    \label{lem:mon-type-dequeuing}%
    If %
    $\transition{\mnetInstr{\netPair{\roleP}{\stqT}}}
                {\mnetDequeueLab{\roleP}{\roleQ}{\stLab}{\stS}}
                {\mnetNi}$, then
    $\exists \stqTi$ such that %
    $\transition{\netPair{\roleP}{\stqT}}
                {\netDequeueLab{\roleP}{\roleQ}{\stLab}{\stS}}
                {\netPair{\roleP}{\stqTi}}$
    and %
    $\mnetNi = \mnetInstr{\netPair{\roleP}{\stqTi}}$.
\end{lemma}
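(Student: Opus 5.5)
This is the converse of \Cref{lem:type-mon-dequeuing}. I would prove it by inverting the rules that can produce the transition in the hypothesis, and then reuse the same monitor-state bookkeeping from that lemma's proof.

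\textbf{Step 1: invert the monitored transition.} By \Cref{def:consistent-mon-instr}, $\mnetInstr{\netPair{\roleP}{\stqT}} = \mnetPair{\roleP}{\mstPair{\stqT}{\enc{\stqT}}}$, and this is assumed to be defined. The label $\mnetDequeueLab{\roleP}{\roleQ}{\stLab}{\stS}$ is a $\tau$-dequeue label. On a single end host, the only rule of \Cref{def:monitored-network-semantics} that emits such a label is \textsc{MNet-Deq}; the \textsc{MNet-Par} and \textsc{MNet-Comm} rules do not apply to a single pair. So $\mnetNi = \mnetPair{\roleP}{\monTi}$ with $\transition{\mstPair{\stqT}{\enc{\stqT}}}{\monLabSilent{\dequeueTag{\roleQ}{\stLab}{\stS}}}{\monTi}$.

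Among the rules of \Cref{def:st-monitor-semantics}, only \textsc{M-Dequeue} emits a silent label. The good and bad rules all carry $\checkmark$ or $\xmark$ labels. Inverting \textsc{M-Dequeue} gives $\transition{\stqT}{\stIntTag{\dequeueTag{\roleQ}{\stLab}{\stS}}}{\stqTi}$ and $\monTi = \mstPair{\stqTi}{\enc{\stqT}}$, with the monitor state unchanged. Applying \textsc{Net-Deq} to the first of these yields the required unmonitored transition $\transition{\netPair{\roleP}{\stqT}}{\netDequeueLab{\roleP}{\roleQ}{\stLab}{\stS}}{\netPair{\roleP}{\stqTi}}$.

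\textbf{Step 2: show the monitor state is consistent.} It remains to prove $\enc{\stqT} = \enc{\stqTi}$, which gives $\mnetNi = \mnetInstr{\netPair{\roleP}{\stqTi}}$. Here I would replay items 1--4 of the proof of \Cref{lem:type-mon-dequeuing}.

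First, invert \textsc{SQ-Deq}: this gives $\stqT = \stqPair{\stT}{\stQ}$ with $\transition{\stT}{\stInTag{\roleQ}{\stLab}{\stS}}{\stTii}$, $\stQ \equiv \stQCons{\stQMsg{\roleQ}{\stLab}{\stS}}{\stQi}$, and $\stqTi = \stqPair{\stTii}{\stQi}$.

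Next, $\unfold{\stT}$ is an external choice from $\roleQ$, so \textsc{STMon-Rec} and \textsc{STMon-ExtC} give $\transition{\enc{\stT}}{\monLabYes{\stEnqueueTag{\roleQ}{\stLab}{\stS}}}{\enc{\stTii}}$.

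Finally, since $\enc{\stqPair{\stT}{\stQ}}$ is defined, \Cref{lemma:mon-trans-congruent} lets me reorder the queue to $\stQCons{\stQMsg{\roleQ}{\stLab}{\stS}}{\stQi}$. Feeding this reordered queue to $\enc{\stT}$ gives $\enc{\stqPair{\stT}{\stQ}} = \enc{\stqPair{\stTii}{\stQi}}$.

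\textbf{Main obstacle.} The delicate point is the last equality. The monitor's run over the reordered queue must consume the head message $\stQMsg{\roleQ}{\stLab}{\stS}$ via the \textsc{STMon-ExtC} step to $\enc{\stTii}$. It must not do so via some other transition, since the derivation system is nondeterministic in principle. I would argue this directly: when $\unfold{\stT}$ is an external choice from $\roleQ$, the side condition $\roleQ \neq \roleP$ blocks \textsc{STMon-ExtPfx}. Pairwise-distinct labels then make \textsc{STMon-ExtC} the unique derivation, so $\monLabYes{\stEnqueueTag{\roleQ}{\stLab}{\stS}}$ from $\enc{\stT}$ is deterministic. This settles the first step of the run. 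The remaining messages $\stQi$ are then fed to $\enc{\stTii}$, which matches \Cref{def:monitor-state-st-queue} exactly.
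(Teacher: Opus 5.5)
Your proposal is correct and follows the paper's own argument. You invert \textsc{MNet-Deq}, \textsc{M-Dequeue} and \textsc{SQ-Deq}, then reuse the monitor-state bookkeeping from the proof of \Cref{lem:type-mon-dequeuing} (via \Cref{lemma:mon-trans-congruent} and \Cref{def:monitor-state-st-queue}) to obtain $\enc{\stqT} = \enc{\stqTi}$, and conclude by \Cref{def:consistent-mon-instr}; your explicit check that the first monitor step on the reordered queue must be the \textsc{STMon-ExtC} step to $\enc{\stTii}$ is correctly argued and makes precise a point the paper leaves implicit.
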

\begin{proof}
    By inversion of rule \textsc{MNet-Dequeue} in \Cref{def:monitored-network-semantics}, %
    rule \textsc{M-Dequeue} in \Cref{def:st-monitor-semantics}, %
    and rule \textsc{SQ-Dequeue} in \Cref{def:local-semantics-queues} %
    we infer the same shape of $\stqT$ and $\stqTi$ obtained in the proof of
    \Cref{lem:type-mon-dequeuing} above, %
    and thus, the same transition
    $\transition{\netPair{\roleP}{\stqT}}
                {\netDequeueLab{\roleP}{\roleQ}{\stLab}{\stS}}
                {\netPair{\roleP}{\stqTi}}$. %
    Then, we conclude $\mnetNi = \mnetInstr{\netPair{\roleP}{\stqTi}}$
    by \Cref{def:consistent-mon-instr}.%
\end{proof}

\begin{proposition}
    \label{lem:type-send-mon-send}
    If \(\transition{\stqPair{\stT}{\stQ}}
                    {\stSendTag{\roleP}{\stLab}{\stS}}
                    {\stqPair{\stTi}{\stQ}}\),
    then
    \(\transition{\mstPair{\stqPair{\stT}{\stQ}}{\enc{\stqPair{\stT}{\stQ}}}}
                    {{\monLabYes{\stSendTag{\roleP}{\stLab}{\stS}}}}
                    {{\mstPair{\stqPair{\stTi}{\stQ}}{\enc{\stqPair{\stTi}{\stQ}}}}}\).
\end{proposition}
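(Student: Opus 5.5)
The plan is to reduce the statement to a property of the monitor alone, and then discharge the monitored-pair transition with rule \textsc{M-Good} of \Cref{def:st-monitor-semantics}. The premise $\transition{\stqPair{\stT}{\stQ}}{\stSendTag{\roleP}{\stLab}{\stS}}{\stqPair{\stTi}{\stQ}}$ is already available by hypothesis. So it suffices to show
\[
\transition{\enc{\stqPair{\stT}{\stQ}}}{\monLabYes{\stSendTag{\roleP}{\stLab}{\stS}}}{\enc{\stqPair{\stTi}{\stQ}}}.
\]
First, I invert \textsc{SQ-Send} and then \textsc{S-IntC}/\textsc{S-Rec}. This gives $\unfold{\stT} = \stIntSum{i \in I}{\roleP[i]}{\stChoice{\stLab[i]}{\stS[i]} \stSeq \stT[i]}$ with some $k \in I$ such that $\roleP[k]=\roleP$, $\stLab[k]=\stLab$, $\stS[k]=\stS$ and $\stT[k]=\stTi$. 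Since \textsc{STMon-Rec} lets $\enc{\stT}$ mimic the transitions of $\enc{\unfold{\stT}}$, I may assume $\stT$ is this internal choice.

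Second, I prove an auxiliary claim by induction on the length of $\stQ$. Let $\stQ = \stQMsg{\roleQ[1]}{\stLab[1]}{\stS[1]}\cdots\stQMsg{\roleQ[n]}{\stLab[n]}{\stS[n]}$, and suppose both $\enc{\stqPair{\stT}{\stQ}}$ and $\enc{\stqPair{\stT[k]}{\stQ}}$ are defined. I read the latter as implicit in the statement, since the conclusion mentions it. The claim is that
\[
\transition{\enc{\stT}}{\monLabYes{\inTag\stQ}}{\enc{\stIntSum{j \in K}{\roleP[j]}{\stChoice{\stLab[j]}{\stS[j]} \stSeq \stT[j]^{\stQ}}}},
\]
where $k \in K \subseteq I$ and $\transition{\enc{\stT[j]}}{\monLabYes{\inTag\stQ}}{\enc{\stT[j]^{\stQ}}}$ for each $j \in K$. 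In the inductive step, the enqueue of $\stQMsg{\roleQ[h]}{\stLab[h]}{\stS[h]}$ at an internal choice can only be derived by \textsc{STMon-IntPfx}, because \textsc{STMon-IntC} emits only send labels. The side condition $\roleQ[h] \neq \roleP[i]$ holds because $\enc{\stqPair{\stT}{\stQ}}$ is defined. The shape of the monitor state is therefore preserved: it is still an internal choice over a subset of the branches, with advanced continuations. Crucially, $k$ remains in the index set. The branch $\enc{\stT[k]^{\stQ[<h]}}$ accepts the next message, because $\enc{\stqPair{\stT[k]}{\stQ}}$ is defined, and $K$ is required to be \emph{maximal}, so $k$ must belong to it.

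Third, \textsc{STMon-IntC} applied to this state yields
\[
\transition{}{\monLabYes{\stSendTag{\roleP}{\stLab}{\stS}}}{\enc{\stT[k]^{\stQ}}},
\]
and $\enc{\stT[k]^{\stQ}} = \enc{\stqPair{\stTi}{\stQ}}$ by \Cref{def:monitor-state-st-queue}. Combining this with the hypothesis via \textsc{M-Good} gives the thesis.

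The main obstacle is that \Cref{def:monitor-state-st-queue} defines $\enc{\stqPair{\stT}{\stQ}}$ as the target of a monitor run, so the argument needs that target to be unique. This means proving that monitor enqueue transitions are deterministic, again by induction on derivations. \textsc{STMon-ExtC} and \textsc{STMon-ExtPfx} are mutually exclusive because of the condition $\roleQ \neq \roleP$, and maximality fixes $K$ in the prefix rules. The delicate point is that the continuations $\stTi[k]$ in the premises must themselves be unique, which is where the induction hypothesis is used. I would establish this determinism lemma first and then use it silently in the two identifications of $\enc{\cdot}$ of a queued type above.
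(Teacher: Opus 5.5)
Your proposal is correct, and more careful than the paper's own argument. The paper's proof is your first and third steps only. It inverts \textsc{SQ-Send} to get that $\unfold{\stT}$ is an internal choice offering $\stSendTag{\roleP}{\stLab}{\stS}$, and then calls the result a ``direct consequence'' of \textsc{STMon-IntC} (possibly under \textsc{STMon-Rec}) and \textsc{M-Good}.

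Taken literally, that argument covers only $\stQ = \stQEmpty$, where the monitor component is $\enc{\stT}$ itself. For a non-empty queue, the monitor is $\enc{\stT}$ after being fed $\stQ$ through \textsc{STMon-IntPfx}, which may prune branches. The paper does not show that the branch towards $\roleP$ survives this pruning. Nor does it show that $\enc{\stqPair{\cdot}{\cdot}}$ in \Cref{def:monitor-state-st-queue} is well defined. Your induction on $\stQ$ fills the first gap: only \textsc{STMon-IntPfx} can fire, the state stays an internal choice, and maximality of $K$ keeps index $k$. Your determinism lemma for accepting enqueue transitions fills the second.

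Your extra assumption that $\enc{\stqPair{\stTi}{\stQ}}$ is defined is not just a convenient reading: without it the statement is false. Take the type $\stT$ of \Cref{ex:st-monitor-reductions-2} with queue $\stQMsg{\roleQ}{\stLabFmt{c}}{}$.
\begin{itemize}
\item The \textsc{STMon-IntPfx} derivation shown in that example prunes the $\stLabFmt{b}$ branch, so the monitor is $\enc{\roleP \stIntC \stLabFmt{a} \stSeq \stEnd}$.
\item \textsc{SQ-Send} still allows $\stSendTag{\roleP}{\stLabFmt{b}}{}$, reaching $\stqPair{\roleQ \stExtC \stLabFmt{d} \stSeq \stEnd}{\stQMsg{\roleQ}{\stLabFmt{c}}{}}$.
\item The monitor of that target is undefined, and the pruned monitor rejects the send via \textsc{STMon-Bad}.
\end{itemize}
So when this proposition is used in the proof of \Cref{thm:net-mon-bisim}, the caller must discharge that definedness. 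This should follow from output-liveness and half-duplex, by an argument like the one in \Cref{lem:live-hd-net-enqueue-mon-enqueue}. In the example above, the queued $\stLabFmt{c}$ could never be consumed.
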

\begin{proof}
    By inversion of rule \textsc{SQ-Send} in \Cref{def:local-semantics-queues},
    $\unfold{\stT}$ %
    must be an internal choice that sends a message to $\roleP$ with label
    $\stLab$ and payload type $\stS$. Then, the result is direct consequence of
    rules \textsc{STMon-IntC} (and possibly \textsc{STMon-Rec})
    in \Cref{def:monitor-state-st-queue},
    and \Cref{def:st-monitor-semantics}.
\end{proof}

\newcommand{\ioSeq}{\tilde{\stFmt{\rho}}}%
\newcommand{\ioSeqi}{\tilde{\stFmt{\rho'}}}%
\newcommand{\ioSeqii}{\tilde{\stFmt{\rho''}}}%
\newcommand{\ioSeqiii}{\tilde{\stFmt{\rho'''}}}%
\newcommand{\ioSeqiiii}{\tilde{\stFmt{\rho''''}}}%

\begin{lemma}
    \label{lem:st-mon-action-seq}%
    For any sequence of input/output labels $\ioSeq$, %
    $\transition{\stT}{\ioSeq}{\stTi}$ implies
    $\transition{\enc{\stT}}{\monLabYes{\ioSeq}}{\enc{\stTi}}$.
\end{lemma}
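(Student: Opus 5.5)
We argue by induction on the length of the sequence $\ioSeq$, after first settling the one-step case. The labels in $\ioSeq$ come from the queue-less semantics of \Cref{def:local-semantics-no-queues}: outputs $\stOutTag{\roleP}{\stLab}{\stS}$ and inputs $\stInTag{\roleP}{\stLab}{\stS}$. We read $\monLabYes{\ioSeq}$ as the corresponding sequence of accepting monitor labels: each output becomes $\monLabYes{\stSendTag{\roleP}{\stLab}{\stS}}$ and each input becomes $\monLabYes{\stEnqueueTag{\roleP}{\stLab}{\stS}}$. This is the same correspondence already used implicitly in \Cref{lem:type-send-mon-send} and in step~\ref{item:proof:type-mon-dequeuing:mon-yes} of the proof of \Cref{lem:type-mon-dequeuing}.

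\emph{Base case and single step.} If $\ioSeq$ is empty, then $\stTi = \stT$ and the thesis is trivial. For a single transition $\transition{\stT}{\stTagNoQueue}{\stTi}$, we induct on its derivation, i.e.\ on the number of \textsc{S-Rec} unfoldings, which is finite because recursion is guarded.
\begin{itemize}
\item If the last rule is \textsc{S-IntC}, then \textsc{STMon-IntC} gives the same step with the same continuation $\enc{\stT[k]}$.
\item If the last rule is \textsc{S-ExtC}, then \textsc{STMon-ExtC} does the same.
\item If the last rule is \textsc{S-Rec}, the premise is a transition of $\subst{\stT}{\stRecVar}{\stRec{\stRecVar}{\stT}}$ with the same label and target. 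The inner induction hypothesis gives the matching monitor step from $\enc{\subst{\stT}{\stRecVar}{\stRec{\stRecVar}{\stT}}}$, and \textsc{STMon-Rec} lifts it to $\enc{\stRec{\stRecVar}{\stT}}$.
\end{itemize}
In every case the target of the monitor step is exactly $\enc{\stTi}$. This matters, because \textsc{STMon-IntC} and \textsc{STMon-ExtC} put $\stT[k]$ itself inside $\enc{\cdot}$, with no pruning.

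\emph{Inductive step.} Split $\ioSeq$ as a first label followed by a shorter sequence, so that $\transition{\stT}{\stTagNoQueue}{\stTii}$ and then $\stTii$ performs the rest of $\ioSeq$ ending in $\stTi$. The single-step case gives $\transition{\enc{\stT}}{\monLabYes{\stTagNoQueue}}{\enc{\stTii}}$, and the outer induction hypothesis applied to $\stTii$ gives the remaining monitor steps ending in $\enc{\stTi}$. Concatenating the two yields the thesis.

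\emph{Main obstacle.} The only real subtlety is making sure no rule other than \textsc{STMon-IntC}, \textsc{STMon-ExtC}, or \textsc{STMon-Rec} is needed, and that \textsc{STMon-Bad} cannot pre-empt acceptance. The latter holds because \textsc{STMon-Bad} fires only when no accepting transition exists, and we exhibit one. The pruning rules \textsc{STMon-IntPfx} and \textsc{STMon-ExtPfx} are simply never used: the lemma is an existence statement, so it suffices to follow the type's own transitions.
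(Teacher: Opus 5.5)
Your proof is correct and follows the paper's argument. The paper also treats the one-step correspondence as immediate from \Cref{def:monitor-state-st} and then inducts on the sequence of labels. Your inner induction over \textsc{S-Rec}, matched by \textsc{STMon-Rec}, with the base cases \textsc{S-IntC}/\textsc{S-ExtC} matched by \textsc{STMon-IntC}/\textsc{STMon-ExtC}, simply spells out the single-step case that the paper leaves implicit.
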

\begin{proof}
    From \Cref{def:monitor-state-st} we have that for all $\stExtTag$,
    $\transition{\stT}{\stExtTag}{\stTi}$
    implies $\transition{\enc{\stT}}{\monLabYes{\stExtTag}}{\enc{\stTi}}$.
    We then prove the thesis by induction on $\ioSeq$.
\end{proof}

\begin{lemma}
    \label{lem:mon-st-send}%
    If $\transition{\enc{\stT}}{\monLabYes{\stSendTag{\roleP}{\stLab}{\stS}}}{\enc{\stTi}}$,
    then $\transition{\stT}{\stSendTag{\roleP}{\stLab}{\stS}}{\stTi}$.
\end{lemma}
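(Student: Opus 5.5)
The plan is to prove the statement by induction on the derivation of $\transition{\enc{\stT}}{\monLabYes{\stSendTag{\roleP}{\stLab}{\stS}}}{\enc{\stTi}}$, with a case analysis on the last rule of \Cref{def:monitor-state-st} used. Throughout, I read $\enc{\cdot}$ as a purely syntactic wrapper, so that $\enc{\stT}=\enc{\stTii}$ implies $\stT=\stTii$. I also identify the send label $\stSendTag{\roleP}{\stLab}{\stS}$ with the output label $\stOutTag{\roleP}{\stLab}{\stS}$ of \Cref{def:local-semantics-no-queues}, since \textsc{SQ-Send} relates the two one-to-one.

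The first step rules out most of the rules by their conclusion labels. \textsc{STMon-ExtC}, \textsc{STMon-IntPfx} and \textsc{STMon-ExtPfx} conclude with accepting labels of the enqueue form $\monLabYes{\stEnqueueTag{\roleQ}{\stLab}{\stS}}$. \textsc{STMon-Bad} concludes with a rejecting label $\monLabNo{\stExtTag}$. None of these can match an accepting send label. Only two cases remain.

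\emph{Case \textsc{STMon-IntC}.} Here $\stT = \stIntSum{i \in I}{\roleP[i]}{\stChoice{\stLab[i]}{\stS[i]} \stSeq \stT[i]}$, and for some $k\in I$ we have $\roleP=\roleP[k]$, $\stLab=\stLab[k]$, $\stS=\stS[k]$ and $\stTi=\stT[k]$. Rule \textsc{S-IntC} then gives $\transition{\stT}{\stOutTag{\roleP}{\stLab}{\stS}}{\stTi}$ directly.

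\emph{Case \textsc{STMon-Rec}.} Here $\stT=\stRec{\stRecVar}{\stT[0]}$, and the premise is a strictly smaller derivation of $\transition{\enc{\subst{\stT[0]}{\stRecVar}{\stRec{\stRecVar}{\stT[0]}}}}{\monLabYes{\stSendTag{\roleP}{\stLab}{\stS}}}{\enc{\stTi}}$. The induction hypothesis yields $\transition{\subst{\stT[0]}{\stRecVar}{\stRec{\stRecVar}{\stT[0]}}}{\stOutTag{\roleP}{\stLab}{\stS}}{\stTi}$, and rule \textsc{S-Rec} concludes.

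The induction is well-founded because derivations are finite; guardedness of recursion ensures each unfolding chain ends in a choice or $\stEnd$. The only delicate point I expect is one of presentation rather than substance. The target continuation $\stTi$ must be the one produced syntactically by \textsc{STMon-IntC}, and it must not be a pruned type produced by the Pfx rules. The label argument above excludes the Pfx rules, so this holds, provided the injectivity reading of $\enc{\cdot}$ is made explicit.
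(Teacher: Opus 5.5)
Your proposal is correct and takes the paper's approach: the paper's proof is simply ``immediate by \Cref{def:monitor-state-st}''. Your argument unpacks that one-liner. Only \textsc{STMon-IntC} and \textsc{STMon-Rec} can conclude an accepting send label, and an induction on the derivation then maps these rules to \textsc{S-IntC} and \textsc{S-Rec}.
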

\begin{proof}
    Immediate by \Cref{def:monitor-state-st}.
\end{proof}

\newcommand{\futureSet}{\mathcal{F}}%
\newcommand{\futureSeti}{\mathcal{F}'}%
\newcommand{\futureSetii}{\mathcal{F}''}%

\begin{lemma}
    \label{lem:mon-st-enqueue-approx}%
    If $\transition{\enc{\stT}}{\monLabYes{\stEnqueueTag{\roleP}{\stLab}{\stS}}}{\enc{\stTi}}$,
    then there is a non-empty set $\futureSet$ containing all pairs $(\ioSeq, \stTii)$
    where $\ioSeq$ is a (possibly empty) sequence of input/output labels not involving $\roleP$
    such that $\transition{\transition{\stT}{\ioSeq}{\!\!}}{\stInTag{\roleP}{\stLab}{\stS}}{\stTii}$. %
    Moreover, for all pairs $(\ioSeq, \stTii) \in \futureSet$, %
    we have:
    \begin{enumerate}
    \item\label{item:mon-st-enqueue-approx:reachable-ti} $\transition{\stTi}{\ioSeq}{\stTii}$, and
    \item\label{item:mon-st-enqueue-approx:reachable-mon-t} $\transition{\transition{\enc{\stT}}{\monLabYes{\ioSeq}}{\!\!}}{\monLabYes{\stEnqueueTag{\roleP}{\stLab}{\stS}}}{\enc{\stTii}}$.
    \end{enumerate}
\end{lemma}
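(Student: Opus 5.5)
I would prove the statement by induction on the height of the derivation of $\transition{\enc{\stT}}{\monLabYes{\stEnqueueTag{\roleP}{\stLab}{\stS}}}{\enc{\stTi}}$, doing a case analysis on the last rule applied. By \Cref{def:monitor-state-st}, only four rules can produce an accepting enqueue label: \textsc{STMon-ExtC}, \textsc{STMon-IntPfx}, \textsc{STMon-ExtPfx}, and \textsc{STMon-Rec}. Throughout, ``labels not involving $\roleP$'' means outputs to roles other than $\roleP$ and inputs from roles other than $\roleP$.

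\textbf{Case \textsc{STMon-Rec}.} The premise is the same transition from $\enc{\subst{\stT}{\stRecVar}{\stRec{\stRecVar}{\stT}}}$. Since $\stT$ and its one-step unfolding have identical transitions by \textsc{S-Rec}, the set $\futureSet$ and clauses \ref{item:mon-st-enqueue-approx:reachable-ti}--\ref{item:mon-st-enqueue-approx:reachable-mon-t} transfer directly from the induction hypothesis. For clause \ref{item:mon-st-enqueue-approx:reachable-mon-t} I also use \textsc{STMon-Rec} on the first step of $\ioSeq$, or on the enqueue step itself if $\ioSeq$ is empty.

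\textbf{Case \textsc{STMon-ExtC}.} Here $\stT$ is an external choice from $\roleP$ and $\stTi = \stT[k]$ with $\stLab[k] = \stLab$. Every $\ioSeq$ avoiding $\roleP$ must be empty, because the first action of $\stT$ involves $\roleP$. Hence $\futureSet = \{(\epsilon, \stT[k])\}$, which is non-empty and uniquely determined since labels are pairwise distinct. Both clauses are then immediate.

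\textbf{Cases \textsc{STMon-IntPfx} and \textsc{STMon-ExtPfx}.} These are the main cases, and I treat them uniformly. We have $\stTi = \stIntSum{k\in K}{\roleP[k]}{\stLab[k](\stS[k])\stSeq\stTi[k]}$ (or the external analogue), with $K$ maximal and $\enc{\stT[k]}\xrightarrow{\monLabYes{\stEnqueueTag{\roleP}{\stLab}{\stS}}}\enc{\stTi[k]}$ for each $k\in K$. The first label of any $\ioSeq$ is some top-level action $\alpha_i$, which does not involve $\roleP$ by the side condition of the rule.

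I decompose every candidate pair $(\ioSeq,\stTii)$ as $\ioSeq = \alpha_i\cdot\ioSeqi$, where $\stT[i]\xrightarrow{\ioSeqi}\xrightarrow{\stInTag{\roleP}{\stLab}{\stS}}\stTii$.
\begin{enumerate}
\item I apply the induction hypothesis to each $k\in K$, obtaining sets $\futureSet_k$ with the two properties.
\item I show that $i\in K$ for every candidate. If $i\notin K$, then $\stT[i]$ admits a future $\roleP$-input, and I must turn this into an accepting monitor transition $\enc{\stT[i]}\xrightarrow{\monLabYes{\stEnqueueTag{\roleP}{\stLab}{\stS}}}$. This contradicts the maximality of $K$.
\item I set $\futureSet = \bigcup_{k\in K}\{(\alpha_k\cdot\ioSeqi,\stTii) \mid (\ioSeqi,\stTii)\in\futureSet_k\}$. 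It is non-empty because $K\neq\emptyset$ and each $\futureSet_k$ is non-empty.
\item Clause \ref{item:mon-st-enqueue-approx:reachable-ti} holds because $\stTi\xrightarrow{\alpha_k}\stTi[k]\xrightarrow{\ioSeqi}\stTii$. Clause \ref{item:mon-st-enqueue-approx:reachable-mon-t} holds by \Cref{lem:st-mon-action-seq} on the prefix $\alpha_k$ (using \textsc{STMon-IntC} or \textsc{STMon-ExtC}), followed by clause \ref{item:mon-st-enqueue-approx:reachable-mon-t} of the induction hypothesis for $\stT[k]$.
\end{enumerate}

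The main obstacle is step 2, and in particular the converse direction it requires: if $\stT$ can reach an input $\stInTag{\roleP}{\stLab}{\stS}$ after a $\roleP$-free sequence $\ioSeq$, then $\enc{\stT}$ accepts $\stEnqueueTag{\roleP}{\stLab}{\stS}$. I would prove this as a separate auxiliary claim by induction on $|\ioSeq|$.
\begin{itemize}
\item If $\ioSeq$ is empty, the claim follows from \textsc{STMon-ExtC} (with \textsc{STMon-Rec} as needed).
\item Otherwise, the head action of $\ioSeq$ is not a $\roleP$-action, so the choice at the head of $\stT$ is not addressed to or from $\roleP$. Then a \textsc{Pfx} rule applies to that branch, taking $K$ to be the maximal set of branches that accept.
\end{itemize}
The subtle point is recursion and guardedness: the induction in the auxiliary claim must be on $|\ioSeq|$ together with the unfolding depth, not on the structure of $\stT$. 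Guardedness ensures each unfolding exposes a prefix, so this measure decreases. Finally, because the monitor judgement is defined inductively (least fixed point), the derivation height used in the main induction is finite and the outer induction is well-founded.
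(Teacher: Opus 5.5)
Your step~2 fails, and so does the auxiliary claim you propose for it. The inference ``the head action of $\tilde{\rho}$ is not a $\roleP$-action, so the choice at the head of $\stT$ is not addressed to or from $\roleP$'' is wrong for internal choices. The premise of \textsc{STMon-IntPfx} requires \emph{every} branch of the internal choice to be addressed to a role different from the sender of the enqueued message, not only the branch your path takes.

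For instance, take $\stT[0] = \stFmt{\sum}\{\roleQ \stIntC \stLabFmt{a} \stSeq \roleP \stExtC \stLab \stSeq \stEnd,\; \roleP \stIntC \stLabFmt{b} \stSeq \stEnd\}$ (payloads omitted). Then $\stT[0]$ performs the $\roleP$-free output $\roleQ \stIntC \stLabFmt{a}$ and then the input $\roleP \stExtC \stLab$. Yet $\enc{\stT[0]}$ has no transition labelled $\monLabYes{\stEnqueueTag{\roleP}{\stLab}{}}$. It is a non-recursive internal choice, so only \textsc{STMon-IntPfx} could produce such a label, and the branch $\roleP \stIntC \stLabFmt{b}$ blocks it.

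This cannot be patched. If $\mathcal{F}$ must contain \emph{every} $\roleP$-free path, as the statement's wording and the paper's proof both intend, then clause~1 of the statement is itself false. Let $\stT = \roleR \stExtC \{\stLabFmt{c} \stSeq \stT[0],\; \stLabFmt{d} \stSeq \roleP \stExtC \stLab \stSeq \stEnd\}$.
\begin{itemize}
\item By \textsc{STMon-ExtPfx}, with the maximal $K$ consisting only of branch $\stLabFmt{d}$ (branch $\stLabFmt{c}$ is excluded by the observation above), we get $\transition{\enc{\stT}}{\monLabYes{\stEnqueueTag{\roleP}{\stLab}{}}}{\enc{\roleR \stExtC \stLabFmt{d} \stSeq \stEnd}}$.
\item The sequence $\tilde{\rho} = (\roleR \stExtC \stLabFmt{c})\cdot(\roleQ \stIntC \stLabFmt{a})$ is $\roleP$-free and leads $\stT$ to the input $\roleP \stExtC \stLab$, which reaches $\stEnd$.
\item But $\roleR \stExtC \stLabFmt{d} \stSeq \stEnd$ cannot even perform $\roleR \stExtC \stLabFmt{c}$, so clause~1 fails for this pair.
\end{itemize}

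The paper's proof has the same hole, only tacitly. It defines $\mathcal{F}$ by comprehension as all such pairs. It then obtains each $\tilde{\rho}$ by induction on the derivation of the monitor transition, one label per application of \textsc{STMon-IntPfx}/\textsc{STMon-ExtPfx}, and proves clause~1 only for those paths.

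What is provable is the lemma with $\mathcal{F}$ taken to be this derivation-generated set, which is exactly the set your step~3 constructs. Drop step~2 and the auxiliary claim, and state the result that way. The rest of your argument is then correct: the \textsc{STMon-Rec} and \textsc{STMon-ExtC} cases, and steps 1, 3 and 4 of the prefix cases. It is a more explicit version of the paper's induction on the derivation.
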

\begin{proof}
    To prove the first part of the statement, we construct $\futureSet$ as:
    \[
        \futureSet \;=\;
        \left\{ (\ioSeq, \stTii) \;\middle|\;%
          \ioSeq\text{ does not involve $\roleP$ and }
          \transition{\transition{\stT}{\ioSeq}{\!\!}}{\stInTag{\roleP}{\stLab}{\stS}}{\stTii}
        \right\}
    \]
    \newcommand{\someDeriv}[1]{\mathcal{D}_{#1}}%
    Observe that $\futureSet$ above is not empty: %
    this is because, by hypothesis and \Cref{def:monitor-state-st}, %
    $\stT$ must contain some external choices containing message $\stLab$ and payload $\stS$,
    which can be fired by the axiom \tsc{STMon-ExtC} in a derivation based on rules \tsc{STMon-Rec}, \tsc{STMon-IntPfx}, \tsc{STMon-ExtPfx}.
    More precisely, this means that there exists and indexing set $K \neq \emptyset$
    such that each of the aforementioned external choices in $\stT$ has the form %
    $\stExtSum{i \in I_k}{\roleP}{\stChoice{\stLab[i]}{\stS[i]} \stSeq \stT[i]}$ for $k \in K$,
    such that $\forall k \in K: \exists i \in I_k: \stLab[i] = \stLab$ and $\stS[i] = \stS$. %
    We construct each $\ioSeq$ in the definition of $\futureSet$ above by induction on the derivation that proves the hypothesis $\transition{\enc{\stT}}{\monLabYes{\stEnqueueTag{\roleP}{\stLab}{\stS}}}{\enc{\stTi}}$,
    obtaining one transition in $\ioSeq$ for each application of rules \tsc{STMon-IntPfx} and \tsc{STMon-ExtPfx} (which, by \Cref{def:monitor-state-st}, does not involve $\roleP$);
    let us call that derivation $\someDeriv{\ioSeq}$.

    For the \emph{``moreover\ldots''} part of the statement, given $\ioSeq$ we obtain item~\ref{item:mon-st-enqueue-approx:reachable-ti}
    by induction the derivation $\someDeriv{\ioSeq}$ above, turning each application of rules \tsc{STMon-IntPfx} and \tsc{STMon-ExtPfx} %
    into an application of \tsc{S-IntC} and \tsc{S-ExtC}, %
    from \Cref{def:local-semantics-no-queues}, respectively.%

    Finally, item~\ref{item:mon-st-enqueue-approx:reachable-mon-t} is direct consequence of \Cref{lem:st-mon-action-seq}.
\end{proof}

\begin{lemma}
    \label{lem:mon-st-enqueue-approx-seq}
    If $\transition{\enc{\stT}}{\enqueueTag\monLabYes{\stQ}}{\enc{\stTi}}$, then
    there is a non-empty set $\futureSet$ containing all pairs $(\ioSeq, \stTii)$
    where $\ioSeq$ is a sequence of input/output labels including inputs for all
    messages in $\stQ$ maintaining sender ordering and ending with one of such
    inputs,
    such that $\transition{\stT}{\ioSeq}{\stTii}$. %
    Moreover, for all pairs $(\ioSeq, \stTii) \in \futureSet$, %
    we have:
    \begin{enumerate}
    \item\label{item:mon-st-enqueue-approx-seq:reachable-ti} $\transition{\stTi}{\ioSeq \setminus \enqueueTag\stQ}{\stTii}$, and
    \item\label{item:mon-st-enqueue-approx-seq:reachable-mon-t} $\transition{\enc{\stT}}{\monLabYes{\ioSeq}}{\enc{\stTii}}$.
    \end{enumerate}
\end{lemma}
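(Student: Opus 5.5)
We argue by induction on the length $n$ of the queue $\stQ = \stQConsWide{\stQMsgWide{\roleP[1]}{\stLab[1]}{\stS[1]}}{\stQConsWide{\cdots}{\stQMsgWide{\roleP[n]}{\stLab[n]}{\stS[n]}}}$. By \Cref{def:monitor-state-st-queue}, the hypothesis unfolds to a chain $\enc{\stT} = \enc{\stT[0]'} \xrightarrow{\monLabYes{\stEnqueueTag{\roleP[1]}{\stLab[1]}{\stS[1]}}} \enc{\stT[1]'} \cdots \xrightarrow{\monLabYes{\stEnqueueTag{\roleP[n]}{\stLab[n]}{\stS[n]}}} \enc{\stT[n]'} = \enc{\stTi}$. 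The case $n = 1$ is exactly \Cref{lem:mon-st-enqueue-approx}. There $\ioSeq$ is the sequence $\ioSeqi$ that avoids $\roleP[1]$, followed by the single input $\stInTag{\roleP[1]}{\stLab[1]}{\stS[1]}$. Removing that input yields $\ioSeqi$, and item~\ref{item:mon-st-enqueue-approx:reachable-ti} gives $\transition{\stTi}{\ioSeqi}{\stTii}$. Item~\ref{item:mon-st-enqueue-approx:reachable-mon-t}, together with \Cref{lem:st-mon-action-seq}, gives $\transition{\enc{\stT}}{\monLabYes{\ioSeq}}{\enc{\stTii}}$. (The base case $n=0$ is vacuous or trivial, depending on how the ``ending with an input'' clause is read; we start at $n=1$.)

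For the inductive step, write $\stQ = \stQCons{\stQii}{\stQMsg{\roleP[n]}{\stLab[n]}{\stS[n]}}$. By the induction hypothesis on $\stQii$, we obtain a non-empty set $\futureSeti$ of pairs $(\ioSeqi, \stTiii)$ with $\transition{\stT}{\ioSeqi}{\stTiii}$, $\transition{\stT[n-1]'}{\ioSeqi \setminus \enqueueTag\stQii}{\stTiii}$ and $\transition{\enc{\stT}}{\monLabYes{\ioSeqi}}{\enc{\stTiii}}$. Next we apply \Cref{lem:mon-st-enqueue-approx} to the last step $\enc{\stT[n-1]'} \to \enc{\stTi}$. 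This produces pairs $(\ioSeqii, \stTii)$ where $\ioSeqii$ avoids $\roleP[n]$, $\transition{\stT[n-1]'}{\ioSeqii \cdot \stInTag{\roleP[n]}{\stLab[n]}{\stS[n]}}{\stTii}$, and $\transition{\stTi}{\ioSeqii}{\stTii}$.

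We then define $\futureSet$ as the set of all pairs $(\ioSeq, \stTii)$ satisfying the required shape, and show that it is non-empty and satisfies the two items. For non-emptiness and item~\ref{item:mon-st-enqueue-approx-seq:reachable-ti}, the plan is to splice the two runs. The trace $\stT[n-1]'$ reaches $\stTii$ through $\ioSeqii$ and a final input, and it is a pruned ``residual'' of $\stT$ with the inputs of $\stQii$ removed. Concretely, we prove the following auxiliary claim by induction on the \textsc{STMon-IntPfx}/\textsc{STMon-ExtPfx} derivations, in the same way as in the proof of \Cref{lem:mon-st-enqueue-approx}: every trace of the residual can be lifted to a trace of $\stT$ that reinserts the inputs of $\stQii$ at the positions where they were pruned. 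This lifting yields a $\ioSeq$ that contains inputs for all of $\stQ$ in per-sender order and ends with the input from $\roleP[n]$. Moreover, $\ioSeq \setminus \enqueueTag\stQ = \ioSeqii$, so $\transition{\stTi}{\ioSeqii}{\stTii}$ gives item~\ref{item:mon-st-enqueue-approx-seq:reachable-ti}. Per-sender order is preserved because the inputs from $\roleP[n]$ in $\stQii$ precede the new one, and $\ioSeqii$ contains no $\roleP[n]$ inputs. Item~\ref{item:mon-st-enqueue-approx-seq:reachable-mon-t} then follows from $\transition{\stT}{\ioSeq}{\stTii}$ by \Cref{lem:st-mon-action-seq}.

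The main obstacle is the lifting claim. We must show that a run of the pruned continuation $\stT[n-1]'$ corresponds to a run of $\stT$ in which the previously accepted messages are consumed at the right moments. We also must show that the pairs this produces for $\stT$ are exactly ``all pairs'' of the required shape, so that every such pair also satisfies item~\ref{item:mon-st-enqueue-approx-seq:reachable-ti}, and not only the ones we constructed. For the latter, the plan is to proceed in the reverse direction: given any $\transition{\stT}{\ioSeq}{\stTii}$ of the required shape, split $\ioSeq$ at its last input belonging to $\stQii$. Then we use the maximality of $K$ in the Pfx rules, which keeps every branch that can accept the message, to show that the remainder is a trace of $\stT[n-1]'$. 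Finally, we apply the $n=1$ case to that remainder.
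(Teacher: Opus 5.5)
Your skeleton matches the paper's: induction on $\stQ$ by peeling off the last message, the induction hypothesis for the prefix, \Cref{lem:mon-st-enqueue-approx} for the last step, and \Cref{lem:st-mon-action-seq} for item~2. The splice you propose, however, has a genuine gap.

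\textbf{The splice fails when an earlier-queued message is consumed last.} You claim that lifting the residual run $\transition{\stT[n-1]'}{\tilde{\rho}''\cdot\roleP[n]\stExtC\stLab[n]}{\stTii}$ back to $\stT$ gives a run that contains \emph{all} queued inputs, ends with the input from $\roleP[n]$, and satisfies $\tilde{\rho}\setminus\enqueueTag\stQ=\tilde{\rho}''$. This fails whenever $\stT$ consumes an earlier-queued message from another sender \emph{after} the last-queued one. The point where that input was pruned then lies beyond the end of the residual run, so nothing is reinserted.

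Concretely, take $\stT=\roleR\stExtC\stLabFmt{c}\stSeq\roleQ\stIntC\stLabFmt{b}\stSeq\roleP\stExtC\stLabFmt{m}\stSeq\stEnd$ and $\stQ=\stQCons{\stQMsg{\roleP}{\stLabFmt{m}}{}}{\stQMsg{\roleR}{\stLabFmt{c}}{}}$.
\begin{itemize}
\item Via \textsc{STMon-ExtPfx} and \textsc{STMon-IntPfx}, $\stT[1]'=\roleR\stExtC\stLabFmt{c}\stSeq\roleQ\stIntC\stLabFmt{b}\stSeq\stEnd$, and then $\stTi=\roleQ\stIntC\stLabFmt{b}\stSeq\stEnd$.
\item \Cref{lem:mon-st-enqueue-approx} yields only $\tilde{\rho}''=\epsilon$ with $\stTii=\roleQ\stIntC\stLabFmt{b}\stSeq\stEnd$.
\item Your lifted run is just ${\roleR\stExtC\stLabFmt{c}}$, with no input of $\stLabFmt{m}$.
\item The only admissible pair is $({\roleR\stExtC\stLabFmt{c}}\cdot{\roleQ\stIntC\stLabFmt{b}}\cdot{\roleP\stExtC\stLabFmt{m}},\ \stEnd)$. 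It ends with the \emph{first} queued input, and its residual ${\roleQ\stIntC\stLabFmt{b}}$ differs from $\tilde{\rho}''$.
\end{itemize}
So your construction does not even establish non-emptiness.

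The repair has two parts. First, continue the residual run past $\stTii$ until every pruned input has been reinserted. This needs an argument that every path through a \textsc{Pfx}-pruned region reaches its pruning point. Second, get item~1 from the fact that \Cref{lem:mon-st-enqueue-approx} makes the runs from $\stT[n-1]'$ and from $\stTi$ meet in the same $\stTii$, so the continuation is available from both.

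The paper splices differently. It keeps the i.h.\ runs whose residual is a prefix of $\tilde{\rho}''$ and appends the remaining suffix. It relies on i.h.\ item~1, which says a run of $\stT$ and its residual end in the same state, so it needs no lifting lemma. But it too only produces runs ending with the last-queued input, and its set is empty on this example.

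\textbf{The reverse inclusion cannot be proved for the stated shape.} You plan to show that every pair of the required shape arises from your construction. This is false for the shape as the statement gives it. Your base case also silently switches to the stricter shape of \Cref{lem:mon-st-enqueue-approx}, whose prefix must avoid $\roleP$ altogether, outputs included.

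Counterexample: take $\stT=\roleQ\stIntC\{\stLabFmt{a}\stSeq\roleP\stExtC\stLabFmt{m}\stSeq\stEnd,\ \stLabFmt{b}\stSeq\roleP\stIntC\stLabFmt{x}\stSeq\roleP\stExtC\stLabFmt{m}\stSeq\stEnd\}$ and $\stQ=\stQMsg{\roleP}{\stLabFmt{m}}{}$.
\begin{itemize}
\item \textsc{STMon-IntPfx} drops branch $\stLabFmt{b}$, because $\enc{\roleP\stIntC\stLabFmt{x}\stSeq\roleP\stExtC\stLabFmt{m}\stSeq\stEnd}$ has no accepting input transition. Hence $\stTi=\roleQ\stIntC\stLabFmt{a}\stSeq\stEnd$.
\item Yet ${\roleQ\stIntC\stLabFmt{b}}\cdot{\roleP\stIntC\stLabFmt{x}}\cdot{\roleP\stExtC\stLabFmt{m}}$ has the stated shape, and $\stTi$ cannot perform ${\roleQ\stIntC\stLabFmt{b}}$.
\end{itemize}

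At best the inclusion can be proved relative to \Cref{lem:mon-st-enqueue-approx}, after strengthening the shape: for each sender, its queued messages must be its first interactions in $\tilde{\rho}$, in order. The argument is then a projection. Erasing the inputs of $\stQii$ from $\tilde{\rho}$ gives a run of $\stT[n-1]'$; this is where maximality of $K$ enters. \Cref{lem:mon-st-enqueue-approx} then applies to the prefix ending with the input from $\roleP[n]$, and the remainder is appended.

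Splitting $\tilde{\rho}$ at its last $\stQii$-input, as you propose, does not give this. That suffix starts from an intermediate state of $\stT$, not from $\stT[n-1]'$. The paper does not attempt this inclusion at all.
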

\begin{proof}
    We construct $\futureSet$ by induction on $\stQ$. The base case (i.e., when $\stQ$ is empty) is immediate.
    In the inductive case $\stQ = \stQCons{\stQi}{\stQMsg{\roleP}{\stLab}{\stS}}$, %
    we have $\transition{\enc{\stT}}{\enqueueTag\monLabYes{\stQi}}{\enc{\stTiii}}$
    for some $\stTiii$ such that $\transition{\enc{\stTiii}}{\monLabYes{\stEnqueueTag{\roleP}{\stLab}{\stS}}}{\enc{\stTi}}$.
    By the i.h.~we have a set $\futureSeti$ containing all pairs $(\ioSeqi, \stTiiii)$
    where $\ioSeqi$ is a sequence of input/output labels including inputs for all
    messages in $\stQi$
    such that $\transition{\stT}{\ioSeqi}{\stTiiii}$; %
    moreover, $\transition{\stTiii}{\ioSeqi \setminus \enqueueTag\stQi}{\stTiiii}$ and
    $\transition{\enc{\stT}}{\monLabYes{\ioSeqi}}{\enc{\stTiiii}}$. %
    We now construct $\futureSet$ from $\futureSeti$, by inspecting each pair
    $(\ioSeqi, \stTiiii) \in \futureSeti$ and determining whether we can select
    and extend the pair to include new actions ensuring that
    $\stEnqueueTag{\roleP}{\stLab}{\stS}$ is fired and a desired $\stTii$ is reached.%

    Recalling that by hypothesis we have $\transition{\enc{\stTiii}}{\monLabYes{\stEnqueueTag{\roleP}{\stLab}{\stS}}}{\enc{\stTi}}$,
    we apply \Cref{lem:mon-st-enqueue-approx} to obtain %
    a non-empty set $\futureSetii$ containing all pairs $(\ioSeqii, \stTiiiii)$
    where $\ioSeqii$ is a (possibly empty) sequence of input/output labels not involving $\roleP$
    such that $\transition{\transition{\stTiii}{\ioSeqii}{\!\!}}{\stInTag{\roleP}{\stLab}{\stS}}{\stTiiiii}$. %
    Moreover, for all pairs $(\ioSeqii, \stTiiiii) \in \futureSetii$, %
    we have $\transition{\stTi}{\ioSeqii}{\stTiiiii}$ and
    $\transition{\transition{\enc{\stTiii}}{\monLabYes{\ioSeqii}}{\!\!}}{\monLabYes{\stEnqueueTag{\roleP}{\stLab}{\stS}}}{\enc{\stTiiiii}}$.
    We construct the desired set $\futureSet$ as follows:
    \[
    \futureSet \;=\; \left\{ (\ioSeq, \stTii) \;\middle|\;
        \begin{array}{@{}l@{}}
            \exists (\ioSeqi, \_) \in \futureSeti:%
            \exists (\ioSeqii, \stTii) \in \futureSetii:%
            \\
            (\ioSeqi \setminus \enqueueTag\stQi) \;\text{ is a prefix of }\; \ioSeqii%
            \\%
            \ioSeq = \ioSeqi \cdot (\ioSeqii \setminus \ioSeqi)
        \end{array}
    \right\}
    \]
    i.e., from all the input/output sequences $\ioSeqi$ in $\futureSeti$, we
    only select those $\ioSeqi$ that from $\stTiii$ can lead to performing the action
    $\stInTag{\roleP}{\stLab}{\stS}$ %
    (since $\ioSeqi \setminus \enqueueTag\stQi$ is a prefix of some $\ioSeqii$
    in $\futureSetii$, which contains all sequences from $\stTiii$ that
    eventually perform $\stInTag{\roleP}{\stLab}{\stS}$ as first input from
    $\roleP$); %
    then, we concatenate $\ioSeqi$ and $\ioSeqii$ (avoiding the duplication of
    the common input/output actions), %
    thus obtaining a $\ioSeq$ that performs $\stInTag{\roleP}{\stLab}{\stS}$ and
    such that $\transition{\stT}{\ioSeq}{\stTii}$.

    For the \emph{``moreover\ldots''} part of the statement, we obtain item~\ref{item:mon-st-enqueue-approx-seq:reachable-ti}
    from the construction of $\futureSet$ above, %
    while item~\ref{item:mon-st-enqueue-approx-seq:reachable-mon-t} is direct consequence of \Cref{lem:st-mon-action-seq}.
\end{proof}

\begin{lemma}
    \label{lem:live-hd-net-enqueue-mon-enqueue}
    Assume $\netN$ is an output-live, half-duplex network such that
    \(
    \transition{\netN}{\netCommLab{\roleP}{\roleQ}{\stLab}{\stS}}{\netNi}
    \), containing $\netPair{\roleQ}{\stqPair{\stT}{\stQ}}$.
    Assume $\mnetInstr{\netN}$ is defined. %
    Then,\thesisNewline
    \(
      \transition{\mstPair{\stqPair{\stT}{\stQ}}{\enc{\stqPair{\stT}{\stQ}}}}
                 {\monLabYes{\stEnqueueTag{\roleP}{\stLab}{\stS}}}
                 {\mstPair{\stqPair{\stT}{\stQ}}{\enc{\stqPair{\stT}{\stQCons{\stQ}{\stQMsg{\roleP}{\stLab}{\stS}}}}}}
    \).
\end{lemma}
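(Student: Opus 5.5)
The plan is to derive the monitor's acceptance of $\stEnqueueTag{\roleP}{\stLab}{\stS}$ from the fact that, in the unmonitored network, the enqueued message will eventually be consumed by $\roleQ$. Output-liveness supplies a concrete future trace of $\stT$ that reaches an input $\stInTag{\roleP}{\stLab}{\stS}$. That trace then has to be replayed in the monitor as a derivation built from \textsc{STMon-IntPfx}/\textsc{STMon-ExtPfx}, closed by \textsc{STMon-ExtC}. Rule \textsc{M-Good} of \Cref{def:st-monitor-semantics} then lifts the monitor step to the pair. On the session-type side the step is by \textsc{SQ-Recv}; the queue in the resulting pair is read accordingly.

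\textbf{Step 1: extract a future trace.} By \textsc{Net-Comm}, $\netNi$ contains $\netPair{\roleQ}{\stqPair{\stT}{\stQCons{\stQ}{\stQMsg{\roleP}{\stLab}{\stS}}}}$. By \Cref{lem:live-persistent}, $\netNi$ is output-live. I take any fair path from $\netNi$; one exists, since the network can always be scheduled fairly. On that path $\roleQ$ eventually dequeues the new message. Projecting the path onto $\roleQ$ gives a sequence $\ioSeq$ of input/output labels with $\transition{\stT}{\ioSeq}{}$, ending with $\stInTag{\roleP}{\stLab}{\stS}$. This is the first input from $\roleP$ after all messages from $\roleP$ already in $\stQ$ have been consumed, because queue order per sender is preserved by $\equiv$. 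The inputs in $\ioSeq$ include all messages of $\stQ$ in per-sender order, plus possibly later-arriving messages from other roles.

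\textbf{Step 2: use half-duplex to exclude interference with $\roleP$.} I claim that $\ioSeq$, before its final input, contains no output $\stSendTag{\roleP}{\stLabi}{\stSi}$. Suppose $\roleQ$ sent to $\roleP$ while $\stQMsg{\roleP}{\stLab}{\stS}$ still sat in $\roleQ$'s queue. That message would then be in $\roleP$'s queue simultaneously, contradicting \Cref{def:half-duplex-net}. Such a send would be a monitored output; a sent message that $\roleP$ already consumed is not an issue, because $\roleP$ can only consume after it is enqueued, and that requires both queues nonempty at some point. Likewise, the only inputs from $\roleP$ in $\ioSeq$ before the last one are exactly those for $\stQ$'s $\roleP$-messages. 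So, after removing the $\stQ$-inputs, the residual actions do not involve $\roleP$.

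\textbf{Step 3: build the monitor derivation.} By definedness of $\mnetInstr{\netN}$, $\transition{\enc{\stT}}{\monLabYes{\enqueueTag\stQ}}{\enc{\stTi}}$ with $\enc{\stTi} = \enc{\stqPair{\stT}{\stQ}}$. Applying \Cref{lem:mon-st-enqueue-approx-seq} (using \Cref{lemma:mon-trans-congruent} to reorder $\stQ$ to match $\ioSeq$), I get $\transition{\stTi}{\ioSeq \setminus \enqueueTag\stQ}{}$, and this residual trace ends in $\stInTag{\roleP}{\stLab}{\stS}$ with no prior action involving $\roleP$. By induction on the length of this residual prefix, unfolding with \textsc{STMon-Rec} where needed, each internal or external choice step is wrapped in an application of \textsc{STMon-IntPfx} or \textsc{STMon-ExtPfx}. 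The side condition ``$\roleP$ is not an immediate participant'' holds by Step 2, and the followed branch lies in the maximal $K$, so $K \neq \emptyset$. The base case is \textsc{STMon-ExtC}. This yields $\transition{\enc{\stTi}}{\monLabYes{\stEnqueueTag{\roleP}{\stLab}{\stS}}}{\enc{\stTii}}$, and by \Cref{def:monitor-state-st-queue} $\enc{\stTii} = \enc{\stqPair{\stT}{\stQCons{\stQ}{\stQMsg{\roleP}{\stLab}{\stS}}}}$.

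\textbf{Main obstacle.} The hardest step is Step 3's bridge. The trace $\ioSeq$ comes from the original $\stT$, whereas the monitor is already at the pruned $\stTi$ after absorbing $\stQ$. I rely on \Cref{lem:mon-st-enqueue-approx-seq}, item~\ref{item:mon-st-enqueue-approx-seq:reachable-ti}, to transport the trace to $\stTi$. That lemma requires $\ioSeq$ to end with a $\stQ$-input, so I would first truncate $\ioSeq$ after the last $\stQ$-input and then append the remaining suffix. I must also check that this suffix still avoids $\roleP$. The half-duplex argument in Step 2, which has to be made precise via fair-path ordering, is the other delicate point.
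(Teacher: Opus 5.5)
Your route is a constructive counterpart to the paper's proof. The paper assumes $\enc{\stTi}$, with $\enc{\stTi} = \enc{\stqPair{\stT}{\stQ}}$, cannot accept $\stEnqueueTag{\roleP}{\stLab}{\stS}$, and splits on why. Either no reachable external choice from $\roleP$ can consume the message, contradicting output-liveness. Or every such choice is guarded by an internal choice towards $\roleP$, contradicting half-duplex. You instead extract a witness trace from output-liveness and replay it through the prefix rules. The gap is Step~3's claim that the side condition of \textsc{STMon-IntPfx} ``holds by Step~2''. That rule requires the incoming sender (here $\roleP$) to differ from the recipient of \emph{every} branch of the internal choice, not only the branch your trace follows. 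Step~2 only shows that the outputs actually \emph{performed} along your trace are not addressed to $\roleP$. A traversed internal choice with an untaken branch towards $\roleP$ would block \textsc{STMon-IntPfx}, and your induction would fail at that step. This is exactly the situation the paper's second case targets.

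The repair needs half-duplex over all reachable configurations, not just your chosen path. By \textsc{Net-Comm}, with the recipient enqueuing via \textsc{SQ-Recv}, $\roleQ$ can fire \emph{any} branch of its current internal choice. So if some internal choice met by $\roleQ$ before it consumes $\stQMsg{\roleP}{\stLab}{\stS}$ had a branch to $\roleP$, firing it would reach a configuration, still reachable from $\netN$, where $\roleP$'s queue holds a message from $\roleQ$ while $\roleQ$'s queue still holds one from $\roleP$. That contradicts \Cref{def:half-duplex-net}.

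You then have to transfer this to the pruned states visited from $\stTi$. Each internal choice met along the residual trace from $\stTi$ must have its recipients among those of the corresponding internal choice in the actual evolution of $\stT$; this holds because the prefix rules only delete branches and absorbed inputs. Item~1 of \Cref{lem:mon-st-enqueue-approx-seq} only asserts that the residual trace exists, not this structural correspondence, so it needs its own argument (the paper's case analysis also relies on it informally).

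A smaller point concerns Step~1. The path truncated where the new message is consumed need not contain inputs for all of $\stQ$, since messages from senders other than $\roleP$ may be consumed later. You must first extend the path, again by output-liveness, before applying \Cref{lem:mon-st-enqueue-approx-seq}. After that, the $\roleP$-input may lie in the middle of the residual trace, and you truncate there. Your ``Main obstacle'' paragraph nearly says this, but Step~1's claim is false as stated.
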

\begin{proof}
    By hypothesis, we have:
    \begin{align}
        &\label{eq:proof:live-hd-net-enqueue-mon-enqueue:msg-enqueued}%
        \netNi \;\text{ contains }\; \netPair{\roleQ}{\stqPair{\stT}{\stQCons{\stQ}{\stQMsg{\roleP}{\stLab}{\stS}}}}
        &\text{(by hypothesis and inversion of $\transition{\netN}{\netCommLab{\roleP}{\roleQ}{\stLab}{\stS}}{\netNi}$)}
    \end{align}
    Also by hypothesis, $\enc{\stqPair{\stT}{\stQ}}$ is defined, i.e.,
    \begin{align}
        &\label{eq:proof:live-hd-net-enqueue-mon-enqueue:mon-exists}%
        \exists \stTi = \enc{\stqPair{\stT}{\stQ}}: \transition{\enc{\stT}}{\enqueueTag\monLabYes{\stQ}}{\enc{\stTi}}
        &\text{(by hypothesis and \Cref{def:monitor-state-st-queue})}
    \end{align}
    Notice that, by \Cref{def:local-semantics} (rule \textsc{SQ-Recv}),
    we have \(\transition{\stqPair{\stT}{\stQ}}{\stEnqueueTag{\roleP}{\stLab}{\stS}}{\stqPair{\stT}{\stQCons{\stQ}{\stQMsg{\roleP}{\stLab}{\stS}}}}\).
    Therefore, by rule \textsc{MNet-$\stExtTag$-Good} of \Cref{def:monitored-network},
    we can prove the statement by proving:%
    \begin{align}
    &\label{eq:proof:live-hd-net-enqueue-mon-enqueue:mon-enq}%
        \transition{\enc{\stTi}}
                   {\monLabYes{\stEnqueueTag{\roleP}{\stLab}{\stS}}}
                   {\enc{\stqPair{\stT}{\stQCons{\stQ}{\stQMsg{\roleP}{\stLab}{\stS}}}}}
    \end{align}
    By contradiction, assume that \eqref{eq:proof:live-hd-net-enqueue-mon-enqueue:mon-enq} does \emph{not} hold,
    and thus, $\forall \monMi: \transition{\enc{\stTi}}{\monLabYes{\stEnqueueTag{\roleP}{\stLab}{\stS}}}{\monMi}$
    implies $\monMi \neq \enc{\stqPair{\stT}{\stQCons{\stQ}{\stQMsg{\roleP}{\stLab}{\stS}}}}$.
    By \Cref{def:monitor-state-st-queue} the quantification ``$\forall \monMi\ldots$'' above must be vacuous,
    i.e.,
    \begin{align}
        &\label{eq:proof:live-hd-net-enqueue-mon-enqueue:contra:mon-not-enq}%
            \not\exists \monMi:%
            \transition{\enc{\stTi}}
                       {\monLabYes{\stEnqueueTag{\roleP}{\stLab}{\stS}}}
                       {\monMi}
        &\text{(by \Cref{def:monitor-state-st-queue})}
    \end{align}

    By \Cref{def:monitor-state-st}, \cref{eq:proof:live-hd-net-enqueue-mon-enqueue:contra:mon-not-enq}
    has two possible implications:
    \begin{enumerate}
    \item%
        All the external choices from $\roleP$ that are reached first in $\stTi$
        cannot syntactically consume the message $\stQMsg{\roleP}{\stLab}{\stS}$
        from the input queue $\stQ$. Observe that, by
        \eqref{eq:proof:live-hd-net-enqueue-mon-enqueue:mon-exists} and
        \Cref{lem:mon-st-enqueue-approx-seq}, $\stT$ consumes its input messages
        $\stQ$ by reducing into a session type $\stTii$ reachable from $\stTi$
        --- and therefore, such $\stTii$ also does not syntactically contain any
        external choice from $\roleP$ that can consume the message
        $\stQMsg{\roleP}{\stLab}{\stS}$ which is in $\roleQ$'s input queue in
        $\netNi$, by
        \eqref{eq:proof:live-hd-net-enqueue-mon-enqueue:msg-enqueued}. This
        means that $\netNi$ is \emph{not} output-live (by \Cref{def:live-net}), and
        therefore $\netN$ is not output-live either (by the contrapositive of
        \Cref{lem:live-persistent}) --- contradiction.
    \item%
        There are external choices in $\stTi$ that could consume the message
        $\stQMsg{\roleP}{\stLab}{\stS}$, but they are prefixed by an internal
        choice towards $\roleP$. Observe that, by
        \eqref{eq:proof:live-hd-net-enqueue-mon-enqueue:mon-exists} and
        \Cref{lem:mon-st-enqueue-approx-seq}, $\stT$ consumes its input messages
        $\stQ$ by reducing into a session type $\stTii$ reachable from $\stTi$
        by firing the same inputs/outputs of $\stTi$ (plus the inputs from
        $\stQ$) in the same order. Therefore, either before or after reducing to
        $\stTii$, $\stT$ must emit an output to $\roleP$ before it can consume
        the message $\stQMsg{\roleP}{\stLab}{\stS}$ in $\roleQ$'s input queue.
        But then, by rule~\textsc{Net-Comm} in
        \Cref{def:network-semantics}, a message from $\roleQ$ to $\roleP$ will
        land in the input queue of $\roleP$, while the message
        $\stQMsg{\roleP}{\stLab}{\stS}$ is still in $\roleQ$'s input queue.
        Therefore, by \Cref{def:half-duplex-net} we conclude that $\netN$ is not
        half-duplex --- contradiction.
    \end{enumerate}

    We have thus proven that if we negate
    \eqref{eq:proof:live-hd-net-enqueue-mon-enqueue:mon-enq} we contradict at
    least one hypothesis in the statement. Therefore,
    \eqref{eq:proof:live-hd-net-enqueue-mon-enqueue:mon-enq} must hold, and this
    (by rule \textsc{MNet-$\stExtTag$-Good} of \Cref{def:monitored-network})
    leads to the thesis.
\end{proof}

\lemMonitorCorrectness*
\begin{proof}
    Take any output-live and half-duplex network $\netN$, and consider the following relation:
    \begin{equation}
        \label{eq:proof-net-mon-bisim:rel}
        {\relR} \;=\; %
        \left\{ \left(\netNi, \mnetInstr{\netNi}\right) \;\middle|\; \transitionS{\netN}{\netTau}{\netNi} \right\}
    \end{equation}
    We now prove that $\relR$ is an internal bisimulation. To this end, we
    inspect each pair $(\netNi, \mnetNii) \in {\relR}$, and we show that the pair
    satisfies clauses~\ref{item:internal-bisim-lr}
    and~\ref{item:internal-bisim-rl} of \Cref{def:internal-bisim}. %
    By \Cref{eq:proof-net-mon-bisim:rel}, the pair has the form
    $(\netNi, \mnetInstr{\netNi})$, and we have the following cases.
    \begin{itemize}
        \item $\exists \roleP,\roleQ,\stLab,\stS,\netNii:%
            \transition{\netNi}{\netDequeueLab{\roleP}{\roleQ}{\stLab}{\stS}}{\netNii}$. %
            We must prove that clause~\ref{item:internal-bisim-lr} of \Cref{def:internal-bisim} is satisfied.\\
            By inversion of rule \textsc{Net-Par} in \Cref{def:network-semantics}
            we have %
            $\transition{\netPair{\roleP}{\stqT}}
                {\netDequeueLab{\roleP}{\roleQ}{\stLab}{\stS}}
                {\netPair{\roleP}{\stqTi}}$;
            moreover, by \textsc{Net-Par} in \Cref{def:network-semantics},
            $\netPair{\roleP}{\stqTi} \in \netNii$.
            Therefore, by \Cref{lem:type-mon-dequeuing} we have %
            $\transition{\mnetInstr{\netPair{\roleP}{\stqT}}}
                {\mnetDequeueLab{\roleP}{\roleQ}{\stLab}{\stS}}
                {\mnetInstr{\netPair{\roleP}{\stqTi}}}$.
            Hence, by rule \textsc{MNet-Par} in \Cref{def:monitored-network-semantics}
            and \Cref{def:consistent-mon-instr},
            we have $\transition{\mnetInstr{\netNi}}{\mnetDequeueLab{\roleP}{\roleQ}{\stLab}{\stS}}{\mnetInstr{\netNii}}$.
            Since by \eqref{eq:proof-net-mon-bisim:rel} we have $(\netNii, \mnetInstr{\netNii}) \in \mathop{\relR}$,
            we have satisfied clause~\ref{item:internal-bisim-lr} of \Cref{def:internal-bisim}.
        \item $\exists \roleP,\roleQ,\stLab,\stS,\mnetNiii:%
            \transition{\mnetInstr{\netNi}}{\mnetDequeueLab{\roleP}{\roleQ}{\stLab}{\stS}}{\mnetNiii}$. %
            We must prove that clause~\ref{item:internal-bisim-rl} of \Cref{def:internal-bisim} is satisfied.\\
            By inversion of rule \textsc{MNet-Par} in \Cref{def:monitored-network-semantics}
            and \Cref{lem:mon-type-dequeuing} %
            we have \thesisNewline
            $\transition{\mnetInstr{\netPair{\roleP}{\stqT}}}
                {\mnetDequeueLab{\roleP}{\roleQ}{\stLab}{\stS}}
                {\mnetInstr{\netPair{\roleP}{\stqTi}}}$
            such that $\transition{\netPair{\roleP}{\stqT}}{\netDequeueLab{\roleP}{\roleQ}{\stLab}{\stS}}{\netPair{\roleP}{\stqTi}}$.
            Therefore, by rule \textsc{MNet-Par} in \Cref{def:monitored-network-semantics},
            rule \textsc{Net-Par} in \Cref{def:network-semantics},
            and \Cref{def:consistent-mon-instr}, $\exists \netNii$ such that
            $\transition{\netNi}{\netDequeueLab{\roleP}{\roleQ}{\stLab}{\stS}}{\netNii} $
            and $\mnetInstr{\netPair{\roleP}{\stqTi}} \in \mnetNiii = \mnetInstr{\netNii}$.
            Since by \eqref{eq:proof-net-mon-bisim:rel} we have $(\netNii, \mnetInstr{\netNii}) \in \mathop{\relR}$,
            we have satisfied clause~\ref{item:internal-bisim-rl} of \Cref{def:internal-bisim}.
        \item $\exists \roleP,\roleQ,\stLab,\stS,\mnetNii:%
            \transition{\netNi}{\netCommLab{\roleP}{\roleQ}{\stLab}{\stS}}{\netNii}$. %
            We must prove that clause~\ref{item:internal-bisim-lr} of \Cref{def:internal-bisim} is satisfied.\\
            By inversion of the transition, we know that $\netNi$ contains
            $\netPair{\roleP}{\stqPair{\stT[\roleP]}{\stQ[\roleP]}}$
            such that $\transition{\stqPair{\stT[\roleP]}{\stQ[\roleP]}}{\stSendTag{\roleQ}{\stLab}{\stS}}{\stqPair{\stTi[\roleP]}{\stQ[\roleP]}}$
            (for some $\stTi[\roleP]$),
            with $\netPair{\roleP}{\stqPair{\stTi[\roleP]}{\stQ[\roleP]}}$ contained in $\netNii$.
            Correspondingly, by \Cref{def:monitored-network}, $\mnetInstr{\netNi}$ contains
            $\mnetPair{\roleP}{\mstPair{\stqPair{\stT[\roleP]}{\stQ[\roleP]}}{\enc{\stqPair{\stT[\roleP]}{\stQ[\roleP]}}}}$
            such that:
            \begin{align}
              &\label{eq:proof-net-mon-bisim:comm-send-tqm}%
              \transition{\mstPair{\stqPair{\stT[\roleP]}{\stQ[\roleP]}}{\enc{\stqPair{\stT[\roleP]}{\stQ[\roleP]}}}}
                         {\monLabYes{\stSendTag{\roleQ}{\stLab}{\stS}}}
                         {\mstPair{\stqPair{\stTi[\roleP]}{\stQ[\roleP]}}{\enc{\stqPair{\stTi[\roleP]}{\stQ[\roleP]}}}}
              &\text{(by \Cref{lem:type-send-mon-send})}
              \\
              &\label{eq:proof-net-mon-bisim:comm-send-tqm-net}%
              \transition{\mnetPair{\roleP}{\mstPair{\stqPair{\stT[\roleP]}{\stQ[\roleP]}}{\enc{\stqPair{\stT[\roleP]}{\stQ[\roleP]}}}}}
                         {\mnetIOLabYes{\roleP}{\stSendTag{\roleQ}{\stLab}{\stS}}}
                         {\mnetPair{\roleP}{\mstPair{\stqPair{\stTi[\roleP]}{\stQ[\roleP]}}{\enc{\stqPair{\stTi[\roleP]}{\stQ[\roleP]}}}}}
              &\text{(by \eqref{eq:proof-net-mon-bisim:comm-send-tqm}, \textsc{MNet-$\stExtTag$-Good} in \Cref{def:monitored-network})}
            \end{align}
            Moreover, again by inversion of the transition $\transition{\netNi}{\netCommLab{\roleP}{\roleQ}{\stLab}{\stS}}{\netNii}$,
            we know that $\netNi$ contains
            $\netPair{\roleQ}{\stqPair{\stT[\roleQ]}{\stQ[\roleQ]}}$
            such that $\transition{\stqPair{\stT[\roleQ]}{\stQ[\roleQ]}}
                                  {\stEnqueueTag{\roleP}{\stLab}{\stS}}
                                  {\stqPair{\stT[\roleQ]}{\stQCons{\stQ[\roleQ]}{\stQMsg{\roleP}{\stLab}{\stS}}}}$,
            with $\netPair{\roleP}{\stqPair{\stT[\roleQ]}{\stQCons{\stQ[\roleQ]}{\stQMsg{\roleP}{\stLab}{\stS}}}}$ contained in $\netNii$.
            Correspondingly, by \Cref{def:monitored-network}, $\mnetInstr{\netNi}$ contains
            $\mnetPair{\roleQ}{\mstPair{\stqPair{\stT[\roleQ]}{\stQ[\roleQ]}}{\enc{\stqPair{\stT[\roleQ]}{\stQ[\roleQ]}}}}$
            such that:
            \begin{align}
              &\label{eq:proof-net-mon-bisim:comm-recv-tqm-net}%
              \transition{\mnetPair{\roleQ}{\mstPair{\stqPair{\stT[\roleQ]}{\stQ[\roleQ]}}{\enc{\stqPair{\stT[\roleQ]}{\stQ[\roleQ]}}}}}
                         {\mnetIOLabYes{\roleQ}{\stEnqueueTag{\roleP}{\stLab}{\stS}}}
                         {\mstPair{\stqPair{\stT[\roleQ]}{\stQCons{\stQ[\roleQ]}{\stQMsg{\roleP}{\stLab}{\stS}}}}{\enc{\stqPair{\stT[\roleQ]}{\stQCons{\stQ[\roleQ]}{\stQMsg{\roleP}{\stLab}{\stS}}}}}}
              &\text{(by \Cref{lem:live-hd-net-enqueue-mon-enqueue})}
            \end{align}
            Hence, we obtain:
            \begin{align}
                &\label{eq:proof-net-mon-bisim:comm-mon-ni-nii}%
                \transition{\mnetInstr{\netNi}}{\mnetCommLab{\roleP}{\roleQ}{\stLab}{\stS}}{\mnetInstr{\netNii}}
                &\text{(by \eqref{eq:proof-net-mon-bisim:comm-send-tqm-net}, \eqref{eq:proof-net-mon-bisim:comm-recv-tqm-net},
                        and \textsc{MNet-Comm} in \Cref{def:monitored-network-semantics})}
                \\
                &\nonumber%
                (\netNii, \mnetInstr{\netNii}) \in \mathop{\relR}
                &\text{(by \eqref{eq:proof-net-mon-bisim:comm-mon-ni-nii} and \eqref{eq:proof-net-mon-bisim:rel})}
            \end{align}
            as required by clause~\ref{item:internal-bisim-lr} of \Cref{def:internal-bisim}.

        \item $\exists \roleP,\roleQ,\stLab,\stS,\mnetNi'i:%
            \transition{\mnetInstr{\netNi}}{\mnetCommLab{\roleP}{\roleQ}{\stLab}{\stS}}{\mnetNii}$. %
            We must prove that clause~\ref{item:internal-bisim-rl} of \Cref{def:internal-bisim} is satisfied.\\
            We infer the shape of $\netNi$ (which involves a communication of a
            message with label $\stLab$ and payload type $\stS$ from an internal
            choice in $\roleP$ towards $\roleQ$, similarly to the previous case)
            and then show that there is an unmonitored network %
            $\netNii$ such that $\transition{\netNi}{\netCommLab{\roleP}{\roleQ}{\stLab}{\stS}}{\netNii}$
            (i.e., the unmonitored $\netNi$ can perform the same communication allowed by $\mnetInstr{\netNi}$)
            and $\mnetNiii = \mnetInstr{\netNii}$. %
            Therefore, since by \eqref{eq:proof-net-mon-bisim:rel} we have $(\netNii, \mnetInstr{\netNii}) \in \mathop{\relR}$,
            we have satisfied clause~\ref{item:internal-bisim-rl} of \Cref{def:internal-bisim}.
    \end{itemize}

    We thus proven that, for any output-live network $\netN$, there exists a relation
    $\relR$ (\Cref{eq:proof-net-mon-bisim:rel}) which is an internal
    bisimulation (by \Cref{def:internal-bisim}); moreover, we have $(\netN,
    \mnetInstr{\netN}) \in {\relR}$ (by \Cref{eq:proof-net-mon-bisim:rel}). %
    Therefore, we conclude that, for any output-live network $\netN$, %
    we have $\netN \tbisim \mnetInstr{\netN}$ (by \Cref{def:internal-bisim}).
    \qed
\end{proof}
\section{Evaluation: Description of Examples}
\label{appendix:ExampleDesc}

\subsection{Book review}
\textit{Participants: Client, Info, Review, Request, Details}\\
Models a network composed of several microservices, based on the example
from Istio~\cite{IstioBookinfo}. A client queries info about a book through
the \textit{Info} service, which then again queries the other services to
obtain the info.
\begin{lstlisting}[language=scala]
val lst_BookClient =
    Rec("Start",
        IntChoice(Map((a_BookInfo, l_Request) ->
        ExtChoice(a_BookInfo, Map(l_Response ->
        Var("Start"))))))

val lst_BookInfo =
    Rec("Start",
        ExtChoice(a_BookClient, Map(l_Request ->
        IntChoice(Map((a_BookReview, l_ReviewRequest) ->
        IntChoice(Map((a_BookDetails, l_DetailRequest) ->
        ExtChoice(a_BookReview, Map(l_ReviewResponse ->
        ExtChoice(a_BookDetails, Map(l_DetailResponse ->
        IntChoice(Map((a_BookClient, l_Response) ->
        Var("Start"))))))))))))))

val lst_BookReview =
    Rec("Start",
        ExtChoice(a_BookInfo, Map(l_ReviewRequest ->
        IntChoice(Map((a_BookRatings, l_RatingsRequest) ->
        ExtChoice(a_BookRatings, Map(l_RatingsResponse ->
        IntChoice(Map((a_BookInfo, l_ReviewResponse) ->
        Var("Start"))))))))))

val lst_BookDetails =
    Rec("Start",
        ExtChoice(a_BookInfo, Map(l_DetailRequest ->
        IntChoice(Map((a_BookInfo, l_DetailResponse) ->
        Var("Start"))))))

val lst_BookRatings =
    Rec("Start",
        ExtChoice(a_BookReview, Map(l_RatingsRequest ->
        IntChoice(Map((a_BookReview, l_RatingsResponse) ->
        Var("Start"))))))
\end{lstlisting}

\subsection{Store}
\textit{Participants: Client, HTTP Gateway, Inventory, Payment,
		Shipping, External bank, External mailserver}\\
A network of several microservices that implement a store, as well as
external participants. The client generates, confirms or cancels an order
by sending a request to the \textit{HTTP Gateway}, which in turn manages
the order by sending requests to the additional services.
\begin{lstlisting}[language=scala]
val lst_Client =
    IntChoice(Map((a_HTTP, l_PlaceOrder) ->
    ExtChoice(a_HTTP, Map(l_SKUDetails ->
    Rec("decideOnOrder",
        IntChoice(Map(
            (a_HTTP, l_ConfirmOrder) ->
                ExtChoice(a_HTTP, Map(
                    l_ValidCC ->
                        End(),
                    l_InvalidCC ->
                        Var("decideOnOrder"))),
            (a_HTTP, l_CancelOrder) ->
                ExtChoice(a_HTTP, Map(l_Cancel ->
                End())))))))))

val lst_HTTP =
    ExtChoice(a_Client, Map(l_PlaceOrder ->
    IntChoice(Map((a_Inventory, l_GetSKUDetails) ->
    ExtChoice(a_Inventory, Map(l_SKUDetails ->
    IntChoice(Map((a_Client, l_SKUDetails) ->
    Rec("decideOnOrder",
        ExtChoice(a_Client, Map(
            l_ConfirmOrder ->
                IntChoice(Map((a_Payment, l_VerifyCC) ->
                ExtChoice(a_Payment, Map(
                    l_ValidCC ->
                        IntChoice(Map((a_Payment, l_MakeCharge) ->
                        IntChoice(Map((a_Inventory, l_MakeOrder) ->
                        IntChoice(Map((a_Client, l_ValidCC) ->
                        End())))))),
                    l_InvalidCC ->
                        IntChoice(Map((a_Client, l_InvalidCC) ->
                        Var("decideOnOrder"))))))),
            l_CancelOrder ->
                IntChoice(Map((a_Client, l_Cancel) ->
                End())))))))))))))

val lst_Inventory =
    ExtChoice(a_HTTP, Map(l_GetSKUDetails ->
    IntChoice(Map((a_HTTP, l_SKUDetails) ->
    ExtChoice(a_HTTP, Map(l_MakeOrder ->
    IntChoice(Map((a_Shipping, l_ShipOrder) ->
    End()))))))))

val lst_Payment =
    Rec("checkCC",
        ExtChoice(a_HTTP, Map(l_VerifyCC ->
        IntChoice(Map(
            (a_HTTP, l_ValidCC) ->
                ExtChoice(a_HTTP, Map(l_MakeCharge ->
                IntChoice(Map((a_Bank, l_MakeCharge) ->
                End())))),
            (a_HTTP, l_InvalidCC) ->
                Var("checkCC"))))))

val lst_Shipping =
    ExtChoice(a_Inventory, Map(l_ShipOrder ->
    IntChoice(Map((a_Email, l_SendEmail) ->
    ExtChoice(a_Email, Map(l_Success ->
    End()))))))

val lst_Bank =
    ExtChoice(a_Payment, Map(l_MakeCharge ->
    End()))

val lst_Email =
    ExtChoice(a_Shipping, Map(l_SendEmail ->
    IntChoice(Map((a_Shipping, l_Success) ->
    End()))))
\end{lstlisting}

\subsection{VPN}
\textit{Participants: Authenticator, Client A, Client B,
		Client C}\\
A VPN network that allows communication between clients. All clients
must successfully authenticate with the \textit{Authenticator} server before
being allowed to communicate with each other.
\begin{lstlisting}[language=scala]
val lst_Auth =
    Rec("acceptOrDeny",
        ExtChoice(a_ClientA, Map(l_Auth ->
        ExtChoice(a_ClientB, Map(l_Auth ->
        ExtChoice(a_ClientC, Map(l_Auth ->
        IntChoice(Map(
            (a_ClientA, l_Accept) ->
                IntChoice(Map((a_ClientB, l_Accept) ->
                IntChoice(Map((a_ClientC, l_Accept) ->
                ExtChoice(a_ClientA, Map(l_Terminate ->
                ExtChoice(a_ClientB, Map(l_Terminate ->
                ExtChoice(a_ClientC, Map(l_Terminate ->
                End())))))))))),
            (a_ClientA, l_Deny) ->
                IntChoice(Map((a_ClientB, l_Deny) ->
                IntChoice(Map((a_ClientC, l_Deny) ->
                Var("acceptOrDeny"))))))))))))))

val lst_ClientA =
    Rec("tryAuth",
        IntChoice(Map((a_Auth, l_Auth) ->
        ExtChoice(a_Auth, Map(
            l_Deny ->
                Var("tryAuth"),
            l_Accept ->
                Rec("requestB",
                    IntChoice(Map(
                        (a_ClientB, l_ClientRequest) ->
                            ExtChoice(a_ClientB, Map(l_ClientResponse ->
                            Var("requestB"))),
                        (a_ClientB, l_Terminate) ->
                            Rec("requestC",
                                IntChoice(Map(
                                    (a_ClientC, l_ClientRequest) ->
                                        ExtChoice(a_ClientC, Map(l_ClientResponse ->
                                        Var("requestC"))),
                                    (a_ClientC, l_Terminate) ->
                                        IntChoice(Map((a_Auth, l_Terminate) ->
                                        End())))))))))))))

val lst_ClientB =
    Rec("tryAuth",
        IntChoice(Map((a_Auth, l_Auth) ->
        ExtChoice(a_Auth, Map(
            l_Deny ->
                Var("tryAuth"),
            l_Accept ->
                Rec("responseA",
                    ExtChoice(a_ClientA, Map(
                        l_ClientRequest ->
                            IntChoice(Map((a_ClientA, l_ClientResponse) ->
                            Var("responseA"))),
                        l_Terminate ->
                            IntChoice(Map((a_Auth, l_Terminate) ->
                            End()))))))))))

val lst_ClientC =
    Rec("tryAuth",
        IntChoice(Map((a_Auth, l_Auth) ->
        ExtChoice(a_Auth, Map(
            l_Deny ->
                Var("tryAuth"),
            l_Accept ->
                Rec("responseA",
                    ExtChoice(a_ClientA, Map(
                        l_ClientRequest ->
                            IntChoice(Map((a_ClientA, l_ClientResponse) ->
                            Var("responseA"))),
                        l_Terminate ->
                            IntChoice(Map((a_Auth, l_Terminate) ->
                            End()))))))))))
\end{lstlisting}

\subsection{Stateful Firewall}
	\textit{Participants: Internal Client, External Client}\\
A simple stateful firewall. It first accepts only outgoing
communication from the \textit{Internal Client} to the
\textit{External Client}. Then, once the outgoing transmissions are forwarded,
the firewall allows the external client to transmit back to the internal client.
\begin{lstlisting}[language=scala]
val lst_ClientInt =
    Rec("Outgoing",
        IntChoice(Map(
            (a_ClientExt, l_Request) ->
                ExtChoice(a_ClientExt, Map(l_Response ->
                Var("Outgoing"))),
            (a_ClientExt, l_Terminate) ->
                Rec("Incoming",
                    ExtChoice(a_ClientExt, Map(
                        l_Request ->
                            IntChoice(Map((a_ClientExt, l_Response) ->
                            Var("Incoming"))),
                        l_Terminate ->
                            End()))))))

val lst_ClientExt =
    Rec("Outgoing",
        ExtChoice(a_ClientInt, Map(
            l_Request ->
                IntChoice(Map((a_ClientInt, l_Response) ->
                Var("Outgoing"))),
            l_Terminate ->
                Rec("Incoming",
                    IntChoice(Map(
                        (a_ClientInt, l_Request) ->
                            ExtChoice(a_ClientInt, Map(l_Response ->
                            Var("Incoming"))),
                        (a_ClientInt, l_Terminate) ->
                            End()))))))
\end{lstlisting}

\subsection{DNS}
\textit{Participants: Client, Local DNS, Root DNS, TLD DNS,
		Authoritative DNS}\\
A DNS resolver, where the \textit{Local DNS} performs an iterative
lookup after the \textit{Client} requests an address.
\begin{lstlisting}[language=scala]
val lst_DNSClient =
    IntChoice(Map((a_LocalDNS, l_RequestIP) ->
    ExtChoice(a_LocalDNS, Map(l_ResponseIP ->
    End()))))

val lst_LocalDNS =
    ExtChoice(a_DNSClient, Map(l_RequestIP ->
    IntChoice(Map((a_RootDNS, l_RequestRoot) ->
    ExtChoice(a_RootDNS, Map(l_ResponseRoot ->
    IntChoice(Map((a_TLDDNS, l_RequestTLD) ->
    ExtChoice(a_TLDDNS, Map(l_ResponseTLD ->
    IntChoice(Map((a_AuthDNS, l_RequestAuth) ->
    ExtChoice(a_AuthDNS, Map(l_ResponseAuth ->
    IntChoice(Map((a_DNSClient, l_ResponseIP) ->
    End()))))))))))))))))

val lst_RootDNS =
    ExtChoice(a_LocalDNS, Map(l_RequestRoot ->
    IntChoice(Map((a_LocalDNS, l_ResponseRoot) ->
    End()))))

val lst_TLDDNS =
    ExtChoice(a_LocalDNS, Map(l_RequestTLD ->
    IntChoice(Map((a_LocalDNS, l_ResponseTLD) ->
    End()))))

val lst_AuthDNS =
    ExtChoice(a_LocalDNS, Map(l_RequestAuth ->
    IntChoice(Map((a_LocalDNS, l_ResponseAuth) ->
    End()))))
\end{lstlisting}

\subsection{Two-buyer auction}
	\textit{Participants: Auction, Buyer A, Buyer B}\\
A two-buyer auction protocol where two buyers, \textit{Buyer A} and
\textit{Buyer B}, send repeated bids to the \textit{Auction} server, which
eventually decides on a winner. The winner and the auction then
exchange the payment and the item.
\begin{lstlisting}[language=scala]
val lst_Auction =
    Rec("GetBids",
        ExtChoice(a_BuyerA, Map(l_Bid ->
        ExtChoice(a_BuyerB, Map(l_Bid ->
        IntChoice(Map(
            (a_BuyerA, l_Resend) ->
                IntChoice(Map((a_BuyerB, l_Resend) ->
                Var("GetBids"))),
            (a_BuyerA, l_Winner) ->
                ExtChoice(a_BuyerA, Map(l_Pay ->
                IntChoice(Map((a_BuyerA, l_SendItem) ->
                End())))),
            (a_BuyerB, l_Winner) ->
                ExtChoice(a_BuyerB, Map(l_Pay ->
                IntChoice(Map((a_BuyerB, l_SendItem) ->
                End())))))))))))

val lst_BuyerA =
    Rec("SendBids",
        IntChoice(Map((a_Auction, l_Bid) ->
        ExtChoice(a_Auction, Map(
            l_Resend ->
                Var("SendBids"),
            l_Winner ->
                IntChoice(Map((a_Auction, l_Pay) ->
                ExtChoice(a_Auction, Map(l_SendItem ->
                End())))))))))

val lst_BuyerB =
    Rec("SendBids",
        IntChoice(Map((a_Auction, l_Bid) ->
        ExtChoice(a_Auction, Map(
            l_Resend ->
                Var("SendBids"),
            l_Winner ->
                IntChoice(Map((a_Auction, l_Pay) ->
                ExtChoice(a_Auction, Map(l_SendItem ->
                End())))))))))
\end{lstlisting}

\subsection{CDN}
\textit{Participants User, Local DNS, Internal Server,
    External Server}\\
A content distribution network where a \textit{Local DNS} communicates
with an \textit{Internal Server} and \textit{External Server} to obtain the
address which the \textit{User} is requesting.
\begin{lstlisting}[language=scala]
val lst_User =
    IntChoice(Map((a_LocalDNS, l_DNSQuery) ->
    ExtChoice(a_LocalDNS, Map(l_IPAddr ->
    End()))))

val lst_LocalDNS =
    ExtChoice(a_User, Map(l_DNSQuery ->
    IntChoice(Map((a_IntServer, l_AuthQuery) ->
    ExtChoice(a_IntServer, Map(l_Hostname ->
    IntChoice(Map((a_ExtServer, l_AuthQuery) ->
    ExtChoice(a_ExtServer, Map(l_IPAddr ->
    IntChoice(Map((a_User, l_IPAddr) ->
    End()))))))))))))

val lst_IntAuthDNS =
    ExtChoice(a_LocalDNS, Map(l_AuthQuery ->
    IntChoice(Map((a_LocalDNS, l_Hostname) ->
    End()))))

val lst_ExtAuthDNS =
    ExtChoice(a_LocalDNS, Map(l_AuthQuery ->
    IntChoice(Map((a_LocalDNS, l_IPAddr) ->
    End()))))
\end{lstlisting}

\subsection{SIP}
\textit{Participants: Client A, Client B, Proxy}\\
An implementation of the Session Initiation Protocol where two clients,
\textit{Client A} and \textit{Client B}, communicate over a
\textit{Proxy} server.
\begin{lstlisting}[language=scala]
val lst_SIPClientA =
    IntChoice(Map((a_SIPProxy, l_INVITE) ->
    ExtChoice(a_SIPProxy, Map(
        l_Trying ->
            ExtChoice(a_SIPProxy, Map(
                l_Ringing ->
                    ExtChoice(a_SIPProxy, Map(
                        l_OK ->
                            End(),
                        l_Error ->
                            End())),
                l_Error ->
                    End())),
        l_Error ->
            End()))))

val lst_SIPClientB =
    ExtChoice(a_SIPProxy, Map(
        l_INVITE -> IntChoice(Map(
            (a_SIPProxy, l_Ringing) ->
                IntChoice(Map(
                    (a_SIPProxy, l_OK) ->
                        End(),
                    (a_SIPProxy, l_Error) ->
                        End())),
            (a_SIPProxy, l_Error) ->
                End()))))

val lst_SIPProxy =
    ExtChoice(a_SIPClientA, Map(l_INVITE ->
    IntChoice(Map(
        (a_SIPClientB, l_INVITE) ->
            IntChoice(Map((a_SIPClientA, l_Trying) ->
            ExtChoice(a_SIPClientB, Map(
                l_Ringing ->
                    IntChoice(Map((a_SIPClientA, l_Ringing) ->
                    ExtChoice(a_SIPClientB, Map(
                        l_OK ->
                            IntChoice(Map((a_SIPClientA, l_OK) ->
                            End())),
                        l_Error ->
                            IntChoice(Map((a_SIPClientA, l_Error) ->
                            End())))))),
                l_Error ->
                    IntChoice(Map((a_SIPClientA, l_Error) ->
                    End())))))),
        (a_SIPClientA, l_Error) ->
            End()))))
\end{lstlisting}

\subsection{POP3}
\textit{Participants: Client, Server}\\
An implementation of a POP3 server, where the \textit{Client} logs in and sends a number of different queries to the \textit{Server} before logging out.
\begin{lstlisting}[language=scala]
val lst_POPClient =
    Rec("TryLogin",
        IntChoice(Map((a_POPServer, l_Username) ->
        ExtChoice(a_POPServer, Map(
            l_ERR ->
                Var("TryLogin"),
            l_OK  ->
                IntChoice(Map((a_POPServer, l_Password) ->
                ExtChoice(a_POPServer, Map(
                    l_ERR ->
                        Var("TryLogin"),
                    l_OK ->
                        Rec("ClientQuery",
                            IntChoice(Map(
                                (a_POPServer, l_ListMsgs) ->
                                    ExtChoice(a_POPServer, Map(l_Messages ->
                                    Var("ClientQuery"))),
                                (a_POPServer, l_Retransmit) ->
                                    ExtChoice(a_POPServer, Map(
                                        l_OK ->
                                            Var("ClientQuery"),
                                        l_ERR ->
                                            Var("ClientQuery"))),
                                (a_POPServer, l_Delete) ->
                                    ExtChoice(a_POPServer, Map(
                                        l_OK ->
                                            Var("ClientQuery"),
                                        l_ERR ->
                                            Var("ClientQuery"))),
                                (a_POPServer, l_NoOp) ->
                                    ExtChoice(a_POPServer, Map(l_OK ->
                                    Var("ClientQuery"))),
                                (a_POPServer, l_Quit) ->
                                    End()))))))))))))

val lst_POPServer =
    Rec("TryLogin",
        ExtChoice(a_POPClient, Map(l_Username ->
        IntChoice(Map(
            (a_POPClient, l_ERR) ->
                Var("TryLogin"),
            (a_POPClient, l_OK) ->
                ExtChoice(a_POPClient, Map(l_Password ->
                IntChoice(Map(
                    (a_POPClient, l_ERR) ->
                        Var("TryLogin"),
                    (a_POPClient, l_OK)  ->
                        Rec("ClientQuery",
                            ExtChoice(a_POPClient, Map(
                                l_ListMsgs ->
                                    IntChoice(Map((a_POPClient, l_Messages) ->
                                    Var("ClientQuery"))),
                                l_Retransmit ->
                                    IntChoice(Map(
                                        (a_POPClient, l_OK) ->
                                            Var("ClientQuery"),
                                        (a_POPClient, l_ERR) ->
                                            Var("ClientQuery"))),
                                l_Delete ->
                                    IntChoice(Map(
                                        (a_POPClient, l_OK) ->
                                            Var("ClientQuery"),
                                        (a_POPClient, l_ERR) ->
                                            Var("ClientQuery"))),
                                l_NoOp ->
                                    IntChoice(Map((a_POPClient, l_OK) ->
                                    Var("ClientQuery"))),
                                l_Quit ->
                                    End()))))))))))))
\end{lstlisting}

\subsection{Turn-Based Game}
\textit{Participants: Coordinator, Game Server, Player A,
    Player B}\\
An example of a simple turn-based game. Two players search for a game
by subscribing to a \textit{Coordinator}, and are then joined in a game
managed by the \textit{Game Server}. Each player takes their turn until
the \textit{Game Server} decides on a winner.
\begin{lstlisting}[language=scala]
val lst_Coordinator =
    ExtChoice(a_Player1, Map(l_LookForGame ->
    ExtChoice(a_Player2, Map(l_LookForGame ->
    IntChoice(Map((a_Player1, l_GameFound) ->
    IntChoice(Map((a_Player2, l_GameFound) ->
    IntChoice(Map((a_GameServer, l_GameFound) ->
    End()))))))))))

val lst_GameServer =
    ExtChoice(a_Coordinator, Map(l_GameFound ->
    Rec("GameRound",
        IntChoice(Map(
            (a_Player1, l_TurnStart) ->
                ExtChoice(a_Player1, Map(l_TakeAction ->
                IntChoice(Map((a_Player2, l_TurnStart) ->
                ExtChoice(a_Player2, Map(l_TakeAction ->
                Var("GameRound"))))))),
            (a_Player1, l_Victory) ->
                IntChoice(Map((a_Player2, l_Defeat) ->
                End())),
            (a_Player1, l_Defeat) ->
                IntChoice(Map((a_Player2, l_Victory) ->
                End())))))))

val lst_Player =
    IntChoice(Map((a_Coordinator, l_LookForGame) ->
    ExtChoice(a_Coordinator, Map(l_GameFound ->
    Rec("GameRound",
        ExtChoice(a_GameServer, Map(
            l_TurnStart ->
                IntChoice(Map((a_GameServer, l_TakeAction) ->
                Var("GameRound"))),
            l_Victory ->
                End(),
            l_Defeat ->
                End())))))))
\end{lstlisting} %
\section{Evaluation: Bar plots}
\label{appendix:BarPlots}

\noindent
\begin{figure}[H]
    \centering
    \begin{subfigure}{.45\textwidth}
        \includegraphics[width=\textwidth]{plots/bookreview\_barplot.pdf}
        \caption{Packets observed for the BookInfo example over different network configurations.}
        \label{fig:appendix-bookreview-barplot}
    \end{subfigure}%
    \hspace{5pt}%
    \begin{subfigure}{.45\textwidth}
        \includegraphics[width=\textwidth]{plots/vpn\_barplot.pdf}
        \caption{Packets observed for the VPN example over different network configurations.}
        \label{fig:appendix-vpn-barplot}
    \end{subfigure}
\end{figure}%
\begin{figure}[H]
    \begin{subfigure}{.45\textwidth}
        \includegraphics[width=\textwidth]{plots/store\_barplot.pdf}
        \caption{Packets observed for the store example over different network configurations.}
        \label{fig:store-barplot}
    \end{subfigure}%
    \hspace{5pt}%
    \begin{subfigure}{.45\textwidth}
        \includegraphics[width=\textwidth]{plots/firewall\_barplot.pdf}
        \caption{Packets observed for the firewall example over different network configurations.}
        \label{fig:firewall-barplot}
    \end{subfigure}
\end{figure}%
\begin{figure}[H]
    \begin{subfigure}{.45\textwidth}
        \includegraphics[width=\textwidth]{plots/dns\_barplot.pdf}
        \caption{Packets observed for the DNS example over different network configurations.}
        \label{fig:dns-barplot}
    \end{subfigure}%
    \hspace{5pt}
    \begin{subfigure}{.45\textwidth}
        \includegraphics[width=\textwidth]{plots/auction\_barplot.pdf}
        \caption{Packets observed for the auction example over different network configurations.}
        \label{fig:auction-barplot}
    \end{subfigure}
\end{figure}%
\begin{figure}[H]
    \begin{subfigure}{.45\textwidth}
        \includegraphics[width=\textwidth]{plots/cdn\_barplot.pdf}
        \caption{Packets observed for the CDN example over different network configurations.}
        \label{fig:cdn-barplot}
    \end{subfigure}%
    \hspace{5pt}%
    \begin{subfigure}{.45\textwidth}
        \includegraphics[width=\textwidth]{plots/sip\_barplot.pdf}
        \caption{Packets observed for the SIP example over different network configurations.}
        \label{fig:sip-barplot}
    \end{subfigure}
\end{figure}%
\begin{figure}[H]
    \begin{subfigure}{.45\textwidth}
        \includegraphics[width=\textwidth]{plots/pop3\_barplot.pdf}
        \caption{Packets observed for the POP3 example over different network configurations.}
        \label{fig:pop3-barplot}
    \end{subfigure}%
    \hspace{5pt}
    \begin{subfigure}{.45\textwidth}
        \includegraphics[width=\textwidth]{plots/game\_barplot.pdf}
        \caption{Packets observed for the multiplayer game example over different network configurations.}
        \label{fig:game-barplot}
    \end{subfigure}
\end{figure}
 
\end{document}